\newtheorem{theorem}{Theorem}[section]
\newtheorem{lemma}[theorem]{Lemma}
\newtheorem{proposition}[theorem]{Proposition}
\newtheorem{assumption}[theorem]{Assumption}
\theoremstyle{remark}
\newtheorem{remark}[theorem]{Remark}
\theoremstyle{definition}
\newcommand{\nada}[1]{}
\newcommand{\abs}[1]{\left| #1 \right|} 
\newcommand{\PP}{{\mathbb P}}
\newcommand{\RR}{{\mathbb R}}
\newcommand{\QQ}{{\mathbb Q}}
\newcommand{\calv}{\mathcal{V}}
\newcommand{\MM}{{\mathcal M}}
\newcommand{\YY}{{\mathcal Y}}
\newcommand{\reals}{\mathbb{R}}
\newcommand{\prob}{\mathbb{P}}
\newcommand{\A}{{\mathcal A}}
\newcommand{\E}{\mathcal{E}}
\newcommand{\F}{\mathcal{F}}
\newcommand{\rdfn}{\, =: \,}
\newcommand{\cbra}[1]{\left\{#1\right\}}
\newcommand{\dfn}{:=}
\newcommand{\wh}[1]{\widehat{#1}}
\newcommand{\wt}[1]{\widetilde{#1}}
\newcommand{\ceinto}[1]{\textrm{CE}^{#1}_{0}}
\newcommand{\cexao}[1]{\textrm{CE}^{#1}_{0-}}
\newcommand{\ceintio}[1]{\textrm{CE}^{#1}_{0,\iota}}
\newcommand{\cexaio}[1]{\textrm{CE}^{#1}_{0-,\iota}}
\newcommand{\recip}[1]{\frac{1}{#1}}
\newcommand{\expv}[3]{\mathbb{E}^{#1}_{#2}\left[#3\right]}
\newcommand{\condexpv}[4]{\mathbb{E}^{#1}_{#2}\left[#3\big| #4\right]}
\newcommand{\expvs}[1]{\mathbb{E}\left[#1\right]}
\newcommand{\condexpvs}[2]{\mathbb{E}\left[#1\big| #2\right]}
\newcommand{\condprobs}[2]{\mathbb{P}\left[#1\big| #2\right]}
\newcommand{\qcondprobs}[2]{\mathbb{Q}_0\left[#1\big| #2\right]}
\newcommand{\bra}[1]{\left[#1\right]}
\newcommand{\ol}[1]{\overline{#1}}
\newcommand{\such}{\ | \ }
\newcommand{\rto}{\lambda}
\newcommand{\rtoi}{\lambda_I}
\newcommand{\rtou}{\lambda_U}
\newcommand{\cons}{\beta}
\newcommand{\consa}{\kappa}
\newcommand{\yval}{\YY}
\newcommand{\yvals}{y}
\newcommand{\consb}{\lambda}
\newcommand{\pix}{P_{X|G}}
\newcommand{\pnsn}{p_{\QQ_0}}
\newcommand{\whpi}{\widehat{\Pi}}
\newcommand{\whyval}{\widehat{\yval}}
\begin{document}

\title[Strategic Informed Trading and the Value of Private Information]{Strategic Informed Trading and the Value of Private Information}

\author{Michail Anthropelos}
\address{Department of Banking and Financial Management\\
University of Piraeus\\
Piraeus, Greece}
\email{anthropel@unipi.gr}

\author{Scott Robertson}
\address{Questrom School of Business\\
Boston University\\
Boston, MA 02215}
\email{scottrob@bu.edu}

\begin{abstract}
We consider a market of risky financial assets whose participants are an informed trader, a representative uninformed trader, and noisy liquidity providers. We prove the existence of a market-clearing equilibrium when the insider internalizes her power to impact prices, but the uninformed trader takes prices as given. Compared to the associated competitive economy, in equilibrium the insider strategically reveals a noisier  signal, and prices are less reactive to publicly available information. Additionally, and in direct contrast to the related literature, in equilibrium the insider's  indirect utility monotonically increases in the signal precision. Therefore, the insider is motivated not only to obtain, but also to refine, her signal. Lastly, we show that compared to the competitive economy, the insider's internalization of price impact is utility improving for the uninformed trader, but somewhat  surprisingly may be utility decreasing for the insider herself. This utility reduction occurs provided the insider is sufficiently risk averse compared to the uninformed trader, and provided the signal is of sufficiently low quality.
\end{abstract}

\date{\today}
\maketitle

\section*{Introduction}\label{sec:intro}

It is well-documented that large financial institutions possess the power to affect markets (for example,  see \cite{KoiYog19} and \cite{RosYoo23}). Compared to other traders, large investors' orders impact both transaction prices and volumes, and these investors are aware of this impact (\cite{RosWer15}). Additionally, financial institutions invest considerable capital to acquire information regarding traded asset payoffs (\cite{KacPag19}). In sum, it is natural to assume large investors are both aware of their impact on prices and in possession of private information. 

However, it is not a secret that large investors are informed traders (\cite{subrahmanyam1991risk}). Indeed, other market participants, even if they lack private information themselves, both know and account for the large investors' (``insiders'') superior information. This means that in equilibrium, one expects a partial transmission of private information.  Indeed, this was shown concretely in both competitive economy and price impact models (starting with  \cite{grossman1980impossibility} and \cite{kyle1985continuous} respectively), as therein the insider's private signal about the risky assets' terminal payoff is partially revealed to all market participants through equilibrium prices, a mechanism that creates a market (or public) signal. Therefore, it is also reasonable to assume uninformed traders know the insider's private information  will be, at least partially, revealed through market prices. 

With the above as motivation, we study how the insider's awareness of price impact affects equilibrium price formulation and information transmission, as well as the indirect utility of both the insider and uninformed trader. We work in a single period normal-CARA model, seeking a linear price-impact equilibrium where a risk averse insider trades a bundle of risky assets with both a mass of uninformed risk-averse traders and liquidity providers, or noise traders. We differentiate the insider in two ways: first (clearly) in that she possesses private information; and second in that she accounts for the impact her trading activity has on prices. We then use this model to predict to what extent the insider's signal is revealed to the market and how the uninformed traders correspondingly adjust their demands. We also compare equilibrium quantities of this model to those in the corresponding competitive economy model, to assess whether, and by how much, the informational content within equilibrium prices is reduced and what this implies in terms of traders' demands.

We are particularly focused on the traders' indirect utility, defined as the certainty equivalent from optimal trading. Our interest stems from a striking result in \cite{grossman1980impossibility} (similar results are obtained in both \cite{verrecchia1982} and the recent \cite{NEZAFAT2023105664}) which shows that insider indirect utility need not be increasing in the quality of her signal.  More precisely, if the signal is $G = X + Z_I$ where $X$ is the asset payoff and $Z_I$ is a noise term with precision $p_I$, then the map ``$p_I \to \textrm{Insider Indirect Utility}$'' takes one of the two forms shown in Figure \ref{F:figure1}.
\begin{figure}[h!]
\begin{center}
\includegraphics[height=4cm,width=7cm]{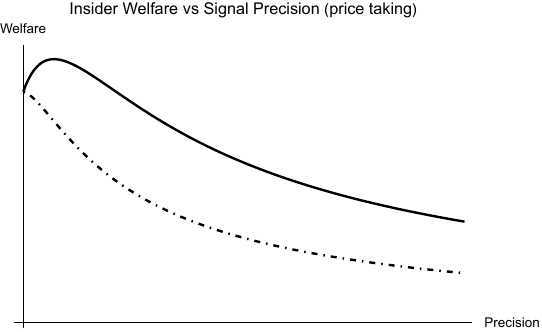}
\caption{ }\label{F:figure1}
\end{center}
\end{figure}
This is problematic, because either it is never advantageous to obtain the signal (dot-dashed line) or it is only beneficial to obtain the signal for precisions below a certain threshold (solid line).  As presumably there is a cost (in terms of money and/or effort) to produce the signal, the former case suggests price-taking equilibria with private signals are somewhat artificial, and the latter case requires the insider to estimate model parameters (in particular those related to other traders, which are difficult to estimate) to determine if it is worthwhile to refine her signal. 

Motivated by Figure \ref{F:figure1} we ask if our model produces a similar result, or if internalization of price impact ensures that the insider's indirect utility is increasing in her signal precision. If this is indeed the case, it would suggest the improvement is driven by the insider's differentiation in both access to better information and the internalization of price impact.

\subsection*{Methodology and Main Contributions}

We adjust the single period CARA-normal setting of \cite{grossman1980impossibility}\footnote{The competitive economy assumption in  \cite{grossman1980impossibility} is ubiquitous in the heterogeneous information asset pricing literature. It holds in the seminal papers of \cite{grossman1976efficiency}, \cite{grossman1980impossibility} and \cite{hellwig1980aggregation}, and with the exception of the literature strand started by \cite{kyle1985continuous, back1992insider}, \cite{rochet_vila} and \cite{subrahmanyam1991risk}, price taking has remained the dominant assumption.} by allowing the insider to internalize her price impact, while maintaining the presence of price-taking uninformed traders and liquidity providers\footnote{Assuming the uninformed traders are price takers is realistic, as they represent a mass of small risk-averse traders who rationally optimize their positions, but do not have the power to move prices.}. Following the related literature (e.g. \cite{kyle1985continuous, rochet_vila}), we study a linear impact equilibrium where the insider perceives the market price to be an affine function of the sum of her and the noise traders' demand\footnote{Linear price impact is common in the literature (see among others \cite{Kyl89}, \cite{Vay99}, \cite{Viv11}), and the affine structure of price impact is also seen (though not perceived a-priori by the insider) in the price-taking, or competitive, equilibrium (see \eqref{E:strat_map_simple} below).}, with the affine coefficients endogenously determined through market clearing.

In Theorem \ref{thm:pi_simple_n} we  establish existence of a linear price impact (PI) equilibrium, where the coefficients are governed by the unique positive root to the cubic equation in \eqref{E:cubic_alt_n}\footnote{This stands in contrast to \cite{subrahmanyam1991risk} where equilibria are governed by solutions to a quintic equation. The difference arises because, consistent with \cite{grossman1980impossibility} (see also \cite{rochet_vila}) we assume the insider, by viewing both her demand and the resultant price, is able to deduce the noise trader's demand.}.  To facilitate  comparison, we summarize the price-taking (PT) equilibrium results of \cite{grossman1980impossibility} in Proposition \ref{P:pt_equilibrium_simple_n}. Lastly, in both the PI and PT cases, we establish equilibria absent private information by showing it coincides with the zero precision limit (Proposition \ref{P:no_signal_simple_n}).

Comparing the PI and PT equilibria, we first show, similarly to \cite{KacNosSun23}, the public signal is fuzzier in the PI equilibrium (Proposition \ref{prop:impact_signal_noise}). This is reasonable, as the insider has a motive to hide her signal when submitting her demand, and hence her trading lowers the precision of the signal revealed by the prices. The uninformed traders duly recognize the insider reveals a muddied signal, and respond with a less elastic demand function. This in turn makes the PI equilibrium price less reactive to the public information than the PT price (consistent with the adverse-selection concerns of \cite{Kyl89} and \cite{LouRah23}). These results are robust across model parameter values, including agents' risk aversion.  We conclude that by assuming the insider is a price taker, one implicitly assumes the market receives a more precise signal, and prices are more reactive to public information, than  when one assumes the insider internalizes price impact.

We next consider the traders' indirect utility (henceforth referred to as ``utility'' for ease of exposition). Our main result shows that insider ex-ante utility is monotonically increasing in the private signal precision (Theorem \ref{T:pimono}).  Therefore, the relationships displayed in Figure \ref{F:figure1}, which hold when either (i) all agents are price takers (e.g. \cite{grossman1980impossibility}) or (ii) all agents internalize price impact (e.g. \cite{NEZAFAT2023105664}) do not occur in our model.  This means there is always an incentive for the insider to refine her signal, provided she is differentiated from the other agents in terms of both (i) price impact internalization and (ii) asymmetric information. 

It is of equal interest to understand the roles these two channels play in determining the insider utility in the first place. This question is motivated both by Figure \ref{F:figure1} which indicates that information asymmetry is utility reducing, and by the naive belief that internalizing price impact should be utility improving.  Thus, it is not immediately clear if on balance the PI equilibrium is utility improving over the PT equilibrium.  Here, we first confirm that absent private information, internalizing price impact improves insider's utility.  Remarkably, this holds for any realization of the noise trader demand (Proposition \ref{prop: nosigint_simple}). However, as surmised, the situation changes when one layers in the private signal.  While typically (i.e.~across the bulk of the parameter space) the utility benefit due to internalization of price impact outweighs the loss due to information asymmetry, and the PI model is utility improving over the PT model, this need not always be the case. Indeed, when the insider is sufficiently risk averse, the uninformed traders are sufficiently risk tolerant, and the variance of noise demand is sufficiently low\footnote{The exact condition is given in equation \eqref{E:less_CE_in_PI} below. Also, the variance of noise traders' demand is a proxy for the size of their trading, as discussed in \cite{KovViv14, NEZAFAT2023105664}.}, the negative effects due to information asymmetry dominate and the PI model is \textit{utility reducing} compared with the PT model. Therefore, there is no uniform utility ordering between the PI and PT equilibria, and in Section \ref{S:demand_compare} we give further reasoning why this may occur, focusing on the insider's demand function.

We have shown that a better signal is utility improving for the insider in the PI model, but may not be utility improving for the insider in the PT model. This result underscores an important, and clarifying, fact. As long as the insider internalizes her price impact, the PI equilibrium price cannot be driven to the corresponding PT equilibrium price. This is due to the uninformed traders' optimal demand. Despite being price takers themselves, when determining their optimal demand, they recognize that the insider internalizes her price impact. As  price impact reduces the public signal's precision, the uninformed traders demand is less elastic than in the PT equilibrium. The lower sensitivity to the public signal alters their demand function and hence, in the  PI equilibrium the insider trades against a different residual demand than in the PT equilibrium. Thus,  the PT equilibrium cannot be written as a special case of the PI one.

\subsection*{Connection with the related literature}

Our paper contributes to the on-going literature on price-impact equilibria under asymmetric information; on market participants' utility; and on the informativeness of equilibrium prices.

In the PI equilibrium, the insider does not act as price-taker. Usually, the price-taking assumption is made for tractability, as in its absence one must specify a price impact model, and depending upon the specification, it may be very difficult to establish equilibria.  In the aforementioned \cite{kyle1985continuous, back1992insider, rochet_vila}, the insider's demand is combined with exogenous noise traders' demand before being sent to a risk neutral market maker who prices in a competitive environment.  In \cite{subrahmanyam1991risk}, market makers are risk averse and quote prices to remain at utility indifference (as opposed the uninformed agent of \cite{grossman1980impossibility} who can be thought of as a market maker who quotes utility-optimal prices), but the insider does not know the noise trader demand before submitting her order (this is also the case in \cite{kyle1985continuous}). In fact, our analysis updates \cite{subrahmanyam1991risk} in two directions. First, by assuming the uninformed trader is a utility optimizer, and second, by allowing the insider to the identify the noise trader demand through the public equilibrium price (as in \cite{rochet_vila}).

Following the seminal work of \cite{Kyl89}, several models in the normal-CARA setting with price-impact and asymmetric information have been developed. For example, \cite{Vay01} and  \cite{RosWer15} study dynamic thin markets with and without market makers respectively; \cite{RosWer15a} focuses on the traders' interdependent preferences and correlated private signals; and \cite{MalRos17} and \cite{AnthKar24} consider decentralized exchanges and restricted participation settings. \cite{BerHeuMor21} studies a market of divisible goods where agents receive correlated signals, and their demand affects the revealed signal at equilibrium as in our model. In these works, as in \cite{Kyl89} and \cite{Viv11}, strategic agents submit demand schedules which leads to a Nash equilibrium. In contrast to our model, therein all (non-noise) agents are strategic, and private information appears in the investors' endowments. An overview of this literature is provided in \cite{RosYoo23}.

While large financial institutions invest to obtain private information, even when trading in markets that are not thin, theoretical studies such as ours that justify the positive relation between better information and higher gains from trading are scarce. In a competitive market, the fact that private information has positive value if it is internalized by an investor was pointed out in \cite{Hirsh71}, with similar positive effects shown in a competitive model with a continuum of traders in \cite{MorShi02}. In non-competitive market settings, information acquisition has been studied in \cite{Viv11}, \cite{RosWer12} and \cite{Viv14}. An extension of these papers (as well as \cite{verrecchia1982} which takes place in a competitive economy) to a two-stage model has been recently developed in \cite{NEZAFAT2023105664}, where in the first stage one group of agents selects the precision of their private signal before trading. However, unlike our model, both the insider and uninformed traders are assumed strategic, which essentially makes the market thin. Strategic uninformed traders, together with specific conditions on noise traders’ demand, lead to an equilibrium at which private information is utility-deteriorating. By contrast, we
show that the existence of zero-information equilibria are not possible if the uninformed traders are price takers, meaning that private information does have positive value for the insider.

\cite{KacNosSun23} considers strategic informed and price-taking uninformed traders, as we do, where the role of initial endowments is highlighted for different types of informed traders.  In \cite{GonKeQuiShen22} the insider is assumed risk neutral and the role of uninformed traders is played by a market maker with a quadratic objective. A linear equilibrium, similar to \cite{kyle1985continuous}, is derived where the price under-reacts to public signal, as in our case. Strategic agents and asymmetric information have been included in \cite{LouRah23}, where in a non-competitive market (in line with \cite{Kyl89} and \cite{RosWer15}) traders receive different ex-ante random values of a single asset (and potentially different private signals too). In contrast to our model (and to \cite{Kyl89}) there are no liquidity providers, which essentially implies that price informativeness does not change due to price impact. As in our model, there are conditions that lead to higher ex-ante expected utility for the uninformed traders than the insider. We reach a similar conclusion, but without assuming that uninformed traders act strategically.

Lastly, another channel that reduces insider's gains from trading due to private information is that of information sharing. For example, \cite{GolXioYan23} obtain this result in a model (based on \cite{Kyl89}) where informed traders share their private signals before trading (as in \cite{IndLuYan14}).

\subsection*{Structure of the paper} The rest of the paper is organized as follows. In Section \ref{S:equilibrium_simple}, we provide the model and establish existence of the equilibria under consideration. Section \ref{S:sigpx_compare_simple} develops quantitative analysis and qualitative discussion on information transmission and signal and price sensitivities. Section \ref{S:welfare} focuses on traders' utility in different equilibria, and Section \ref{S:demand_compare} examines equilibria structure regarding prices and risk allocation. Section \ref{S:conclude} concludes with a discussion on the model's predictions. An extension to a general multi-asset model is provided in Appendix \ref{AS:equilibrium_general_a} and all proofs are provided in Appendix \ref{AS:proofs}.

\bigskip

\section{The Equilibrium}\label{S:equilibrium_simple}

We now present the model and construct the equilibrium. To both isolate the effects of price impact internalization and asymmetric information, and to keep the presentation/notation as simple as possible, we  consider a simplified model with only one risky tradeable asset whose payoff has zero mean and unit variance, and where the initial endowments are at Pareto optimality absent private information\footnote{Even though traders have CARA preferences, equilibrium quantities under price-impact depend on the initial endowments, as it is the insider's trade off her initial position which impacts equilibrium prices.   Assuming initial endowments are Pareto optimal absent private information turns off the channel where hedging demands based on the initial position affect equilibrium quantities, allowing us to isolate information and internalization effects. }. The single asset model, and all the main results, are generalized in Appendix \ref{AS:equilibrium_general_a} to multiple dimensions with general means and covariances.

\subsection*{Model and Uncertainty} The model has one period and all random quantities are defined on a probability space $(\Omega,\F,\prob)$, where the beliefs $\prob$ are common to all agents. The risky asset has terminal payoff $X\sim N(0,1)$ and positive supply $\Pi > 0$. The risk-less asset is in zero net supply, with price normalized to $1$.  Following the  literature\footnote{See \cite{subrahmanyam1991risk, spiegel1992informed, grossman1976efficiency, grossman1980impossibility} amongst many others.}, there is an insider $I$ who at time $0$ obtains a private signal $G$ which is a noisy version of $X$, taking the form
\begin{equation*}
    G = X + Z_I;\qquad Z_I  = \frac{1}{\sqrt{p_I}} \E_I,
\end{equation*}
where $\E_I \sim N(0,1)$ is independent of $X$ and $p_I > 0$ is the signal precision. There is also a mass of uninformed traders who do not receive a private signal, but who in the equilibrium established below, will receive a market signal through the time $0$ price. We assume the uninformed traders are price-takers, and following convention, we consider a representative agent $U$, hereafter called the uninformed trader. We assume both traders have exponential preferences with respective risk tolerances $\alpha_I$, $\alpha_U$. Lastly, there are liquidity providers (also called noise traders), denoted by $N$, with exogenous demand
\begin{equation*}
    Z_N = \frac{1}{\sqrt{p_N}}\E_N,
\end{equation*}
where $\E_N\sim N(0,1)$ is independent of both $X$ and $\E_I$.  $p_N$ measures the noise trader demand precision, and, as mentioned in \cite{KovViv14}, can be thought of as a measure of the volume of the price inelastic demand.  Traders $I$ and $U$ are endowed with (constant) share positions $\cbra{\pi_{i,0}}$ which are Pareto optimal absent private information\footnote{Since the uninformed agent, being a price taker, can be seen as a representative agent for a group of CARA traders, $\pi_{U,0}$ stands for their aggregate initial position and $\alpha_U$ denotes their aggregate risk tolerance. Also, one may allow the liquidity providers to have initial endowment $\pi_{N,0} \neq 0$, but this could just be absorbed into the supply $\Pi$. As such, we take $\pi_{N,0} = 0$.}
\begin{equation}\label{E:pareto_endow}
    \pi_{i,0} = \alpha_i \wh{\Pi} \quad i\in \cbra{I,U},\qquad \wh{\Pi} \dfn \frac{\Pi}{\alpha_I + \alpha_U}.
\end{equation}
By Pareto optimality of the initial endowments we mean there is a measure $\QQ_0$ equivalent to $\prob$ such that agents' marginal utility from endowed wealth is proportional to the density $d\QQ_0/d\prob$. Due to exponential preferences, the density takes the form
\begin{equation}\label{E:Q0_def}
\frac{d\QQ_0}{d\prob} = \frac{e^{-\wh{\Pi}X}}{\expvs{e^{-\wh{\Pi}X}}}.
\end{equation}

At time $0$ when the signal arrives, $I$ and $U$, using their respective information sets, choose positions $\pi_{I},\pi_{U}$ to take in the risky asset, financing this choice by trading in the riskless asset. As it is natural to present results for risk-aversion adjusted strategies (henceforth called a ``strategy''), we write
\begin{equation*}
\psi_i \dfn \frac{\pi_i}{\alpha_i};\qquad  i\in\cbra{I,U},
\end{equation*}
so the clearing condition is
\begin{equation}\label{E:cc_psi_simple}
\Pi = \alpha_I\wh{\psi}_I + \alpha_U \wh{\psi}_U + Z_N,
\end{equation}
and the risk-aversion adjusted terminal wealth is
\begin{equation}\label{E:tw_psi_simple}
\mathcal{W}^{\psi_i} \dfn \frac{1}{\alpha_i} \left(\pi_{i,0} p + \pi_{i}(X-p)\right)  =  \wh{\Pi}p + \psi_i(X-p);\qquad i\in\cbra{I,U}.
\end{equation}


\subsection*{Price-impact equilibrium}\label{SS:eq_simple} Due to the exponential-Gaussian structure, we expect a linear-impact equilibrium. In other words, the insider perceives that if she changes her position from $\pi_{I,0} = \alpha_I \psi_{I,0}$ to  $\pi_I = \alpha_I \psi_I$, then the price will be an affine function of her trade combined with the noise trader demand,
\begin{equation}\label{E:impact_form_n}
p_{\iota}(\psi_I,Z_N) = V_{p,\iota} + M_{p,\iota} \left( \psi_I -\psi_{I,0} +  \frac{Z_N}{\alpha_I}\right),
\end{equation}
for constants $V_{p,\iota}, M_{p,\iota}$ that are determined in equilibrium,  and where throughout we use the subscript ``$\iota$'' to stand for ``impact''.  Following \cite{rochet_vila}, we assume the insider can see both her private signal and, for a given trade $\psi$, the price $p_{\iota}$. This implies the noise trader demand $Z_N$ is revealed to the insider as well.\footnote{This is in contrast to  \cite{spiegel1992informed} and \cite{subrahmanyam1991risk} and leads to a different equilibrium, where the insider must submit her demand order prior to seeing the price.}  As such, her acceptable policies $\A_I$ are functions $\psi = \psi(G,Z_N)$ of the private signal and noise,\footnote{$\A_I$ is formally defined in equation \eqref{eq:A_I_a} below, as there is a technical restriction on $\A_I$ which is needed a-priori, but which always holds in equilibrium.} and for fixed $V_{p,\iota},M_{p,\iota}$, the insider's optimal demand solves
\begin{equation*}
\begin{split}
&\inf_{\psi \in \A_I}\condexpvs{e^{-\psi_{I,0} p_{\iota}\left(\psi, Z_N\right) -\psi\left(X-p_{\iota}\left(\psi, Z_N\right)\right)}}{\sigma(G,Z_N)}.
\end{split}
\end{equation*}
Due to linear impact and the exponential-Gaussian structure, the insider's optimal policy $\wh{\psi}_{I,\iota}$ is affine in the private signal $G$ and noise trader demand $Z_N$. Therefore, with an eye towards \eqref{E:impact_form_n}, we may write
\begin{equation*}
    \wh{\psi}_{I,\iota}(g,z) - \psi_{I,0} + \frac{z}{\alpha_I} = V_{I,\iota} + M_{I,\iota}\left(g + \Lambda_{\iota} z\right),
\end{equation*}
for certain constants $V_{I,\iota},M_{I,\iota}$ and most importantly $\Lambda_{\iota}$ (explicitly given in \eqref{E:pi_signal_simple_n} below), each of which depends upon the pricing coefficients $V_{p,\iota},M_{p,\iota}$.  As the quantity on the left side above is publicly observable, this implies the market receives the signal
\begin{equation*}
    H_{\iota} = G + \Lambda_{\iota} Z_N,
\end{equation*}
and from \eqref{E:impact_form_n} we may view the price as $p_{\iota}(H_{\iota})$. Therefore, the uninformed agent's acceptable policies $\A_U$ are functions $\psi = \psi(H_{\iota})$, and as the uninformed is a price taker (and also because the risk aversion adjusted initial wealth $\psi_{U,0} p_{\iota}(H_{\iota})$ factors out), his optimization problem is
\begin{equation*}
\begin{split}
&\inf_{\psi \in \A_U}\condexpvs{e^{- \psi\left(X-p_{\iota}\left(H_{\iota}\right)\right)}}{\sigma(H_{\iota})}.
\end{split}
\end{equation*}
Similarly to the insider, the uninformed agent's optimal policy function is affine in the public signal $H_{\iota}$, with coefficients that also depend on $V_{p,\iota},M_{p,\iota}$.  Therefore, if we enforce the market clearing condition \eqref{E:cc_psi_simple}, we end up with an equation of the form (which must hold with probability one)
\begin{equation*}
    0 = V_{H,\iota}(V_{p,\iota}, M_{p,\iota}) + M_{H,\iota}(V_{p,\iota}, M_{p,\iota})\times  H_{\iota},
\end{equation*}
where we have made explicit the dependence of the coefficients on $V_{p,\iota},M_{p,\iota}$.  Thus, provided one can find $V_{p,\iota},M_{p,\iota}$ such that $0 = V_{H,\iota} = M_{H,\iota}$ there exists a linear price-impact equilibrium.

Similarly to \cite{subrahmanyam1991risk} where equilibrium quantities were governed by solutions to a fifth order equation, in our model equilibrium quantities are governed by positive solutions to the cubic equation
\begin{equation}\label{E:cubic_alt_n}
0 = (1+y)^2\left(1-\frac{1-\consb}{\consb(1+p_I)}y\right) + \frac{\consa p_I}{\consb}\left((1-\consb)y + 1\right),
\end{equation}
where
\begin{equation}\label{E:lambda_alpha_R_def_n}
\consa \dfn \alpha_I^2 p_N;\qquad \consb \dfn \frac{\alpha_I}{\alpha_I+\alpha_U},
\end{equation}
are the precision of $Z_N/\alpha_I$ and the insider's proportion of the total risk tolerance respectively.  As a first step we have the following.
\begin{proposition}\label{prop:one_d_n}
There exists a unique strictly positive solution $\wh{\yvals}$ to \eqref{E:cubic_alt_n}.
\end{proposition}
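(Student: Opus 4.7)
The approach is to introduce compact positive shorthand for the coefficients, establish existence of a positive root via the intermediate value theorem, and then use Vieta's formulas to rule out more than one positive real root.

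First, I would set $A = \frac{1-\consb}{\consb(1+p_I)}$, $B = \frac{\consa p_I}{\consb}$, and $C = 1-\consb$. Under the standing assumptions $\consa>0$, $\consb\in(0,1)$, $p_I>0$, all three are strictly positive (with $C<1$). Expanding, the cubic in \eqref{E:cubic_alt_n} takes the form
\begin{equation*}
f(y) \dfn (1+y)^2(1-Ay) + B(Cy+1) = -Ay^3 + (1-2A)y^2 + (2-A+BC)y + (1+B).
\end{equation*}
For existence, I note $f(0) = 1+B > 0$ while $f(y) \to -\infty$ as $y \to \infty$ since the leading coefficient $-A$ is negative, so the intermediate value theorem gives at least one strictly positive root.

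For uniqueness, I would apply Vieta's formulas to the three roots $r_1, r_2, r_3$ of $f$ (counted with multiplicity and allowing complex values):
\begin{equation*}
r_1 + r_2 + r_3 = \tfrac{1-2A}{A}, \qquad r_1 r_2 + r_1 r_3 + r_2 r_3 = -\tfrac{2-A+BC}{A}, \qquad r_1 r_2 r_3 = \tfrac{1+B}{A} > 0.
\end{equation*}
If $f$ admits a pair of complex conjugate roots, then the sole real root has the same sign as the positive product $r_1 r_2 r_3$, hence is positive, and it is the unique real root. If instead all three roots are real, the positive product forces either one or three positive roots; three positive roots would require both $r_1+r_2+r_3>0$ (i.e.\ $A<1/2$) and $r_1 r_2+r_1 r_3+r_2 r_3>0$ (i.e.\ $A>2+BC>2$), a contradiction. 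In either case $f$ has exactly one positive real root.

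The only delicate step is the uniqueness direction, and it collapses to the Vieta sign analysis above: the two sign constraints arising from the ``three positive roots'' scenario are elementary and incompatible, so no discriminant computation or monotonicity analysis of $f'$ is required.
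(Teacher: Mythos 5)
Your proof is correct, and the uniqueness step takes a genuinely different route from the paper's. Both arguments establish existence identically (the value at $y=0$ is positive, the leading coefficient $-A<0$ forces the cubic to $-\infty$, and the intermediate value theorem applies). For uniqueness, the paper instead evaluates $(1+y)\dot{g}(y)$ at an arbitrary root, using the root equation itself to simplify, and finds it is a sum of strictly negative terms whenever $y>-1$; since a polynomial cannot be strictly decreasing at two consecutive real roots, there is at most one root exceeding $-1$. Your Vieta-based argument is purely algebraic: the product of the roots is $\tfrac{1+B}{A}>0$, which in the complex-pair case forces the lone real root to be positive, and in the all-real case forces the sign pattern $(+,+,+)$ or $(+,-,-)$; the former is excluded because it would require simultaneously $A<1/2$ (positive sum) and $A>2+BC>2$ (positive second symmetric function). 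This is airtight and arguably more elementary, needing no calculus. What the paper's approach buys in exchange is a byproduct that is reused repeatedly downstream: knowing $g$ is strictly decreasing at $\wh{y}$ (and that $\wh{y}$ is the only root above $-1$) yields the sign characterization ``$g(y)<0$ for $y>\wh{y}$ and $g(y)>0$ for $0<y<\wh{y}$,'' which the proofs of Propositions \ref{prop:price_impact_signal}, \ref{prop:price_impact_dd}, \ref{prop: exante_U_pipt} and \ref{P:more_trade} lean on to locate $\wh{y}$ relative to explicit test points. Your argument proves the stated proposition but would need a small supplement (e.g., noting $g(0)>0$, $g\to-\infty$, and uniqueness of the positive root together imply the same sign dichotomy) to support those later comparisons.
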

Given this proposition, our main result establishes equilibrium. To state it, recall the measure $\QQ_0$  from \eqref{E:Q0_def} and define
\begin{equation}\label{E:new_p_nsn_simple_n}
\pnsn \dfn \expv{\QQ_0}{}{X} = -\wh{\Pi},
\end{equation}
as the equilibrium price absent private information. By expressing quantities in terms of $\pnsn$ we provide an intuitive way to gauge how private information and price impact alter prices. 

\begin{theorem}\label{thm:pi_simple_n}
$\wh{\yvals}$ from Proposition \ref{prop:one_d_n} induces a price-impact equilibrium. The market signal is
\begin{equation}\label{E:pi_signal_simple_n}
H_{\iota} = G + \Lambda_{\iota} Z_N,\qquad \Lambda_{\iota} = \frac{1+\wh{y}}{\alpha_I p_I}.
\end{equation}
$H_{\iota}$ is of the same form as the insider signal $G$, except with lower precision
\begin{equation}\label{E:PU_pi_simple_n}
p_{U,\iota} =  \left(\recip{p_I} + \frac{1}{p_N\Lambda^2_{\iota}}\right)^{-1} =  p_I \times \frac{\consa p_I }{(1+\wh{\yvals})^2 + \consa p_I}.
\end{equation}
The equilibrium price is $p_{\iota}(H_{\iota})$ for the price function
\begin{equation}\label{E:pi_price_new_simple_n}
    \begin{split}
    p_{\iota}(h_{\iota}) &=\pnsn + \frac{p_I \wh{\yvals}}{(1+p_I)(1+2\wh{\yvals})}\left(h_{\iota} - \pnsn\right).
    \end{split}
\end{equation}
Writing $H_{\iota}  = h_{\iota}(G,Z_N)$, the insider has optimal policy function
\begin{equation}\label{E:pi_psiI_simple_n}
\begin{split}
    \wh{\psi}_{I,\iota}(g,z) &= \frac{p_I}{1+\wh{\yvals}}g - \frac{1+p_I}{1+\wh{\yvals}}p_{\iota}(h_{\iota}(g,z)) - \frac{\wh{\yvals}}{1+\wh{\yvals}}\pnsn.
\end{split}
\end{equation}
The uninformed agent has optimal policy function
\begin{equation}\label{E:pi_psiU_simple_n}
    \wh{\psi}_{U,\iota}(h_{\iota}) = p_{U,\iota} h_{\iota} - (1+p_{U,\iota})p_{\iota}(h_{\iota}).
\end{equation}
Lastly, the coefficients in \eqref{E:impact_form_n} are $M_{p,\iota} = \wh{\yvals}/(1+p_I)$ and $V_{p,\iota} = -\wh{\Pi} = \pnsn$. 
\end{theorem}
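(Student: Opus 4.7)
The plan is to postulate the linear pricing rule \eqref{E:impact_form_n} with unknown coefficients $V_{p,\iota}, M_{p,\iota}$, solve both agents' problems explicitly, enforce market clearing, and show the resulting consistency conditions reduce to \eqref{E:cubic_alt_n}. The exponential-normal structure makes each step tractable, with $M_{p,\iota}$ introducing the only non-price-taker effect, namely an extra linear term in the insider's first-order condition.

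\textbf{Insider's problem.} Conditional on $\sigma(G, Z_N)$, $X$ is normal with mean $p_I G/(1+p_I)$ and variance $1/(1+p_I)$, while the perceived price $p_{\iota}(\psi, Z_N) = V_{p,\iota} + M_{p,\iota}(\psi - \psi_{I,0} + Z_N/\alpha_I)$ is a known affine function of the control $\psi$. Substituting into the risk-aversion adjusted wealth \eqref{E:tw_psi_simple}, computing the conditional expectation of the exponential, and passing to the log reduces the insider's problem to minimization of a strictly convex quadratic in $\psi$, convexity holding whenever $2 M_{p,\iota} + 1/(1+p_I) > 0$ (verified ex-post from $M_{p,\iota} > 0$). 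Setting the derivative to zero and using $\psi_{I,0} = \wh{\Pi}$ produces an affine optimizer which can be cast as
\begin{equation*}
\wh{\psi}_{I,\iota}(g,z) - \psi_{I,0} + \frac{z}{\alpha_I} = V_{I,\iota} + M_{I,\iota}\bigl(g + \Lambda_{\iota} z\bigr),
\end{equation*}
with $\Lambda_{\iota}$ depending only on $M_{p,\iota}$. Matching coefficients identifies the public signal $H_{\iota} = G + \Lambda_{\iota} Z_N$, so by \eqref{E:impact_form_n} the realized price collapses to a function of $H_{\iota}$ alone.

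\textbf{Uninformed's problem.} Given $H_{\iota}$, standard Gaussian updating yields a posterior for $X$ with precision $1 + p_{U,\iota}$, where $p_{U,\iota}$ is as in \eqref{E:PU_pi_simple_n}. Since the uninformed agent treats $p_{\iota}(H_{\iota})$ as given and the $\wh{\Pi} p_{\iota}$ piece of \eqref{E:tw_psi_simple} factors out of the conditional expectation, her problem reduces to a textbook CARA-normal one whose first-order condition immediately delivers \eqref{E:pi_psiU_simple_n}.

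\textbf{Clearing and the cubic.} Substituting $\wh{\psi}_{I,\iota}$ and $\wh{\psi}_{U,\iota}$ into \eqref{E:cc_psi_simple} produces, as a function of the independent pair $(G, Z_N)$, an affine expression. Demanding that the coefficients on $G$, on $Z_N$, and the constant term all vanish gives three equations in the two unknowns $V_{p,\iota}, M_{p,\iota}$. The constant equation pins down $V_{p,\iota} = -\wh{\Pi} = \pnsn$; the two remaining equations are compatible precisely when $y \dfn (1+p_I) M_{p,\iota}$ solves \eqref{E:cubic_alt_n}. Proposition \ref{prop:one_d_n} then supplies the unique positive root $\wh{y}$, which yields $M_{p,\iota} = \wh{y}/(1+p_I) > 0$ and validates the second-order condition retroactively. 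Back-substitution recovers \eqref{E:pi_signal_simple_n}--\eqref{E:pi_psiU_simple_n}.

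The principal obstacle is algebraic rather than conceptual: the reduction of the two-equation consistency system to a single cubic requires careful elimination, and matching the resulting formulas to the stated forms (especially the compact expression for $p_{\iota}$ in \eqref{E:pi_price_new_simple_n} and for $\wh{\psi}_{I,\iota}$ in \eqref{E:pi_psiI_simple_n}) needs one to use \eqref{E:cubic_alt_n} itself to simplify the scalar coefficients. A secondary concern is admissibility of the candidate policies in $\A_I, \A_U$, which holds automatically once the affine candidates are produced, since the implied conditional variances of terminal wealth are finite under the Gaussian structure.
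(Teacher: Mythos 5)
Your proposal is correct and follows essentially the same route as the paper: the paper postulates the linear impact form, solves the insider's strictly convex quadratic to identify $H_{\iota}=G+\Lambda_{\iota}Z_N$, solves the uninformed trader's CARA--normal problem, and reduces market clearing to the cubic \eqref{E:cubic_alt_n}, invoking Proposition \ref{prop:one_d_n} for the root. The only differences are cosmetic: the paper carries this out in the general multi-asset setting of Appendix \ref{AS:equilibrium_general_a} after a change of measure to $\QQ_0$ and a de-meaning that makes the constant condition read $V_{0,p,\iota}=0$, whereas you work directly in one dimension and extract $V_{p,\iota}=-\wh{\Pi}$ from the constant coefficient.
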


\subsection*{Price-taking equilibrium} For comparison purposes,  herein we consider the case where all agents are price takers. As this result is well known (see \cite{grossman1980impossibility}), we summarize the equilibrium structure in the following proposition.  To state it, assume there is a market signal $H$ revealed through the time $0$ price $p=p(H)$, and both traders take $p(H)$ as given. The insider has time 0 information $\sigma(H,G)$ while the uninformed trader uses $\sigma(H)$.  Using \eqref{E:tw_psi_simple}, the insider and uninformed trader's optimal investment problems are respectively
\begin{equation}\label{E:pt_vf_I_n}
    \inf_{\psi \in \sigma(G,H)} \condexpvs{e^{-\mathcal{W}^{\psi}}}{\sigma(G,H)};\qquad \inf_{\psi \in \sigma(H)} \condexpvs{e^{-\mathcal{W}^{\psi}}}{\sigma(H)}.
\end{equation}
We say $(H,p(H))$ is a price-taking equilibrium if the clearing condition \eqref{E:cc_psi_simple} holds for the optimal policies.  In the proposition below, note the similarities to Theorem \ref{thm:pi_simple_n}. This will form the basis of our comparison results in the next section.

\begin{proposition}\label{P:pt_equilibrium_simple_n}
There is a price-taking equilibrium. The market signal is
\begin{equation}\label{E:pt_signal_simple_n}
H \dfn G + \Lambda Z_N,\qquad \Lambda = \frac{1}{\alpha_I p_I}.
\end{equation}
$H$ is of the same form as $G$, but  with lower precision (see \eqref{E:lambda_alpha_R_def_n})
\begin{equation}\label{E:pt_PU_simple_n}
p_U = p_I \times \frac{\consa p_I}{1+\consa p_I}.
\end{equation}
The equilibrium price is $p=p(H)$ for the price function (recall \eqref{E:new_p_nsn_simple_n})
\begin{equation}\label{E:pt_price_simple_n}
\begin{split}
p(h) &\dfn \pnsn +  \frac{\alpha_I p_I +\alpha_U p_U}{\alpha_I(1+p_I) + \alpha_U(1+p_U)}(h-\pnsn).
\end{split}
\end{equation}
Writing $H = h(G,Z_N)$, the insider has optimal policy function
\begin{equation}\label{E:pt_psiI_simple_n}
    \begin{split}
        \wh{\psi}_I(g,z) &= p_I g - (1+p_I)p(h(g,z)).
    \end{split}
\end{equation}
The uninformed agent has optimal policy function
\begin{equation}\label{E:pt_psiU_simple_n}
    \wh{\psi}_U(h) = p_U h - (1+p_U)p(h).
\end{equation}
\end{proposition}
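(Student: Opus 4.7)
The plan is to employ the classical conjecture-and-verify method for Gaussian-CARA economies. I would conjecture an affine price function $p(h) = a + b h$ of a market signal $H = G + \Lambda Z_N$, with $a, b, \Lambda$ to be determined by market clearing. Provided $b \neq 0$, the map $h \mapsto p(h)$ is invertible, so the uninformed trader's information deduced from observing the price coincides with $\sigma(H)$, keeping the conjecture internally consistent and legitimizing the optimization problems in \eqref{E:pt_vf_I_n}.

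Given this conjecture, the first step is to compute the conditional law of $X$ under each trader's information. For the insider, because $Z_N$ is independent of $X$ and she already observes $G$, the residual $H - G = \Lambda Z_N$ carries no additional information about $X$, so $X \mid \sigma(G,H) \sim N\bigl(\tfrac{p_I G}{1+p_I},\, \tfrac{1}{1+p_I}\bigr)$. For the uninformed, $H = X + Z_I + \Lambda Z_N$ is a Gaussian signal of $X$ with noise precision $p_U = \bigl(\tfrac{1}{p_I} + \tfrac{\Lambda^2}{p_N}\bigr)^{-1}$, so $X \mid \sigma(H) \sim N\bigl(\tfrac{p_U H}{1+p_U},\, \tfrac{1}{1+p_U}\bigr)$. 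The standard CARA-Gaussian computation applied to the risk-tolerance-adjusted wealth \eqref{E:tw_psi_simple} then yields in closed form $\wh{\psi}_I = p_I G - (1+p_I) p$ and $\wh{\psi}_U = p_U H - (1+p_U) p$, reproducing \eqref{E:pt_psiI_simple_n} and \eqref{E:pt_psiU_simple_n}.

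The final step is to substitute these demands into the clearing condition \eqref{E:cc_psi_simple} with $p(H) = a + b(G + \Lambda Z_N)$ and collect coefficients of $G$, $Z_N$, and the constant term, producing three scalar equations. The $G$-coefficient equation gives $b = (\alpha_I p_I + \alpha_U p_U)/(\alpha_I(1+p_I) + \alpha_U(1+p_U))$, matching the slope in \eqref{E:pt_price_simple_n}; the constant equation gives $a$, which, after using $\pnsn = -\wh{\Pi}$ and $\Pi = (\alpha_I + \alpha_U)\wh{\Pi}$, rewrites as $\pnsn(1-b)$, confirming the intercept form in \eqref{E:pt_price_simple_n}. The $Z_N$-coefficient equation, after substituting the expression for $b$, collapses to $1 - \alpha_I p_I \Lambda = 0$, pinning down $\Lambda = 1/(\alpha_I p_I)$ and hence $p_U = p_I \cdot \consa p_I/(1+\consa p_I)$ as claimed in \eqref{E:pt_PU_simple_n}. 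The principal subtlety is to verify the equilibrium is well-posed, i.e.\ that $b \neq 0$ so the invertibility of $p(\cdot)$ assumed at the outset genuinely holds; since $p_I, p_U > 0$ one has $b > 0$ automatically, and no separate existence argument is required.
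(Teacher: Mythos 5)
Your proof is correct and follows essentially the same conjecture-and-verify route as the paper: Gaussian projections give the affine CARA demands \eqref{E:pt_psiI_simple_n}--\eqref{E:pt_psiU_simple_n}, and matching the $G$, $Z_N$, and constant coefficients in the clearing condition pins down $b$, $\Lambda = 1/(\alpha_I p_I)$, and the intercept. The only difference is presentational: the paper first changes measure to $\QQ_0$ via $\theta = \psi - \wh{\Pi}$ (de-meaning the payoff and zeroing the effective supply) and proves the general multi-asset Proposition \ref{P:pt_equilibrium_a} before specializing, whereas you compute directly under $\prob$ in one dimension, which is equivalent.
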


\subsection*{No private signal equilibria}  Lastly, we provide results absent private information, where the only uncertainty at time $0$ is the noise trader demand $Z_N$. We turn off the asymmetric information channel to analyze effects due solely to internalization of price impact.\footnote{Qualitatively, the no signal limit corresponds to when there is a market maker who is capable of moving prices, but who is not privately informed about the asset's terminal payoff.  She aims to move prices against a mass of (small) uninformed traders in a way to maximize her utility.}  Here, it turns out that the no-signal equilibria (in both the price-impact and price-taking cases) coincide with the previously established equilibria in the limit $p_I \to 0$. The resultant prices and optimal positions are summarized in the following proposition, the proof of which is given in Appendix \ref{AS:equilibrium_a}.

\begin{proposition}\label{P:no_signal_simple_n}
No-signal equilibria correspond to $p_I = 0$. In the price-taking case, the equilibrium price and optimal positions are $p_{ns}(Z_N)$ and $\wh{\psi}_{ns,I}(Z_N) = \wh{\psi}_{ns,U}(Z_N) = \wh{\psi}_{ns}(Z_N)$ where (recall \eqref{E:pareto_endow}, \eqref{E:lambda_alpha_R_def_n} and \eqref{E:new_p_nsn_simple_n})
\begin{equation}\label{E:no_signal_competi_simple_n}
p_{ns}(z)=\pnsn + \consb\frac{z}{\alpha_I};\qquad \wh{\psi}_{ns}(z) = \wh{\Pi} - \consb\frac{z}{\alpha_I}.
\end{equation}
In the price-impact case, the equilibrium price is $p_{ns,\iota}(Z_N)$ and the optimal policies are $\wh{\psi}_{ns,I,\iota}(Z_N)$  and $\wh{\psi}_{ns,U,\iota}(Z_N)$, where
\begin{equation*}
    \begin{split}
    p_{ns,\iota}(z) &= \pnsn + \frac{\consb}{1-\consb^2}\frac{z}{\alpha_I};\quad \wh{\psi}_{ns,I,\iota}(z) = \wh{\Pi} - \frac{\consb}{1+\consb}\frac{z}{\alpha_I};\quad \wh{\psi}_{ns,U,\iota}(z) = \wh{\Pi} - \frac{\consb}{1-\consb^2}\frac{z}{\alpha_I}.
    \end{split}
\end{equation*}

\end{proposition}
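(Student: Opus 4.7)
I would attack this in two steps: solve the no-signal equilibrium problem directly in both the PT and PI formulations, and then verify that the answers agree with the $p_I \downarrow 0$ limits of the formulas in Proposition \ref{P:pt_equilibrium_simple_n} and Theorem \ref{thm:pi_simple_n}.

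When there is no private signal, the only source of randomness at time $0$ is the noise-trader demand $Z_N$, so any equilibrium price is a function of $Z_N$ alone. The uninformed trader, being a price-taker with information $\sigma(Z_N)$, faces $X\mid Z_N \sim N(0,1)$ because $X\indep Z_N$; the standard CARA-normal first-order condition therefore yields $\wh\psi_U = -p$ in both the PT and PI settings. For the PT case the insider is also a price-taker with the same information, so $\wh\psi_I=-p$ as well, and \eqref{E:cc_psi_simple} gives $\Pi = -(\alpha_I+\alpha_U)p + Z_N$. Solving and using $\pnsn = -\wh\Pi$ and $\consb = \alpha_I/(\alpha_I+\alpha_U)$ recovers \eqref{E:no_signal_competi_simple_n}.

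For the PI case I would carry over the affine perceived-price structure from \eqref{E:impact_form_n} in the form $p_\iota = V_{p,\iota} + M_{p,\iota}\bigl(\psi - \wh\Pi + Z_N/\alpha_I\bigr)$, and condition on $Z_N$. The insider's exponential criterion then reduces to minimizing the deterministic quadratic $-\wh\Pi\, p_\iota + \psi\, p_\iota + \tfrac12 \psi^2$; differentiating in $\psi$, and accounting for the direct dependence of $p_\iota$ on $\psi$, yields the first-order condition $(1+2M_{p,\iota})\psi = 2\wh\Pi M_{p,\iota} - V_{p,\iota} - M_{p,\iota}Z_N/\alpha_I$. Pairing this with $\wh\psi_U = -p$ and the clearing condition \eqref{E:cc_psi_simple} gives two linear equations in $V_{p,\iota}$ and $M_{p,\iota}$, whose unique solution is $V_{p,\iota}=\pnsn$ and $M_{p,\iota}=\consb/(1-\consb)$; back-substitution then delivers the stated formulas for $p_{ns,\iota}$, $\wh\psi_{ns,I,\iota}$ and $\wh\psi_{ns,U,\iota}$.

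To match the $p_I\downarrow 0$ limits, I would set $p_I=0$ in \eqref{E:cubic_alt_n}, which collapses to $(1+y)^2\bigl(1 - \tfrac{1-\consb}{\consb}y\bigr)=0$ with unique positive root $\wh y = \consb/(1-\consb)$. Substituting into $M_{p,\iota}=\wh y/(1+p_I)$ reproduces the value found above, and a careful limit of \eqref{E:pi_price_new_simple_n} using $p_I\Lambda_\iota = (1+\wh y)/\alpha_I$ and $p_I G \to 0$ recovers $p_{ns,\iota}$; the analogous computation in the PT case uses $p_U\to 0$ together with $p_I\Lambda = 1/\alpha_I$. The main subtlety is precisely this limit, since $H$ itself diverges as $p_I\downarrow 0$, so the formulas must first be re-expressed in terms of the finite quantities $p_I H$ and $p_I\Lambda_\iota Z_N$ before the limit can be taken termwise; once this regrouping is performed the identifications are routine.
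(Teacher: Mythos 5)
Your proposal is correct, but it is organized differently from the paper's proof. The paper proves the general version (Proposition \ref{P:no_signal_a}) purely by a limiting argument: it rewrites the PT and PI equilibrium prices and policies in terms of the bounded independent normals $\E_X,\E_I,\E_N$ (so that the factor $1-R=p_I/(1+p_I)$ multiplying the divergent signal terms makes the limits well defined), notes that $p_I\to 0$ forces $\cons\to 0$, $R\to 1$ and, via the cubic \eqref{E:cubic_alt_n}, $\wh{y}\to\consb/(1-\consb)$, and then reads off the limits; it never solves the no-signal problem from first principles. You do both: your direct derivation — $\wh\psi_U=-p$ from the CARA-normal first-order condition with $X\indep Z_N$, the insider's quadratic first-order condition $(1+2M_{p,\iota})\psi=2\wh\Pi M_{p,\iota}-V_{p,\iota}-M_{p,\iota}Z_N/\alpha_I$, and the matching of constant and $Z_N$ coefficients in the clearing condition to pin down $V_{p,\iota}=\pnsn$ and $M_{p,\iota}=\consb/(1-\consb)$ — is correct (I verified the coefficient matching gives exactly $\alpha_I(1+M)=\alpha_U M(1+M)$, hence $M=\alpha_I/\alpha_U$, and back-substitution yields the stated $p_{ns,\iota}$ and policies). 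This buys you something the paper's proof leaves implicit: an independent justification that the $p_I\to 0$ limits genuinely \emph{are} the no-signal equilibria, together with their uniqueness within the affine class, rather than merely computing what the limits happen to be. Your limit verification, with the regrouping in terms of $p_I H$ and $p_I\Lambda_\iota Z_N$ to handle the divergence of $H$, is the same device as the paper's re-expression through $\E_X,\E_I,\E_N$, so that half of your argument coincides with the paper's.
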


\subsection*{A comment on the linearity of price impact}

Our assumption of linear price impact is motivated by the price-taking case. Indeed, using \eqref{E:pt_price_simple_n}, \eqref{E:pt_psiI_simple_n} one can show
\begin{equation}\label{E:strat_map_simple}
\begin{split}
p(h(g,z))&=  \pnsn + \frac{\alpha_Ip_I + \alpha_U p_U}{\alpha_U(p_I - p_U)}\left(\wh{\psi}_I(g,z) - \psi_{I,0} +\frac{z}{\alpha_I}\right).
\end{split}
\end{equation}
This is the reverse combined demand function at equilibrium, and indicates linear price impact. Indeed, even though the insider does not internalize impact in the price-taking case, in equilibrium the price is linearly impacted by her trade, combined with the noise trader's demand.  The price takes the form \eqref{E:impact_form_n}, where
\begin{equation*}
    V_{p} = \pnsn = -\wh{\Pi}, \qquad M_{p} = \frac{\alpha_Ip_I + \alpha_U p_U}{\alpha_U(p_I - p_U)}.
\end{equation*}
As discussed in the introduction, as long as the insider internalizes her price impact and the uninformed trader takes this into account, the price-taking and price-impact equilibria cannot coincide. In fact, even if the insider submits the demand which is optimal in the price-taking equilibrium,  if she internalizes price impact, the market will not equilibrate to the price-taking equilibrium price. This would be the case if the uninformed trader did not perceive the change in market signal precision due to the insider's demand (i.e., if he ignored the insider's internalization of the price impact and assumed $p_{U,\iota}(\yvals)=p_U$ for all $\yvals$.)

On the other hand, there is a  $\yvals^* > 0$ such that  $M_{p,\iota}(\yvals^*) = \yvals^*/(1+p_I)$ coincides with $M_p$ from above. For this $M_{p,\iota}(\yvals^*)$, if the insider used the price-taking optimal demand $\wh{\psi}_I$ from \eqref{E:pt_psiI_simple_n}, it would reveal the same signal to the uninformed trader as in the price-taking equilibrium. This would lead to the same uninformed trader's demand and hence the same clearing price as in price-taking equilibrium. However, when $M_{p,\iota} = M(\yvals^*)$ and $V_{p,\iota}=-\wh{\Pi}$,  $\wh{\psi}_I$ from \eqref{E:pt_psiI_simple_n} is not optimal for the insider in the price-impact model, hence the price-taking and price-impact equilibria cannot be the same.



\section{Comparison Analysis: Signals and Price Sensitivity}\label{S:sigpx_compare_simple}

In this section, we compare the public signals and price sensitivity with respect to signals of the two equilibria. As in the introduction, we label the price-impact equilibria as ``PI'' and the price-taking equilibria as ``PT''.  We show the PI public signal is of a worse quality, and prices are less responsive to not only the market and insider signals, but also to the publicly observable (risk-tolerance weighted) insider's and noise trader residual demand $\wh{\psi}_I - \psi_{I,0} + Z_N/\alpha_I$. Thus, the main message of this section is
\begin{quote}
    \textit{By assuming the insider is a price taker, one overestimates the quality of the public signal and the reactivity of equilibrium prices.}
\end{quote}
Throughout, we include the subscript $\iota$ when describing any quantity obtained internalizing price impact. Proofs are in Appendix \ref{AS:sigpx_compare}, and we collect $p_{U,\iota}$ from \eqref{E:PU_pi_simple_n} and $p_U$ from \eqref{E:pt_PU_simple_n}
\begin{equation}\label{E:constants_nice_simple}
\begin{split}
p_{U,\iota} = \frac{\kappa p_I^2}{(1+\wh{y})^2 + \kappa p_I};\qquad p_U = \frac{\kappa p_I^2}{1+\kappa p_I}.
\end{split}
\end{equation}

\subsection*{Signal quality}
As we have seen, in both the PI and PT equilibria a signal of the form ``$X + \textrm{Noise}$'' is communicated to market. It is natural to ask which signal is of a higher quality, or even more pointedly, is the public signal less informative in the presence of price impact? To address these questions we write the market signals as functions of the insider signal $G$ and noisy demand $Z_N$, and using \eqref{E:pi_signal_simple_n}, \eqref{E:pt_signal_simple_n}  we obtain
\begin{equation}\label{E:signals_together}
\begin{split}
h_{\iota}(g,z) &= g + \frac{1+\wh{y}}{p_I}\frac{z}{\alpha_I};\qquad h(g,z) = g + \frac{1}{p_I} \frac{z}{\alpha_I}.
\end{split}
\end{equation}
From Theorem \ref{thm:pi_simple_n} we know $\wh{y} > 0$, which from \eqref{E:constants_nice_simple} implies
\begin{proposition}\label{prop:impact_signal_noise}
The market signal is noisier in the PI equilibria: $p_U>p_{U,\iota}$.
\end{proposition}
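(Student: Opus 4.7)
The statement reduces to a direct algebraic comparison, because both precisions have already been put in closed form in \eqref{E:constants_nice_simple}. My plan is to read off
\[
p_{U,\iota} = \frac{\kappa p_I^2}{(1+\wh{y})^2 + \kappa p_I}, \qquad p_U = \frac{\kappa p_I^2}{1 + \kappa p_I},
\]
note that both fractions share the strictly positive common numerator $\kappa p_I^2 = \alpha_I^2 p_N p_I^2 > 0$, and compare denominators. Their difference is exactly
\[
\bigl((1+\wh{y})^2 + \kappa p_I\bigr) - \bigl(1 + \kappa p_I\bigr) = (1+\wh{y})^2 - 1,
\]
which is strictly positive provided $\wh{y} > 0$. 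Since Proposition \ref{prop:one_d_n} delivers precisely this positivity (the unique real root of the cubic \eqref{E:cubic_alt_n} is strictly positive), the denominator in $p_{U,\iota}$ strictly exceeds the denominator in $p_U$, and the inequality $p_{U,\iota} < p_U$ follows.

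I do not anticipate any real obstacle: all the nontrivial work has been done upstream, in establishing $\wh{y}>0$ and the closed-form expression for $\Lambda_\iota = (1+\wh{y})/(\alpha_I p_I)$ in Theorem \ref{thm:pi_simple_n}. If a reader prefers a more structural presentation, one can alternatively note that \eqref{E:pi_signal_simple_n} and \eqref{E:pt_signal_simple_n} give $\Lambda_\iota = (1+\wh{y})\Lambda > \Lambda$; since in each case the market signal has the form $X + Z_I + (\Lambda \text{ or } \Lambda_\iota)Z_N$ with $Z_I$ and $Z_N$ independent of $X$ and of each other, the variance of the residual noise $Z_I + \Lambda_\iota Z_N$ strictly exceeds that of $Z_I + \Lambda Z_N$, whence the precision of $X$ given $H_\iota$ is strictly smaller than the precision of $X$ given $H$. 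Either route is essentially one line once $\wh{y}>0$ is in hand, so the substantive content of the proposition is really that $\wh{y}$ solves \eqref{E:cubic_alt_n} on the strictly positive half-line.
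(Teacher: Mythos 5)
Your proof is correct and matches the paper's own argument: the paper likewise reduces $p_U > p_{U,\iota}$ to comparing the denominators of the two closed-form precisions, which comes down to $1 < (1+\wh{y})^2$, guaranteed by $\wh{y}>0$ from Proposition \ref{prop:one_d_n}. Your alternative remark via the noisier residual $Z_I + \Lambda_\iota Z_N$ is also essentially the first half of the paper's proof (which cites \eqref{E:signals_together} and $\wh{y}>0$ directly).
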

That the public signal is less informative under price impact is associated with the way the uninformed trader determines his demand. Indeed, because the uninformed trader accounts for the insider's internalization of price impact, in his optimization problem he considers $p_{U,\iota}$ instead of $p_U$ (compare the demand functions \eqref{E:pi_psiU_simple_n} and \eqref{E:pt_psiU_simple_n}). In other words, he recognizes the insider reveals a wangled signal, and responds with a less elastic demand function.

\subsection*{Price reactivity}  In both the PI and PT equilibria, prices are affine functions of the respective public signals. However, the coefficients in the functions differ. This leads one to ask whether price impact increases or decreases the sensitivity of prices with respect to public signaling. To  answer this, we recall the PI price $p_{\iota}$ from \eqref{E:pi_price_new_simple_n} and re-express the PT price \eqref{E:pt_price_simple_n} using the notation of \eqref{E:lambda_alpha_R_def_n}.

\begin{proposition}\label{prop:prices_simple}
The pricing functions take the form
\begin{equation}\label{E:prices_simple}
\begin{split}
p_{\iota}(h_{\iota}) &= \pnsn + \frac{p_I\wh{y}}{(1+p_I)(1+2\wh{y})}\left(h_{\iota}-\pnsn\right);\\
p(h) &= \pnsn + \frac{p_I(\consa p_I +\consb)}{1-\consb  + (1 + p_I)(\consa p_I + \consb)}\left(h-\pnsn\right).
\end{split}
\end{equation}
\end{proposition}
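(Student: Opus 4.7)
The proposition is a rewriting of equations already in hand, so my proof will be a short computation with no real obstacle; its only purpose is to put the PI and PT pricing coefficients on a common footing for the subsequent comparison.

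The first display is immediate: equation \eqref{E:pi_price_new_simple_n} of Theorem \ref{thm:pi_simple_n} is exactly the claimed PI pricing function, so nothing further is needed there. All of the work is in rewriting the PT coefficient from \eqref{E:pt_price_simple_n},
\begin{equation*}
\frac{\alpha_I p_I + \alpha_U p_U}{\alpha_I(1+p_I) + \alpha_U(1+p_U)},
\end{equation*}
in terms of $(\consa, \consb, p_I)$.

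The plan is as follows. First I would divide both numerator and denominator by $\alpha_I + \alpha_U$; using $\consb = \alpha_I/(\alpha_I+\alpha_U)$ and $1-\consb = \alpha_U/(\alpha_I+\alpha_U)$ from \eqref{E:lambda_alpha_R_def_n}, this rewrites the coefficient as
\begin{equation*}
\frac{\consb p_I + (1-\consb)p_U}{1 + \consb p_I + (1-\consb)p_U}.
\end{equation*}
Next I would substitute the identity $p_U = \consa p_I^2/(1+\consa p_I)$ from \eqref{E:pt_PU_simple_n} (equivalently \eqref{E:constants_nice_simple}) and clear the common denominator $1+\consa p_I$. In the numerator the $\consb \consa p_I^2$ terms cancel, leaving $p_I(\consa p_I + \consb)/(1+\consa p_I)$. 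In the denominator, expanding gives $1 + \consb p_I + \consa p_I + \consa p_I^2 = 1 - \consb + (1+p_I)(\consa p_I + \consb)$, so the ratio simplifies to the stated form.

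Since every step is a clearing of fractions, there is no real obstacle; the only thing to check carefully is the algebraic identity $(1-\consb) + (1+p_I)(\consa p_I + \consb) = 1 + \consb p_I + \consa p_I + \consa p_I^2$, which ensures the reformulated denominator matches the claim. Once this is verified, the proposition is established.
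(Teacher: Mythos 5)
Your proposal is correct and the computation checks out: the cancellation of the $\consb\consa p_I^2$ terms in the numerator and the identity $1-\consb+(1+p_I)(\consa p_I+\consb)=1+\consb p_I+\consa p_I+\consa p_I^2$ both hold. This is essentially the same argument as the paper's, which performs the identical substitution of $p_U$ into \eqref{E:pt_price_simple_n} but routes the algebra through the auxiliary notation $R=1/(1+p_I)$, $\cons=\consa p_I$, $\rtoi$, $\rtou$ of the appendix before translating back.
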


Given this, and in view of \eqref{E:pi_signal_simple_n}, \eqref{E:pt_signal_simple_n}, define the slopes
\begin{equation}\label{E:mg_alt_compare}
\begin{split}
m_{g,\iota} = \frac{p_I\wh{y}}{(1+p_I)(1+2\wh{y})},\qquad m_g = \frac{p_I(\consa p_I +\consb)}{1-\consb  + (1 + p_I)(\consa p_I + \consb)}.
\end{split}
\end{equation}
The following proposition shows prices are always more reactive to the insider, and hence to the market signal, when the insider does not internalize price impact. This is directly linked with the lower elasticity of the uninformed trader's demand function due to price impact, and is an endogenously derived outcome. The insider is motivated to make the public signal noisier, which makes uninformed trader less elastic and yields prices which are less sensitive to the public signal.

\begin{proposition}\label{prop:price_impact_signal} The equilibrium price is less sensitive to the market signal in the PI equilibria: $m_{g,\iota}<m_{g}$.
\end{proposition}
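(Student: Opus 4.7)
My plan is to reduce the strict inequality $m_{g,\iota}<m_g$ to a manifestly true polynomial inequality in $\wh{y}$ and $\consb$, using the cubic \eqref{E:cubic_alt_n} as the principal tool. Since both slopes are positive I first cross-multiply. Writing $D\dfn \consb+\consa p_I$ so that $m_g = p_ID/[(1-\consb)+(1+p_I)D]$, a short calculation (cancel $p_I$, move the $\wh{y}(1+p_I)D$ term to the right, and factor) shows that $m_{g,\iota}<m_g$ is equivalent to the single inequality $(\consb+\consa p_I)(1+p_I)(1+\wh{y})>(1-\consb)\wh{y}$.

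Next I would multiply \eqref{E:cubic_alt_n} through by $\consb(1+p_I)$ to put the cubic in the equivalent form $(1+\wh{y})^2\bigl[(1-\consb)\wh{y}-\consb(1+p_I)\bigr]=\consa p_I(1+p_I)\bigl[(1-\consb)\wh{y}+1\bigr]$. Since the right-hand side is strictly positive and $(1+\wh{y})^2>0$, this relation immediately yields the sign fact $\consb(1+p_I)<(1-\consb)\wh{y}$, and also an explicit formula for $\consa p_I(1+p_I)$. Substituting that formula into the reduced inequality and multiplying through by $(1-\consb)\wh{y}+1>0$ produces an inequality involving only $\wh{y}$, $\consb$ and $\consb(1+p_I)$; after using the identity $(1-\consb)\wh{y}+1-(1+\wh{y})^2=-\wh{y}(1+\consb+\wh{y})$ it collapses to
\[
(1-\consb)\wh{y}\bigl[(1+\wh{y})^3-(1-\consb)\wh{y}-1\bigr] > \consb(1+p_I)(1+\wh{y})(1+\consb+\wh{y}).
\]

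For the final step I would invoke the algebraic identity $(1+\wh{y})^3-(1-\consb)\wh{y}-1=\wh{y}[(\wh{y}+1)(\wh{y}+2)+\consb]$ to cancel $\wh{y}>0$ on both sides, and then use the sign fact $\consb(1+p_I)<(1-\consb)\wh{y}$ from the previous paragraph to upper-bound the right-hand side by replacing $\consb(1+p_I)$ with the larger quantity $(1-\consb)\wh{y}$. It then suffices to prove $(\wh{y}+1)(\wh{y}+2)+\consb>(1+\wh{y})(1+\consb+\wh{y})$; a one-line expansion shows the difference equals $(1+\wh{y})(1-\consb)+\consb$, which is manifestly positive for $\wh{y}>0$ and $\consb\in(0,1)$.

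I expect the main obstacle to be the bookkeeping in the middle paragraph: one must combine the cubic-derived expression for $\consa p_I(1+p_I)$ with the reduced inequality so that the various cross terms telescope into the clean cubic-in-$\wh{y}$ form displayed above. The critical simplifications that make this tractable are the identity $(1-\consb)\wh{y}+1-(1+\wh{y})^2=-\wh{y}(1+\consb+\wh{y})$, together with the monotonicity gain from $\consb(1+p_I)<(1-\consb)\wh{y}$, which converts the target statement into a $p_I$-free polynomial inequality that can be dispatched by inspection.
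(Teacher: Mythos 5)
Your proof is correct, but it follows a genuinely different route from the paper's. The paper's argument decomposes $2(m_g-m_{g,\iota})/(1-R)$ (with $R=1/(1+p_I)$) into two fractions, disposes immediately of the case where both fractions are positive, and in the remaining case defines a threshold $\wt{y}$ at which the two slopes would coincide and evaluates the cubic there; since the proof of Proposition \ref{prop:one_d_n} shows the cubic is positive before and negative after its unique positive root, the sign of $g(\wt{y})$ pins down the order of $\wh{y}$ and $\wt{y}$ and hence of the slopes. This ``evaluate the cubic at a candidate point'' template is reused verbatim for Propositions \ref{prop:price_impact_dd} and \ref{P:more_trade}, which is what the paper's approach buys. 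You instead reduce $m_{g,\iota}<m_g$ to $(1-\consb)\wh{y}<(\consb+\consa p_I)(1+p_I)(1+\wh{y})$ (this reduction checks out), and then use the cubic twice: once to extract the sign fact $\consb(1+p_I)<(1-\consb)\wh{y}$, and once to solve for $\consa p_I(1+p_I)$ and substitute it into the target, which eliminates both the case split and the appeal to the root-ordering argument. Your two identities are correct, and the final reduction to $(\wh{y}+1)(\wh{y}+2)+\consb>(1+\wh{y})(1+\consb+\wh{y})$, whose difference is $(1+\wh{y})(1-\consb)+\consb>0$, closes the argument with all inequalities strict. One typo to fix: the displayed inequality in your second paragraph is missing a factor of $\wh{y}$ on its right-hand side, which should read $\consb(1+p_I)\,\wh{y}\,(1+\wh{y})(1+\consb+\wh{y})$; your instruction in the next paragraph to cancel $\wh{y}>0$ from both sides makes clear this is what you intended, and the remainder of the argument proceeds correctly from the corrected form.
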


We conclude this discussion with the price reactivity with respect to the publicly observable (weighted risk-tolerance adjusted) combined demand
\begin{equation*}
\wh{\chi}_{\iota} \dfn \wh{\psi}_{I,\iota}(G,Z_N) -\psi_{I,0} + \frac{1}{\alpha_I}Z_N,\qquad \wh{\chi} \dfn \wh{\psi}_{I}(G,Z_N) -\psi_{I,0} + \frac{1}{\alpha_I}Z_N.
\end{equation*}
Using Theorem \ref{thm:pi_simple_n}, Proposition \ref{P:pt_equilibrium_simple_n} and \eqref{E:lambda_alpha_R_def_n} one can show prices are affine in the combined demand with respective slopes
\begin{equation*}
m_{\wh{\chi},\iota} = \frac{\wh{y}}{1+p_I},\qquad m_{\wh{\chi}} = \frac{\consb+\consa p_I}{1-\consb}.
\end{equation*}
As expected from the preceding analysis, when the insider internalizes her impact, prices are less sensitive to the publicly observable combined demand (similarly to the public signal). Again, we stress this is an endogenous outcome, arising from the insider's strategy when she internalizes her price impact. The next proposition formally states this result.

\begin{proposition}\label{prop:price_impact_dd} The equilibrium prices are less sensitive to the publicly observable combined demand in the PI equilibria: $m_{\wh{\chi},\iota}< m_{\wh{\chi}}$.
\end{proposition}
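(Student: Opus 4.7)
The plan is to rewrite the desired inequality $m_{\wh{\chi},\iota} < m_{\wh{\chi}}$ in the equivalent cross-multiplied form
\[
(1-\consb)\wh{y} - \consb(1+p_I) < \consa p_I(1+p_I),
\]
and then exploit the cubic equation \eqref{E:cubic_alt_n} satisfied by $\wh{y}$ to substitute for the left-hand side. This reformulation is the natural one because the cubic, when cleared of the denominator $\consb(1+p_I)$, has exactly the factor $(1-\consb)y - \consb(1+p_I)$ appearing on one side.

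More precisely, I would first multiply \eqref{E:cubic_alt_n} by $\consb(1+p_I)$ and rearrange to obtain the identity
\[
(1+\wh{y})^2\left((1-\consb)\wh{y} - \consb(1+p_I)\right) = \consa p_I(1+p_I)\left((1-\consb)\wh{y}+1\right).
\]
Since $\wh{y} > 0$ (by Proposition \ref{prop:one_d_n}), the right-hand side above is strictly positive, which incidentally confirms $(1-\consb)\wh{y} > \consb(1+p_I)$. Solving for $(1-\consb)\wh{y} - \consb(1+p_I)$ and substituting into the inequality I want to prove reduces it, after cancelling the common positive factor $\consa p_I(1+p_I)$, to the assertion
\[
\frac{(1-\consb)\wh{y}+1}{(1+\wh{y})^2} < 1,
\]
i.e.\ $(1-\consb)\wh{y} + 1 < (1+\wh{y})^2 = 1 + 2\wh{y} + \wh{y}^2$.

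Simplifying and dividing by $\wh{y} > 0$, this becomes $1-\consb < 2+\wh{y}$, which is trivially true since $\consb \in (0,1)$ and $\wh{y} > 0$. The only potential obstacle is verifying the algebraic manipulation of the cubic cleanly; once that identity is in place, the rest is elementary. I do not expect any additional difficulty, and no case analysis in $\consa$, $p_I$, or $\consb$ is required, so the result holds uniformly across all admissible parameter values.
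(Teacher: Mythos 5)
Your proof is correct, and it takes a genuinely different route from the paper's. The paper reduces the claim to showing $\wh{y} < \wt{y} \dfn (\cons+\rtoi)/(\rtou R)$ and verifies this by evaluating the cubic $g$ from \eqref{E:new_cubic_a} at the comparison point $\wt{y}$, showing $g(\wt{y})<0$ and invoking the sign characterization ``$g(y)<0$ if and only if $y>\wh{y}$'' established in the proof of Proposition \ref{prop:one_d_n}; this requires a somewhat messy explicit computation of $g(\wt{y})$. You instead use the cubic only as an identity satisfied by $\wh{y}$ itself: clearing the denominator $\consb(1+p_I)$ gives
\begin{equation*}
(1+\wh{y})^2\left((1-\consb)\wh{y} - \consb(1+p_I)\right) = \consa p_I(1+p_I)\left((1-\consb)\wh{y}+1\right),
\end{equation*}
which I have verified, and the desired inequality then collapses to $(1-\consb)\wh{y}+1 < (1+\wh{y})^2$, i.e.\ $1-\consb < 2+\wh{y}$, which is immediate. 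Your argument is shorter, avoids both the sign-of-$g$ lemma and the explicit evaluation at $\wt{y}$, and yields as a byproduct the bound $\wh{y} > \consb(1+p_I)/(1-\consb)$ (strengthening the bound $\wh{y}>\consb/(1-\consb)$ the paper uses in Section \ref{S:demand_compare}). The only thing the paper's approach buys in exchange is a reusable template: the same ``evaluate $g$ at the threshold'' device is deployed again in Propositions \ref{prop:price_impact_signal} and \ref{P:more_trade}, where the comparison point does not factor as cleanly out of the cubic identity.
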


\section{Indirect Utility}\label{S:welfare} This section is dedicated to analyzing the traders' indirect utility (called ``utility'' or ``value''). We address how the insider's signal quality relates to her utility, and how price-impact internalization effects all agents' utility. We again label the price impact equilibrium PI and the price taking equilibrium PT.

We define utility at both the ex-ante level (i.e.~at time $0-$, prior to signal revelation) and interim level (at time $0$, after the signal revelation) in terms of certainty equivalents.  Utility will always be computed using the overall wealth in \eqref{E:tw_psi_simple}. Given this, we denote by $\wh{\mathcal{W}}_{I,\iota}, \wh{\mathcal{W}}_{U,\iota}$ the optimal terminal wealths in the PI equilibrium, and $\wh{\mathcal{W}}_I, \wh{\mathcal{W}}_U$ those in the PT case. Then, for $k\in\cbra{\ , \iota}$ the corresponding interim certainty equivalents are
\begin{equation*}
    \begin{split}
\textrm{CE}^{I}_{0,k} &= -\alpha_I \log\left(\condexpvs{e^{-\frac{1}{\alpha_I}\wh{\mathcal{W}}_{I,k}}}{\sigma(G,H_{k})}\right);\quad \textrm{CE}^{U}_{0,k} = -\alpha_U \log\left(\condexpvs{e^{-\frac{1}{\alpha_U} \wh{\mathcal{W}}_{U,k}}}{\sigma(H_{k})}\right),
    \end{split}
\end{equation*}
while the ex-ante certainty equivalents are
\begin{equation*}
\textrm{CE}^{j}_{0-,k} = -\alpha_j\log\left(\expvs{e^{-\frac{1}{\alpha_j} \wh{\mathcal{W}}_{j,k}}}\right);\qquad j\in\cbra{I,U}.
\end{equation*}

We find that (i) in the PI equilibrium, insider ex-ante utility is always strictly increasing in the precision $p_I$; (ii) insider ex-ante utility is not always higher in the PI equilibrium; and (iii) absent private information, insider utility is always higher in the PI equilibrium, and remarkably this holds at the interim level. In fact, when the insider and uninformed traders have the same risk tolerance we can order interim utility 
\begin{equation*}
    \textrm{U(PI)} > \textrm{I(PI)} > \textrm{U(PT)} = \textrm{I(PT)},
\end{equation*}
so the uninformed trader's utility exceeds the insider's utility in the PI case.  We use the notation in \eqref{E:lambda_alpha_R_def_n} and, Proposition \ref{prop: exante_I_pipt} aside, all proofs of this section are in Appendix \ref{AS:welfare_2}.

\subsection*{Certainty Equivalents} We start calculating the certainty equivalents. Propositions \ref{P:PI_CEs_a} and \ref{P:PT_CEs_a} (see also Remark \ref{R:nice_way}) below compute ex-ante utilities in the PI and PT equilibria under the general model of Appendix \ref{AS:equilibrium_general_a}.  To state the results in the model of Section \ref{S:equilibrium_simple} define
\begin{equation}\label{E:new_ce_nsn}
 \textrm{CE}_{nsn}^{i} \dfn - \frac{\alpha_i}{2}\wh{\Pi}^2,\qquad i\in\cbra{I,U},
\end{equation}
as the certainty equivalents absent private information for the allocations in \eqref{E:pareto_endow}.

\begin{proposition}\label{prop: welfare_nice_endow}
In the PI equilibrium, with $\wh{y}$ from Proposition \ref{prop:one_d_n}
\begin{equation*}
\begin{split}
\cexaio{I} &=  \textrm{CE}_{nsn}^{I}  + \frac{\alpha_I}{2}\log\left(1 + \frac{\consa p_I(1+p_I) + \wh{y}^2}{\consa (1+p_I)(1+2\wh{y})}\right),\\
\cexaio{U} &=  \textrm{CE}_{nsn}^{U}  + \frac{\alpha_U}{2} \log\left(1 + \frac{\consb^2(\consa p_I+ (1+\wh{y})^2)}{(1-\consb)^2\consa (1+2\wh{y})^2} \right).
\end{split}
\end{equation*}
In the PT equilibrium
\begin{equation*}
\begin{split}
\cexao{I} &= \textrm{CE}_{nsn}^{I}  + \frac{\alpha_I}{2}\log\left(1 + \frac{(1-\consb)^2\consa p + (1+p_I)(\consb+\consa p_I)^2}{\consa(1 + \consb p_I + \consa p_I (1+p_I))^2}\right),\\
\cexao{U} &= \textrm{CE}_{nsn}^{U}  + \frac{\alpha_U}{2}\log\left(1 +\frac{\consb^2(1+\consa p_I)}{\consa(1 + \consb p_I + \consa p_I(1+p_I))^2}\right).
\end{split}
\end{equation*}
\end{proposition}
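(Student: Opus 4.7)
The plan is to use the tower property, computing each agent's interim certainty equivalent first by conditioning on her information and integrating out $X$, and then the ex-ante CE by integrating out the remaining Gaussian signals via the standard quadratic-form Gaussian MGF.

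\textbf{Step 1 (Interim CE).} By independence of $X, \E_I, \E_N$ and standard Bayesian updating, $X \mid \F \sim N(m,v)$ with $(m,v) = (p_I G/(1+p_I),\, 1/(1+p_I))$ for the insider (since $X \perp H_k \mid G$), and $(m,v) = (p_{U,k} H_k/(1+p_{U,k}),\, 1/(1+p_{U,k}))$ for the uninformed, where $p_{U,k}$ is from \eqref{E:PU_pi_simple_n} or \eqref{E:pt_PU_simple_n}. Since $\mathcal{W}^{\wh{\psi}} = \wh{\Pi} p_k + \wh{\psi}(X-p_k)$, integrating out $X$ via the Gaussian MGF yields
$$-\log \condexpvs{e^{-\mathcal{W}^{\wh{\psi}}}}{\F} = \wh{\Pi} p_k + \wh{\psi}(m-p_k) - \tot \wh{\psi}^2 v.$$
For the PT agents and the PI-uninformed, the first-order condition $\wh{\psi} = (m-p_k)/v$ reduces this to $\wh{\Pi} p_k + (m-p_k)^2/(2v)$; for the PI-insider, $\wh{\psi}_{I,\iota}$ from \eqref{E:pi_psiI_simple_n} does not satisfy this FOC (because she internalizes impact), so the three-term expression above is retained.

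\textbf{Step 2 (Ex-ante via Gaussian MGF).} Substituting the affine pricing functions \eqref{E:pi_price_new_simple_n}, \eqref{E:pt_price_simple_n} and the optimal strategies of Theorem \ref{thm:pi_simple_n} and Proposition \ref{P:pt_equilibrium_simple_n} makes the interim CE a quadratic polynomial in a centered Gaussian vector $Y = (G, H_k)$ (insider) or $Y = H_k$ (uninformed). Taking
$$\textrm{CE}^{j}_{0-,k} = -\alpha_j \log \expvs{\exp(-\textrm{CE}^{j}_{0,k}/\alpha_j)}$$
and applying, for $Y \sim N(0,\Sigma)$,
$$\expvs{e^{-\tot Y^T A Y - b^T Y}} = \det(I + \Sigma A)^{-1/2}\,\exp\!\left(\tot b^T (\Sigma^{-1} + A)^{-1} b\right),$$
produces the $\log\det$ as the $\log(1+\cdots)$ term in the statement and a quadratic shift that combines with $-\wh{\Pi}\pnsn = \wh{\Pi}^2$ to form $\textrm{CE}_{nsn}^{j}$. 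The cancellation of the cross terms linear in $\wh{\Pi}$ is precisely where the Pareto-optimal endowment \eqref{E:pareto_endow} is used: without it, residual hedging-demand terms would remain.

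\textbf{Main obstacle.} The algebraic simplification in the PI case. The raw Gaussian-MGF output contains $\wh{y}, \wh{y}^2, \wh{y}^3$ intermixed with $p_I, \consa, \consb$, and collapsing them into the compact ratios $(\consa p_I(1+p_I) + \wh{y}^2)/(\consa(1+p_I)(1+2\wh{y}))$ and $\consb^2(\consa p_I + (1+\wh{y})^2)/((1-\consb)^2 \consa(1+2\wh{y})^2)$ requires repeated use of the cubic \eqref{E:cubic_alt_n} to eliminate higher powers of $\wh{y}$ (for instance, $(1+\wh{y})^2\wh{y}$ must be rewritten via \eqref{E:cubic_alt_n} in terms of $\consa p_I$ and $\consb$). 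The PT case is free of implicit relations but still tedious: substitute $p_U$ from \eqref{E:pt_PU_simple_n} and $m_g$ from Proposition \ref{prop:prices_simple}, then collect terms carefully to reach the stated form.
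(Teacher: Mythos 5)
Your proposal is correct and follows essentially the same route as the paper's own proof (Propositions \ref{P:PI_CEs_a} and \ref{P:PT_CEs_a}, translated in Remark \ref{R:nice_way}): compute the interim certainty equivalents as quadratic forms in the Gaussian signals — retaining the full three-term expression for the PI insider precisely because her demand does not satisfy the price-taking first-order condition — then integrate out via the Gaussian quadratic-form MGF and collapse the result using the cubic \eqref{E:cubic_alt_n}. The only organizational difference is that the paper first changes measure to $\QQ_0$ and de-means all quantities, so that the endowment factor $\condexpvs{e^{-\wh{\Pi}X}}{\cdot}$ peels off cleanly as $e^{-\textrm{CE}^{j}_{nsn}/\alpha_j}$ and the linear term $b$ in your MGF formula vanishes identically, rather than having to be cancelled by hand against $\wh{\Pi}\pnsn$ as in your Step 2.
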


\subsection*{Utility and the insider's signal precision} Though not modeled, it presumably costs effort, time and/or money for the insider to obtain and refine her signal. In fact, one of the central questions in the literature (see among others \cite{KacPag19} and \cite{KacNosSun23}) is whether traders who have the ability/resources to obtain a private signal should actually pay the cost and obtain it. To answer this question, one should examine whether the benefits of the  private signal (as measured by utility) are increasing with respect to the quality of a signal (as measured by signal precision). This is of course connected with the cost, in that a better signal is normally linked to a higher cost.

In the PI equilibrium, using Proposition \ref{prop: welfare_nice_endow} for fixed $\kappa,\lambda$, it suffices to study the map
\begin{equation}\label{E:fpimon}
    p_I \to \phi_{\iota}(p_I) \dfn \frac{\consa p_I(1+p_I) + \wh{y}(p_I)^2}{\consa (1+p_I)(1+2\wh{y}(p_I))},
\end{equation}
where $\wh{y} = \wh{y}(p_I)$ is the unique positive solution of \eqref{E:cubic_alt_n}. Numerically, this is seen to be increasing in $p_I$ by randomly sampling $\consa > 0$, $\consb \in (0,1)$, solving for $\wh{y}(p_I)$, and then plotting $p_I \to \phi_{\iota}(p_I)$.  However, we offer an analytic proof in the following theorem.
\begin{theorem}\label{T:pimono}
    For fixed $\consa > 0$ and $\consb \in (0,1)$ the map $\phi_{\iota}$ defined in \eqref{E:fpimon} is strictly increasing in $p_I$.  Therefore, $\cexaio{I}$ is strictly increasing in the precision $p_I$.
\end{theorem}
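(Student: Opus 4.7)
The plan is implicit differentiation on the cubic combined with systematic use of the cubic identity itself. Viewing $\hat y = \hat y(p_I)$ as the positive root of $F(y, p_I) = 0$ where $F$ is the polynomial on the right of \eqref{E:cubic_alt_n}, I first observe that every term of $F_{p_I}$ is positive (so $F_{p_I} > 0$), and that $F_y(\hat y, p_I) < 0$ because $F(0, p_I) = 1 + \kappa p_I/\beta > 0$, $F(y, p_I) \to -\infty$ as $y \to \infty$, and $\hat y$ is the unique positive root by Proposition \ref{prop:one_d_n}; hence $F$ transitions from positive to negative at $\hat y$. Implicit differentiation yields $\hat y'(p_I) = -F_{p_I}/F_y > 0$. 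Since $\cexaio{I} - \textrm{CE}_{nsn}^I = (\alpha_I/2)\log(1 + \phi_\iota)$ by Proposition \ref{prop: welfare_nice_endow}, strict monotonicity of $\cexaio{I}$ is equivalent to $\phi_\iota'(p_I) > 0$.

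Writing $\phi_\iota = N/D$ with $N = \kappa p_I(1+p_I) + \hat y^2$ and $D = \kappa(1+p_I)(1+2\hat y)$, I expand $N'D - ND'$ and group by $\hat y'$. The inequality $\phi_\iota' > 0$ reduces to
\[ (1+2\hat y)\bigl[\kappa(1+p_I)^2 - \hat y^2\bigr] + 2(1+p_I)\,\hat y'\,\bigl[\hat y(1+\hat y) - \kappa p_I(1+p_I)\bigr] > 0. \]
Clearing $\beta(1+p_I)$ in the cubic yields the identity $(1+\hat y)^2[(1-\beta)\hat y - \beta(1+p_I)] = \kappa p_I(1+p_I)[(1-\beta)\hat y + 1]$, and using this to substitute for $\kappa p_I(1+p_I)$ in the second bracket above, a short direct computation gives
\[ \hat y(1+\hat y) - \kappa p_I(1+p_I) = \frac{\beta(1+\hat y)\bigl[1 + 2\hat y + p_I(1+\hat y)\bigr]}{(1-\beta)\hat y + 1} > 0. \]
Combined with $\hat y' > 0$, the second summand of the reduced inequality is always strictly positive.

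The main obstacle is handling the first summand $(1+2\hat y)[\kappa(1+p_I)^2 - \hat y^2]$, which can be negative (concretely, whenever $\hat y^2 > \kappa(1+p_I)^2$, which occurs for sufficiently small $\kappa$). To close the argument I plan to substitute the explicit $\hat y' = -F_{p_I}/F_y$, clear the common (strictly positive) denominator, and reapply the cubic identity once more to eliminate every residual occurrence of $\kappa p_I(1+p_I)$. The goal is to reduce the problem to a polynomial inequality in $(\hat y, p_I, \beta, \kappa)$ subject to $\beta \in (0,1)$ and $\kappa, p_I > 0$, which I expect to rewrite manifestly as a sum of products of positive quantities (exploiting $\hat y > 0$, $1+\hat y > 0$, $(1-\beta)\hat y - \beta(1+p_I) > 0$, and squared terms). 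The calculation is long but mechanical; no additional structural insight beyond the cubic identity appears to be required, and monotonicity of $\cexaio{I}$ follows immediately upon establishing $\phi_\iota' > 0$.
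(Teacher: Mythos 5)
Your setup is correct and follows essentially the same route as the paper's own proof: implicit differentiation of the cubic to get $\wh{y}'(p_I)>0$, the quotient-rule decomposition of $\phi_{\iota}'$ into the two summands you display, and the use of the cubic identity to show $\wh{y}(1+\wh{y})-\consa p_I(1+p_I)>0$ (a fact the paper also exploits in the proof of Proposition \ref{P:gamma_I_asympt_n}). I checked your reduction of $N'D-ND'$ and your closed form for $\wh{y}(1+\wh{y})-\consa p_I(1+p_I)$; both are right, and so is the sign of $F_{p_I}$ (for $F_y(\wh{y},p_I)<0$ you should cite the computation in the proof of Proposition \ref{prop:one_d_n}, which rules out a degenerate root, rather than the sign-change argument alone).

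The genuine gap is that you stop exactly where the difficulty begins. You correctly observe that the first summand $(1+2\wh{y})\bigl[\consa(1+p_I)^2-\wh{y}^2\bigr]$ can be negative, and then you only \emph{plan} to substitute $\wh{y}'=-F_{p_I}/F_y$, clear denominators, and ``expect'' the result to organize itself into a sum of products of positive quantities. That expectation is the entire content of the theorem, and the claim that ``no additional structural insight beyond the cubic identity appears to be required'' is not borne out by the paper's argument. The paper needs one further idea to make the positivity manifest: with $c=(1-\consb)/\consb$, write $\wh{y}=(1+p_I)/c+z$ with $z>0$ (your inequality $(1-\consb)\wh{y}-\consb(1+p_I)>0$ in disguise), and then \emph{invert} the cubic to express $\consa$ as an explicit positive rational function of $(p_I,c,z)$. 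This converts the constrained problem (a polynomial inequality restricted to the root locus of the cubic) into an unconstrained one over the positive orthant in $(p_I,c,z)$, and only in these variables does the cleared numerator become a sixth-degree polynomial in $z$ whose coefficients can be checked to be positive term by term. If you instead clear denominators in $(\wh{y},p_I,\consb,\consa)$ and apply the cubic relation ad hoc, there is no guarantee the expression decomposes into manifestly positive pieces, and you have not exhibited such a decomposition. Until the final polynomial and its positive-coefficient (or sum-of-positive-products) form are actually produced, the proof is incomplete at its decisive step.
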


Alternatively, in the PT equilibrium, one must study the map
\begin{equation*}
    p_I \to \phi(p_I) \dfn \frac{(1-\consb)^2\consa p_I + (1+p_I)(\consb+\consa p_I)^2}{\consa(1 + \consb p_I + \consa p_I (1+p_I))^2}.
\end{equation*}
As $\phi(0) = \consb^2/\consa$ and $\phi(\infty) = 0$, this map is clearly not increasing. In fact, it is not monotonic because
\begin{equation*}
    \phi'(0) = 1-\consb^2 + \frac{\consb^2}{\consa}(1-2\consb).
\end{equation*}
When $\consb \leq 1/2$ (equivalently $\alpha_I \leq \alpha_U$ so the insider is less risk tolerant than the uninformed traders), $\phi$ is increasing at $0$. However, when the insider is relatively more risk tolerant ($\consb > 1/2$), $\phi$ will be decreasing at $0$ for $\consa=\alpha_I^2 p_N$ small enough (which can happen if the noise trader variance/volume is very large).

Interestingly, it is possible for the certainty equivalents with and without price impact to \textit{have the opposite monotonicity with respect to the signal precision}. This is pictured in Figure \ref{F:CE_vs_I}. We conclude that if the insider does not internalize her price impact, her utility may not increase in the signal's quality. But  when she internalizes her price impact, it is always beneficial for her to improve the quality of her private signal, as long as the cost of such improvement does not outweigh the corresponding utility increase.

\begin{figure}
\begin{center}
\includegraphics[height=4.50cm,width=10cm]{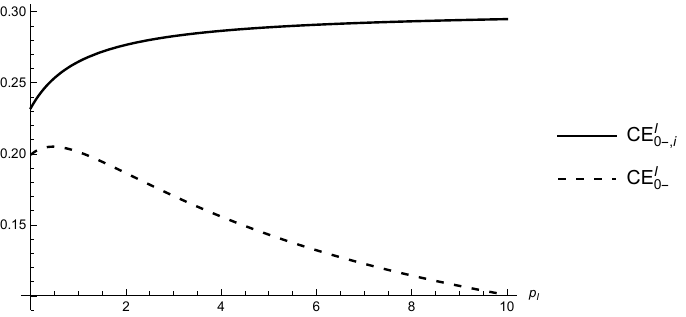}
\end{center}
\caption{Plot comparing $\cexaio{I}$ and $\cexao{I}$ as a function of $p_I$. Parameters are $\alpha_I = \alpha_U
 = .3, \mu_X = .5, P_X = 1, p_N = 1, \Pi = 0$.}
\label{F:CE_vs_I}
\end{figure}

\subsection*{PI and PT utility comparison}

Next, we focus on how price impact internalization affects both informed and uninformed trader's utility. In particular, we examine whether the internalization of price impact implies higher utility for the insider and the uninformed trader, when compared to the price taking case. Using Proposition \ref{prop: welfare_nice_endow} we readily get the relations
\begin{equation}\label{E:PI_better_IU}
\begin{split}
    \cexaio{I} \geq \cexao{I} &\quad \Longleftrightarrow \quad \frac{\consa p_I(1+p_I) + \wh{y}^2}{(1+p_I)(1+2\wh{y})} \geq \frac{(1-\consb)^2\consa p_I + (1+p_I)(\consb+\consa p_I)^2}{(1 + \consb p_I + \consa p_I (1+p_I))^2};\\
    \cexaio{U} \geq \cexao{U} &\quad \Longleftrightarrow \quad \frac{\consa p_I+ (1+\wh{y})^2}{(1-\consb)^2 (1+2\wh{y})^2} \geq \frac{1+\consa p_I}{(1 + \consb p_I + \consa p_I(1+p_I))^2}.
\end{split}
\end{equation}
Our first result shows that insider utility need not increase when she internalizes her price impact.

\begin{proposition}\label{prop: exante_I_pipt}
Both $\cexaio{I} > \cexao{I}$ and $\cexaio{I} < \cexao{I}$ are possible.
\end{proposition}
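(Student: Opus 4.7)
The plan is to work directly with the characterization in \eqref{E:PI_better_IU}, which reduces the comparison of $\cexaio{I}$ and $\cexao{I}$ to comparing the two rational functions
\[
A(p_I;\kappa,\beta) \dfn \frac{\kappa p_I(1+p_I) + \wh{y}^2}{(1+p_I)(1+2\wh{y})},\qquad B(p_I;\kappa,\beta) \dfn \frac{(1-\beta)^2\kappa p_I + (1+p_I)(\beta+\kappa p_I)^2}{(1+\beta p_I + \kappa p_I(1+p_I))^2},
\]
with $\wh{y} = \wh{y}(p_I;\kappa,\beta)$ the positive root from Proposition \ref{prop:one_d_n}. The inequality $\cexaio{I} > \cexao{I}$ is equivalent to $A > B$, and vice-versa. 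The strategy is to exhibit two parameter regimes, one in each direction, relying only on continuity of $\wh{y}$ in $(\kappa,p_I)$ (which follows from the cubic \eqref{E:cubic_alt_n} having a unique and simple positive root) plus two boundary evaluations.

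For the direction $\cexaio{I} > \cexao{I}$, I would evaluate at $p_I = 0$. Setting $p_I=0$ in \eqref{E:cubic_alt_n} collapses the equation to $(1+y)^2\bigl(1 - \tfrac{(1-\beta)y}{\beta}\bigr) = 0$, so $\wh{y}(0) = \beta/(1-\beta)$. A direct substitution yields
\[
A(0) = \frac{\beta^2}{1-\beta^2},\qquad B(0) = \beta^2,\qquad A(0)-B(0) = \frac{\beta^4}{1-\beta^2} > 0.
\]
By joint continuity of $A,B$ in $p_I$, the strict inequality $A > B$ persists for all sufficiently small $p_I > 0$, giving a concrete family of parameters for which the PI equilibrium is utility-improving.

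For the reverse direction $\cexaio{I} < \cexao{I}$, I would fix $p_I > 0$ and $\beta \in (0,1)$ and send $\kappa \downarrow 0$. In this limit the cubic reduces to $(1+y)^2\bigl(1 - \tfrac{(1-\beta)y}{\beta(1+p_I)}\bigr)=0$, hence $\wh{y} \to \wh{y}_* \dfn \beta(1+p_I)/(1-\beta)$ continuously (the root is simple). Using $1+2\wh{y}_* = (1+\beta(1+2p_I))/(1-\beta)$, one finds
\[
A \;\longrightarrow\; \frac{\beta^2(1+p_I)}{(1-\beta)\bigl(1+\beta(1+2p_I)\bigr)},\qquad B \;\longrightarrow\; \beta^2(1+p_I),
\]
so the ratio tends to $1/[(1-\beta)(1+\beta(1+2p_I))]$. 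Expanding, $(1-\beta)(1+\beta(1+2p_I)) = 1 + 2\beta p_I - \beta^2(1+2p_I)$, which strictly exceeds $1$ precisely when $\beta < 2p_I/(1+2p_I)$. I would therefore choose, for example, a fixed $p_I > 0$ and $\beta$ much smaller than $2p_I/(1+2p_I)$: then $A/B \to L < 1$, and by continuity $A < B$ for all sufficiently small $\kappa > 0$. This produces a concrete family of parameters where the PT equilibrium strictly dominates.

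The only non-routine step is the continuity of $\wh{y}$ at the boundary $\kappa = 0$; this is handled by noting that in the limiting cubic, $\wh{y}_*$ is a simple root (distinct from the double root at $y=-1$), so the implicit function theorem applies. The rest is elementary algebra after making the two boundary substitutions, and the key observation is simply that $A(0) > B(0)$ whereas $\lim_{\kappa\to 0} A < \lim_{\kappa \to 0} B$ under the explicit parameter restriction on $\beta$.
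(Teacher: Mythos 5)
Your first regime is correct and gives a clean, self-contained argument that $\cexaio{I} > \cexao{I}$ is possible: at $p_I=0$ the cubic \eqref{E:cubic_alt_n} degenerates to $(1+y)^2\bigl(1-(1-\beta)y/\beta\bigr)=0$, the positive root $\beta/(1-\beta)$ is simple (so the implicit function theorem gives continuity of $\wh{y}$), and $A(0)-B(0)=\beta^4/(1-\beta^2)>0$ persists for all small $p_I>0$. This is consistent with Proposition \ref{prop: nosigint_simple}, and it is more explicit than the paper's own proof, which for this proposition appeals to Figure \ref{F:I_welfare_comparision} and defers the analytics to Proposition \ref{P:gamma_I_asympt_n}.

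The second regime, however, rests on a computational error that is fatal. With $p_I>0$ and $\beta\in(0,1)$ fixed and $\kappa\downarrow 0$, the limit of $B$ is not $\beta^2(1+p_I)$: the denominator $(1+\beta p_I+\kappa p_I(1+p_I))^2$ tends to $(1+\beta p_I)^2\neq 1$, so in fact
\[
B\;\longrightarrow\;\frac{\beta^2(1+p_I)}{(1+\beta p_I)^2}.
\]
With the corrected limit, $A_*/B_* = (1+\beta p_I)^2/\bigl[(1-\beta)(1+\beta(1+2p_I))\bigr]$, and the numerator minus the denominator equals $\beta^2(1+p_I)^2>0$. Hence $A_*>B_*$ for \emph{every} fixed $\beta\in(0,1)$ and $p_I>0$, so by continuity $A>B$ for all small $\kappa$: your second regime always yields $\cexaio{I}>\cexao{I}$, never the reverse. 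The regime cannot be repaired by a different choice of $\beta$, because in the original parameters $\alpha_U^2 p_N=\kappa(1-\beta)^2/\beta^2\to 0$ as $\kappa\to 0$ with $\beta$ fixed, whereas Proposition \ref{P:gamma_I_asympt_n} shows the reversal $\cexaio{I}<\cexao{I}$ requires $\alpha_U^2 p_N>(1+p_I)/p_I$, i.e.\ condition \eqref{E:less_CE_in_PI}. To produce the missing direction you must send $\kappa$ and $\beta$ to zero \emph{jointly}, keeping $\kappa/\beta^2$ bounded away from zero --- concretely, fix $\alpha_U,p_N,p_I$ with $\alpha_U^2 p_N>(1+p_I)/p_I$ and let $\alpha_I\to 0$. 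There the leading term of $\cexaio{I}-\cexao{I}$ is of order $\alpha_I^3$ with sign determined by $1+p_I-\alpha_U^2p_Ip_N$, which is exactly the expansion the paper carries out in Proposition \ref{P:gamma_I_asympt_n}; some version of that third-order computation appears unavoidable for this half of the claim.
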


\begin{proof}[Proof of Proposition \ref{prop: exante_I_pipt}]
Numerically, this is demonstrated in Figure \ref{F:I_welfare_comparision}. Indeed, in the joint combination of high $\alpha_U$ (uninformed close to risk neutrality) and low-to-moderate $p_I \in (0,2)$ (modest signal quality) utility may decrease.  Analytically, this will be shown in Proposition \ref{P:gamma_I_asympt_n}.
\end{proof}

\begin{figure}[ht!]
\begin{center}
\includegraphics[height=4.0cm,width=6cm]{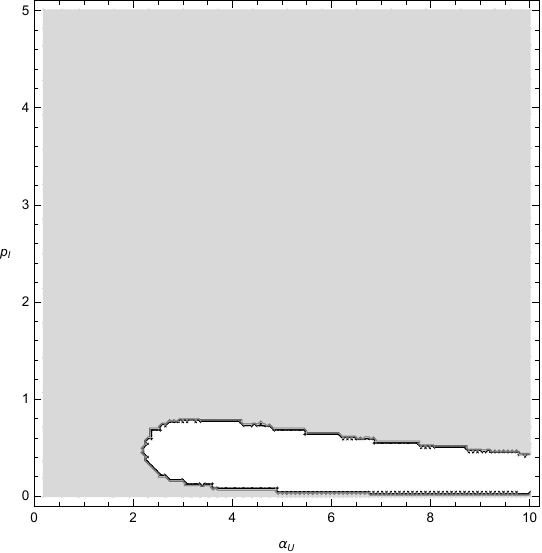}\ \includegraphics[height=4.0cm,width=6cm]{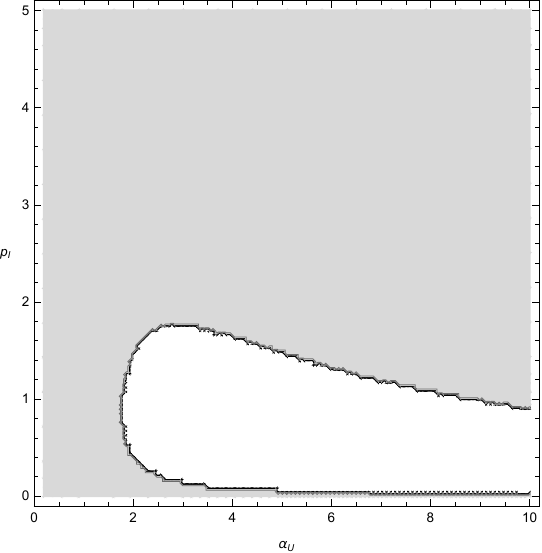}\\ \includegraphics[height=4.0cm,width=6cm]{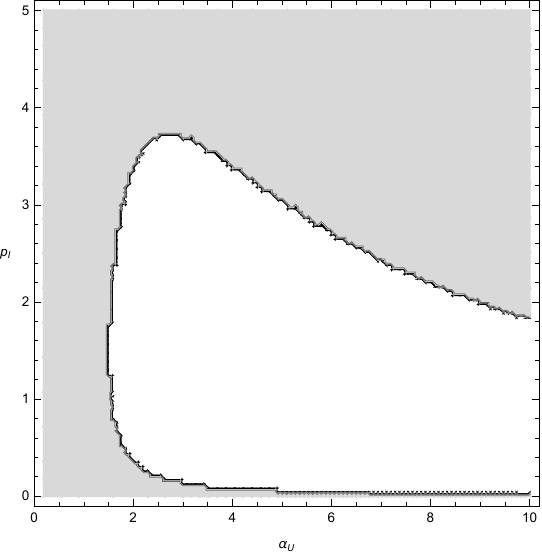}
\end{center}
\caption{Plot comparing $\cexaio{I}$ and $\cexao{I}$ as a function of $\alpha_U$ (x-axis) and $p_I$ (y-axis) for $p_N = 1$ and $\alpha_I = .2$ (upper left), $\alpha_I = .1$ (upper right), $\alpha_I = .05$ (lower). The shaded region is where $\cexaio{I} > \cexao{I}$. The white region is where $\cexaio{I} < \cexao{I}$}
\label{F:I_welfare_comparision}
\end{figure}

While no uniform statement can be made for the insider, our next result shows  for the uninformed trader, at the ex-ante level the insider's internalization of price impact is always beneficial.

\begin{proposition}\label{prop: exante_U_pipt}
$\cexaio{U} \geq \cexao{U}$.
\end{proposition}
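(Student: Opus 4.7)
The plan is to verify the equivalent algebraic inequality recorded in \eqref{E:PI_better_IU}, namely
$$D(\wh{y}) \,:=\, A^2\bigl(\consa p_I + (1+\wh{y})^2\bigr) \,-\, (1+\consa p_I)(1-\consb)^2(1+2\wh{y})^2 \,\ge\, 0,$$
where $A := 1+\consb p_I+\consa p_I(1+p_I)$ and $\wh{y}$ is the positive root of the cubic \eqref{E:cubic_alt_n}. Writing $a=\consa$, $b=\consb$, $p=p_I$, $q=1+\wh{y}$ for brevity and expanding, the key structural observation is that $D$ is a \emph{quadratic} in $y$:
$$D(y) = \bigl[A^2 - 4(1+ap)(1-b)^2\bigr]y^2 + 2\bigl[A^2 - 2(1+ap)(1-b)^2\bigr]y + (1+ap)\bigl[A^2 - (1-b)^2\bigr].$$
Using the identity $A - (1-b) = (1+p)(b+ap)$, the constant term factors as $(1+ap)(1+p)(b+ap)\bigl(A+1-b\bigr) \ge 0$, but the higher-order coefficients have indefinite sign, so one must exploit the cubic.

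Next, I would rewrite \eqref{E:cubic_alt_n} in the symmetric form
$$A\,q^2 \,=\, (1-b)q^3 + ap(1+p)(q-1)(q+b),$$
equivalently $(1-b)q\bigl(q^2 - ap(1+p)\bigr) = (1+bp)q^2 + ap(1+p)b$, which in particular forces $q^2 > ap(1+p)$. The idea is to substitute this into $q\cdot D(\wh{y})$ so as to eliminate $\wh{y}^3$, leaving an expression of degree $\le 2$ in $\wh{y}$ whose coefficients are polynomials in $(a,b,p)$; these should reorganize into a manifestly non-negative combination of the positive factors $(b+ap)$, $(1+p)$, $A$ and $q^2-ap(1+p)$. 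In parallel, I would use the probabilistic identity $G_k = (1+p_{U,k})\mathrm{Var}\bigl(X - p_k(H_k)\bigr) - 1$, which follows from the variance decomposition $\mathrm{Var}(X - p_k) = \mathrm{Var}(\mu_{H_k} - p_k) + \sigma_{H_k}^2$, to guide the algebra: the claim reduces to
$$(1+p_{U,\iota})\,\mathrm{Var}\bigl(X - p_\iota(H_\iota)\bigr) \,\ge\, (1+p_U)\,\mathrm{Var}\bigl(X - p(H)\bigr),$$
capturing the intuition that under price impact the price is less reactive to the public signal (Proposition \ref{prop:price_impact_signal}), which inflates the mispricing variance sufficiently to overcome the loss in posterior precision $p_{U,\iota}<p_U$ (Proposition \ref{prop:impact_signal_noise}).

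The main obstacle is the tightness of the inequality. Numerical experiments in the regime $b, ap \ll 1$ show $D(\wh{y})$ scales like $(ap)^2$ rather than $ap$, ruling out coarse bounds that simply drop terms. Moreover, because the leading coefficient $A^2-4(1+ap)(1-b)^2$ of $D(y)$ can be negative, values of $y\ne \wh{y}$ may yield $D(y)<0$, so the argument cannot rely on a uniform estimate in $y$ and must use the cubic constraint in an essential way. I therefore anticipate the final proof to produce a direct algebraic identity expressing $D(\wh{y})$ as a sum of manifestly non-negative terms, obtained by polynomial reduction of $q\,D(y)$ modulo the cubic.
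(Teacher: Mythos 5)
Your reduction to the algebraic inequality in \eqref{E:PI_better_IU}, the factorization of the constant term of $D$, and both rewritings of the cubic (including the consequence $q^2>\consa p_I(1+p_I)$) are all correct, and your structural observations --- that the leading coefficient of $D$ can be negative and that $D(\wh{y})$ is tight of order $(\consa p_I)^2$ when $\consb,\consa p_I\ll 1$ --- correctly rule out any coarse bounding argument. But that is precisely why the proposal has a genuine gap: the decisive step, an explicit identity expressing $(1+\wh{y})\,D(\wh{y})$ after reduction modulo the cubic as a manifestly non-negative combination of $(\consb+\consa p_I)$, $(1+p_I)$, $A$ and $q^2-\consa p_I(1+p_I)$, is anticipated rather than exhibited. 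Reducing $qD(y)$ modulo the cubic still leaves a quadratic in $\wh{y}$ whose coefficient signs you have not determined, and the tightness you document means the existence of such a one-step non-negative certificate is essentially the whole content of the proposition; it cannot be taken on faith, and if the reduced coefficients are not all of one sign the scheme would need to be iterated or abandoned. As written, this is a plan for a proof, not a proof.

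For comparison, the paper attacks the same inequality from the dual direction: it writes the claim as $f(\wh{y})\geq \ell$ with $f(y)=(\consa p_I+(1+y)^2)/(1+2y)^2$ strictly decreasing from $f(0)=1+\consa p_I$ to $f(\infty)=1/4$, disposes of the case $\ell\leq 1/4$ immediately, and otherwise computes the explicit root $\check{y}$ of $f(y)=\ell$ and reduces the claim to $g(\check{y})<0$ for the cubic $g$, using the sign separation $g>0$ on $(0,\wh{y})$ and $g<0$ on $(\wh{y},\infty)$ from the proof of Proposition \ref{prop:one_d_n}; the final sign check is then delegated to symbolic software. So both routes terminate in a computer-assisted polynomial verification, but the paper specifies a concrete finite check and asserts its outcome, whereas you leave the outcome of your reduction open. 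If you actually carry out the reduction and display (or symbolically verify) the non-negative decomposition, your argument would be a legitimate alternative --- arguably cleaner, since it avoids introducing $\check{y}$ and the monotonicity of $f$ --- but until then the proposition is not proved.
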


Economic intuition for these results is given in Section \ref{S:demand_compare}.  Below, we provide explicit comparisons when investor risk aversions become very large or small.

\subsection*{Risk tolerance asymptotics}

It is rather complicated to precisely describe the parameter set that characterizes the order of insider's certainty equivalents $\cbra{(\consa, p_I, \consb) \such \cexaio{I} \geq \cexao{I}}$. However,  Figure \ref{F:I_welfare_comparision} suggests the situation may clarify if we consider asymptotics with respect to traders' risk tolerances, especially as $\alpha_I \to 0$. In the following proposition, we first consider when $\alpha_I = \alpha_U$ and then take the limit as the insider's risk tolerance goes to $0$.

\begin{proposition}\label{P:gamma_I_asympt_n}
Fix $p_I,p_N$. If $\alpha_I=\alpha_U$, then $\cexaio{I} \geq \cexao{I}$. However, for $\alpha_U$ fixed we have 
\begin{equation*}
    \lim_{\alpha_I\to 0} \frac{\cexaio{I}-\cexao{I}}{\alpha_I^3}  = \frac{d}{2}\frac{(1+p_I)^2(1+p_I-\alpha_U^2 p_I p_N)}{\alpha_U^2(\alpha_U^2p_I p_N + 1+p_I)}.
\end{equation*}
Thus, for small $\alpha_I$,  $\cexaio{I}\geq \cexao{I}$ if and only if $1+p_I - \alpha_U^2 p_I p_N > 0$.
\end{proposition}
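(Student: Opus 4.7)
The proposition has two independent parts; I treat them separately.

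\textbf{Part~1 (symmetric case $\alpha_I=\alpha_U$).} Here $\consb=\tfrac{1}{2}$, and I work from the first equivalence in \eqref{E:PI_better_IU}. Substituting $\consb=\tfrac{1}{2}$ into the cubic \eqref{E:cubic_alt_n} collapses it to
\begin{equation*}
(1+\wh{y})^{2}\left(1-\tfrac{\wh{y}}{1+p_I}\right) + \consa p_I(\wh{y}+2) = 0.
\end{equation*}
The second summand is strictly positive, so the first must be strictly negative, forcing $\wh{y}>1+p_I$; moreover, this cubic can be solved for $\wh{y}^{3}$ as a degree-two polynomial in $\wh{y}$ with coefficients depending on $\consa$ and $p_I$. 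Substituting $\consb=\tfrac{1}{2}$ into both ratios in \eqref{E:PI_better_IU}, clearing denominators, and using the cubic identity to eliminate $\wh{y}^{3}$-terms reduces the inequality $\cexaio{I}\geq\cexao{I}$ to a polynomial inequality in $\wh{y}$, $\consa$, $p_I$ whose sign can be read off from $\wh{y}>1+p_I>0$ and $\consa,p_I>0$.

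\textbf{Part~2 (asymptotic as $\alpha_I\to 0$).} I perform a regular perturbation expansion in the small parameter $\eta\dfn\consb=\alpha_I/(\alpha_I+\alpha_U)$, noting that $\consa = \alpha_I^{2}p_N = \alpha_U^{2}p_N\,\eta^{2}/(1-\eta)^{2}$. Rescaling $\wh{y}=\eta z$, the cubic \eqref{E:cubic_alt_n} becomes regular at $\eta=0$ with leading balance $z_{0}=1+p_I$, and the implicit function theorem produces
\begin{equation*}
z = (1+p_I) + \eta(1+p_I)(1 + \alpha_U^{2}p_N p_I) + O(\eta^{2}),
\end{equation*}
where the distinguished combination $\alpha_U^{2}p_I p_N$ appears naturally. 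Substituting this expansion into $\phi_\iota$ (see \eqref{E:fpimon}) and the PT counterpart $\phi$ from Proposition~\ref{prop: welfare_nice_endow}, then expanding as power series in $\alpha_I$, one finds that $\phi_\iota$ and $\phi$ agree at zeroth and first order in $\alpha_I$ (both equal $p_I+(1+p_I)/(\alpha_U^{2}p_N)$ at leading order); their first nontrivial difference appears at order $\alpha_I^{2}$ and carries the sign-determining factor $1+p_I-\alpha_U^{2}p_I p_N$. Linearising $\log(1+\phi)$ around the common leading value and multiplying by the prefactor $\alpha_I/2$ in $\cexaio{I}-\cexao{I}=(\alpha_I/2)[\log(1+\phi_\iota)-\log(1+\phi)]$ produces the claimed $O(\alpha_I^{3})$ limit. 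The if-and-only-if statement is then immediate, since every other factor in the limit expression is positive.

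\textbf{Main obstacle.} The main work lies in the perturbation bookkeeping of Part~2: since $\phi_\iota$ and $\phi$ coincide through first order in $\alpha_I$, the $z_{1}$-term of $\wh{y} = \eta z_{0} + \eta^{2}z_{1} + O(\eta^{3})$ feeds nontrivially into the first nonzero order of $\phi_\iota-\phi$, and one must carefully collect all $O(\alpha_I^{2})$ contributions from the numerators and denominators of both ratios to expose the common factor $1+p_I-\alpha_U^{2}p_I p_N$. No deep analytic input is required; Part~1 is comparatively short once the simplified cubic is used to reduce polynomial degree.
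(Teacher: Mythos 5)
Your Part~2 follows essentially the same route as the paper (perturbative expansion of $\wh{y}$ in $\alpha_I$, substitution into the two ratios, linearization of the logarithm around the common limit), but the expansion you propose is one order short. Because $\consa=\alpha_I^2p_N$ sits in the denominator of $\phi_\iota$ through the term $\wh{y}^2/(\consa(1+p_I)(1+2\wh{y}))$, the coefficient of $\alpha_I^2$ in $\phi_\iota$ depends on the coefficient of $\alpha_I^4$ in $\wh{y}^2$, hence on the coefficient of $\alpha_I^3$ in $\wh{y}$. Your ansatz $\wh{y}=\eta z_0+\eta^2 z_1+O(\eta^3)$ pins down $\wh{y}$ only through order $\alpha_I^2$, and your ``Main obstacle'' paragraph explicitly claims $z_1$ is the last term needed --- that is false. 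The paper accordingly works with a third-order approximant $\wt{y}=(\alpha_I/\alpha_U)(1+p_I)\bigl(1+\alpha_Ip_Ip_N(\alpha_U-\alpha_Ip_I)\bigr)$ and verifies $\limsup_{\alpha_I\to0}\alpha_I^{-4}|\wh{y}-\wt{y}|<\infty$; without the $\alpha_I^3$ coefficient (and without some control of the remainder, which you do not address) your bookkeeping cannot determine the limiting constant.

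In Part~1, the reduction of \eqref{E:cubic_alt_n} at $\consb=\tfrac12$ and the resulting bound $\wh{y}>1+p_I$ are correct and genuinely useful, but the decisive step is asserted rather than proved. With $\consb=\tfrac12$ the right-hand ratio in \eqref{E:PI_better_IU} contains no $\wh{y}$, so clearing denominators leaves a \emph{quadratic} inequality in $\wh{y}$ --- there are no $\wh{y}^3$ terms to eliminate with the cubic, which suggests you have not actually run the computation. More importantly, writing $f(y)=(\consa p_I(1+p_I)+y^2)/(1+2y)$, the claim is $f(\wh{y})\ge k$ for an explicit $k=k(\consa,p_I)$, and $f$ is not monotone on $(1+p_I,\infty)$ once its critical point $y_0=\tfrac12(\sqrt{1+4\consa p_I(1+p_I)}-1)$ exceeds $1+p_I$; so the sign cannot simply be ``read off'' from $\wh{y}>1+p_I$. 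The route can be closed --- split on whether $y_0\le 1+p_I$ (then show $f(1+p_I)>k$) or $y_0>1+p_I$ (then $f(\wh{y})\ge f(y_0)=y_0>1$, and one checks $k<1$) --- but those polynomial inequalities must actually be verified. Be aware that the paper itself does not produce a clean hand argument at this point and instead reduces to a check over $(\consa,p_I)$ performed by software, so this is precisely the step where an unsubstantiated ``can be read off'' is not acceptable.
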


Proposition \ref{P:gamma_I_asympt_n} implies the risk aversion is a crucial parameter for the certainty equivalents' comparisons. Indeed, when traders have the same risk aversion internalizing price impact always increases traders' utility. On the other hand, fixing $\alpha_U,p_N$, insider utility may be lower than in the price-taking case when she is very risk averse and the following structure condition holds
\begin{equation}\label{E:less_CE_in_PI}
\alpha_U^2p_N > \frac{1}{p_I}+1.
\end{equation}
For example, this condition holds if the uninformed trader is sufficiently risk tolerant, or/and the noise traders' demand (approximated by its variance) is sufficiently low.

\subsection*{Utility absent private information} We conclude our utility analysis proving that if one turns off the asymmetric information channel, and focuses solely on effects due to internalizing of price impact, then insider utility increases at the interim level. As previously mentioned, this setting corresponds to when a (large) risk averse agent acts strategically, even when she does not have access to a private signal, and when the uninformed traders represent the mass of all other rational agents who do not possess  market power. In sum, we are measuring the effects heterogeneity in the internalization of price impact.

Recall from Proposition \ref{P:no_signal_simple_n} that the no-signal equilibrium corresponds to taking $p_I \to 0$.  Using Proposition \ref{prop: nosigint} in Appendix \ref{AS:welfare_2}, we obtain the following proposition.

\begin{proposition}\label{prop: nosigint_simple}
As $p_I \to 0$ we obtain the almost sure limits for the insider
\begin{equation*}
\begin{split}
\lim_{p_I\to 0} \ceinto{I}(G,Z_N) &=  \textrm{CE}_{nsn}^{I}  + \frac{\consb^2}{2\alpha_I}Z_N^2,\quad \lim_{p_I\to 0} \ceintio{I}(G,Z_N) = \textrm{CE}_{nsn}^{I}  + \frac{\consb^2}{2(1-\consb^2)\alpha_I}Z_N^2.
\end{split}
\end{equation*}
Therefore, insider interim utility always increases when internalizing price impact. For the uninformed trader we obtain almost surely
\begin{equation*}
\begin{split}
\lim_{p_I\to 0} \ceinto{U}(H) &= \textrm{CE}_{nsn}^{I}  + \frac{\alpha_U\consb^2}{2\alpha_I^2}Z_N^2,\quad \lim_{p_I\to 0} \ceintio{U}(H_{\iota}) =  \textrm{CE}_{nsn}^{I}  + \frac{\alpha_U\consb^2}{2\alpha_I^2(1-\consb^2)^2}Z_N^2,
\end{split}
\end{equation*}
and hence interim utility also always increases when the insider internalizes price impact.
\end{proposition}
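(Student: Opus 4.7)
The plan is to obtain both limits by first writing down explicit interim certainty equivalent formulas valid for arbitrary $p_I > 0$, and then taking the limit using Proposition \ref{P:no_signal_simple_n}. The starting observation is that the risk-adjusted wealth $\wh{\mathcal{W}}_i = \wh{\Pi}p + \wh{\psi}_i(X-p)$ is affine in $X$ with deterministic (given the relevant information) coefficient $\wh{\psi}_i$, so the Gaussian MGF does all of the work once one identifies the conditional law of $X$.

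Specifically, in the insider's case, since $Z_N \perp (X,G)$ and the posterior $X\mid G \sim N\!\left(\tfrac{p_I G}{1+p_I},\tfrac{1}{1+p_I}\right)$, the Gaussian MGF yields
\begin{equation*}
\ceinto{I}(G,Z_N) \;=\; \alpha_I\!\left[(\wh{\Pi}-\wh{\psi}_I)p(H) + \frac{\wh{\psi}_I\, p_I\, G}{1+p_I} - \frac{\wh{\psi}_I^{\,2}}{2(1+p_I)}\right],
\end{equation*}
and similarly for $\ceintio{I}$ with $p_\iota, \wh{\psi}_{I,\iota}$ replacing $p,\wh{\psi}_I$. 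For the uninformed trader, using $X\mid H \sim N\!\left(\tfrac{p_U H}{1+p_U},\tfrac{1}{1+p_U}\right)$ (and its PI analogue) gives the same structure with $(\wh{\psi}_U, p_U, H)$ in place of $(\wh{\psi}_I, p_I, G)$.

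Next I take $p_I\to 0$. By Proposition \ref{P:no_signal_simple_n}, $(p,\wh{\psi}_I,\wh{\psi}_U)$ and $(p_\iota,\wh{\psi}_{I,\iota},\wh{\psi}_{U,\iota})$ converge a.s.\ to the no-signal values given there. The only delicate point is that $G = X + \E_I/\sqrt{p_I}$ diverges; however $p_I G = p_I X + \sqrt{p_I}\,\E_I \to 0$ a.s., so the informational term $\wh{\psi}_I p_I G/(1+p_I)$ vanishes. For the uninformed, one must check $p_U H/(1+p_U)\to 0$: from \eqref{E:constants_nice_simple} $p_U = O(p_I^2)$ and $\Lambda = O(1/p_I)$, so both $p_U\Lambda Z_N$ and $p_U G$ vanish a.s., and the same holds in the PI case via \eqref{E:PU_pi_simple_n} and \eqref{E:pi_signal_simple_n}. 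Passing to the limit therefore reduces the formulas to $\alpha_i\!\left[(\wh{\Pi}-\wh{\psi})p - \tfrac{1}{2}\wh{\psi}^{\,2}\right]$ evaluated at the respective no-signal equilibrium.

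The last step is routine algebra: substitute the no-signal values from Proposition \ref{P:no_signal_simple_n}, use $\pnsn = -\wh{\Pi}$ and \eqref{E:new_ce_nsn}, and simplify. The PT substitution uses $(\wh{\Pi}-\wh{\psi}_{ns},\,p_{ns}) = (\consb Z_N/\alpha_I,\,-\wh{\Pi}+\consb Z_N/\alpha_I)$, yielding the $\consb^2 Z_N^2/(2\alpha_I)$ increment; the PI substitution uses $(\wh{\Pi}-\wh{\psi}_{ns,I,\iota},\,p_{ns,\iota}) = \bigl(\tfrac{\consb}{1+\consb}Z_N/\alpha_I,\,-\wh{\Pi}+\tfrac{\consb}{1-\consb^2}Z_N/\alpha_I\bigr)$, and the cross terms combine via $-\tfrac{1}{1-\consb}+\tfrac{1}{2} = -\tfrac{1+\consb}{2(1-\consb)}$ to produce the factor $1/(1-\consb^2)$. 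The uninformed calculations are structurally identical with $\alpha_U$ in front and $\wh{\psi}_{ns,U,\iota} = \wh{\Pi} - \tfrac{\consb}{1-\consb^2}Z_N/\alpha_I$ in the PI case, giving the factor $1/(1-\consb^2)^2$. Since $\consb \in (0,1)$ forces both $1/(1-\consb^2)>1$ and $1/(1-\consb^2)^2>1$, the PI interim values strictly exceed the PT values on $\{Z_N\neq 0\}$, which has full measure. The main obstacle is bookkeeping — verifying that the two "information" terms really do vanish in the limit (especially the uninformed's $p_U H$, where divergent and vanishing factors compete) and managing the algebraic simplification without sign errors.
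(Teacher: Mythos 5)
Your proof is correct, and it reaches the paper's conclusion by a mildly different route. The paper proves the general version (Proposition \ref{prop: nosigint}) by first deriving closed-form interim certainty equivalents valid for every $p_I>0$ (Propositions \ref{P:PI_CEs_a} and \ref{P:PT_CEs_a}, equations \eqref{E:pi_int_CES_a} and \eqref{E:pt_int_CES_a}), expressed through the equilibrium constants $\wh{y},\ \consa p_I,\ R,\ \rtoi,\ \rtou$, and then sends those constants to their limits ($\wh{y}\to\consb/(1-\consb)$, $R\to 1$, $\consa p_I\to 0$). You instead write the interim certainty equivalent generically as $\alpha_i\bigl[(\wh{\Pi}-\wh{\psi})p+\wh{\psi}\,\EE[X\mid\cdot]-\tfrac12\wh{\psi}^2\mathrm{Var}(X\mid\cdot)\bigr]$ via the conditional Gaussian MGF, and pass to the limit in the equilibrium \emph{objects} $(\wh{\psi},p)$ using Proposition \ref{P:no_signal_simple_n}, after verifying that the information terms $p_IG/(1+p_I)$ and $p_UH/(1+p_U)$ (and their PI analogues) vanish almost surely; your order counting ($p_U=O(p_I^2)$ against $\Lambda=O(1/p_I)$, so $p_U H=O(p_I)+O(p_I^{3/2})$) is the right delicate point and is handled correctly. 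Your route is shorter and more self-contained in that it never needs the full closed-form CE expressions; the paper's route pays that cost once because the same formulas are reused for the ex-ante comparisons elsewhere. I checked the algebraic endgame: the PT substitution gives $-\tfrac12\wh{\Pi}^2+\consb^2Z_N^2/(2\alpha_I^2)$, and the PI cross-term factor $\tfrac{1+\consb}{2(1-\consb)}\cdot\tfrac{\consb^2}{(1+\consb)^2}=\tfrac{\consb^2}{2(1-\consb^2)}$ is correct. One cosmetic note: your derivation yields $\textrm{CE}_{nsn}^{U}$ in the uninformed limits, consistent with the general Proposition \ref{prop: nosigint}; the superscript $I$ appearing there in the statement of Proposition \ref{prop: nosigint_simple} is a typo.
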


\begin{remark}\label{rem: nosigint}
Two quite surprising consequences stem from the above proposition. First, that any utility gain the insider obtains in PT equilibrium over the PI equilibrium is attributable solely to the presence of the private signal, as when there is no private signal, internalization of price impact is always beneficial. Second, assuming no private signal and that $I,U$ have the same risk tolerance, (which implies $\consb = 1/2$) we have the almost sure order of interim certainty equivalents
\begin{equation*}
    \ceintio{U} > \ceintio{I} > \ceinto{U} =\ceinto{I}.
\end{equation*}
Amazingly, not only it is better for the uninformed trader when the insider internalizes her price impact, the uninformed trader's utility actually exceeds that of the insider. We explain the mechanism behind these predictions in the next section.
\end{remark}

\section{Equilibrium Structure}\label{S:demand_compare}

In this section, we analyze and compare the equilibrium quantities (allocation and prices) in order to infer the economic intuition behind the model's predictions on the effects of price-impact internalization (``internalization'') and insider's private signal (``private information''). As will be shown, internalization and private information may have competing affects on the insider's demand, and hence equilibrium quantities.  Broadly, internalization has the effect of dampening the position size, while private information may increase the insider's equilibrium allocation of risk.

We will compare when the insider (1) takes prices as given versus internalizing price impact, and (2) when the insider has a private signal versus when there is no private signal.  This leads to four cases. The first two sections deal with the effect of information asymmetry, starting with the price taking (PT) equilibrium and then moving on to the price impact (PI) equilibrium. The last two sections deal with the effect of price impact internalization, staring when there is no private information, and then considering the private signal case.  In each section we compare the equilibrium structure of demands and prices. Throughout we use the notation in \eqref{E:lambda_alpha_R_def_n}.

We start with the PT equilibrium, identifying the effect of the signal on the demands, as well as the price. Using \eqref{E:new_p_nsn_simple_n}, \eqref{E:pt_price_simple_n},  \eqref{E:pt_psiI_simple_n} and \eqref{E:no_signal_competi_simple_n} the insider's optimal demand functions satisfy
\begin{equation}\label{E:hatpsiI_decomp}
\wh{\psi}_I(g,z) - \wh{\psi}_{ns,I}(z) = \frac{(1-\consb)(p_I-p_U)}{1+\consb p_I + (1-\consb)p_U}\left(g-p_0- \frac{p_I(\consb p_I + (1-\consb)p_U) + p_U}{p_I(p_I-p_U)}\frac{z}{\alpha_I}\right).
\end{equation}
As \eqref{E:pt_PU_simple_n} implies $p_I>p_U$ we see that  (as expected) insider optimal demand relative to the no-signal case is increasing in the signal.  By contrast, insider relative demand is decreasing in the noise trader demand.

Another interesting effect concerns the outstanding supply, $\Pi = (\alpha_I + \alpha_U)\wh{\Pi} = -(\alpha_I + \alpha_U)p_0$. As both $G,Z_N$ have mean $0$, at the ex-ante level, the expected demand change is
\begin{equation}\label{E:hatpsiI_exp_diff_pt}
    \expvs{\wh{\psi}_I(G,Z_N)- \wh{\psi}_{ns,I}(Z_N)}  = \frac{(1-\consb)(p_I-p_U)}{1+\consb p_I + (1-\consb)p_U}\wh{\Pi} > 0,
\end{equation}
so that on average the insider's position increases due to the private signal when the supply is positive. Intuitively, a positive supply means agents are expected to buy the asset in equilibrium, and for the insider the private signal lowers the variance of the asset payoff (i.e. its risk) is lower. This increases her demand, implying ex-ante that she is willing to hold a larger position.

The private signal's affect on insider demand is transferred to the equilibrium price, as an increase in the insider's demand tends to increase the price as well. Indeed, from \eqref{E:pt_price_simple_n} and \eqref{E:no_signal_competi_simple_n} we find
\begin{equation*}
\begin{split}
p(h(g,z))- p_{ns}(z) &= \frac{\consb p_I + (1-\consb)p_U}{1+\consb p_I + (1-\consb)p_U}\bigg( g - p_0 + \consb\bigg(\frac{(1-\consb)(p_I-p_U)}{p_I(\consb p_I + (1-\consb)p_U)} - 1\bigg)\frac{z}{\alpha_I}\bigg).
\end{split}
\end{equation*}
This shows the equilibrium price is increasing in the insider signal, but is ambiguous in the noise as the factor in front of $z/\alpha_I$ can be positive or negative. At the ex-ante level
\begin{equation}\label{E:pt_price_decomp}
    \expvs{p(h(G,Z_N))-p_{ns}(Z_N)} = \frac{\consb p_I + (1-\consb)p_U}{1+\consb p_I + (1-\consb)p_U}\wh{\Pi} >0.
\end{equation}
Thus, the presence of the signal increases both the insider's expected demand and expected equilibrium price. Additionally, the equilibrium clearing condition \eqref{E:cc_psi_simple} implies  the private signal has the opposite effect on the uninformed trader's demand\footnote{The uninformed trader sees $Z_N$ (through the price) in the no-signal equilibrium, but does not see $Z_N$ in the PT equilibrium.  Therefore, we are not saying the uninformed trader sees $Z_N$ in the PT equilibrium and then adjusts her position accordingly. Rather, we are saying the effect of noise trading in the PT equilibrium is to increase the trade size of the uninformed trader over the no-signal equilibrium.}. Summarizing,

\begin{quote}
\textit{The presence of private information in the PT equilibria is expected to increase the insider's demand and the price and to decrease the uninformed trader's demand.}
\end{quote}

\subsection*{Private information effects when internalizing price impact}

The effects of private signal on equilibrium prices and demands in the PI equilibrium are similar as in the PT equilibrium. Using \eqref{E:pi_signal_simple_n}, Theorem \ref{thm:pi_simple_n} and Proposition \ref{P:no_signal_simple_n} we obtain
\begin{equation*}
\begin{split}
        \wh{\psi}_{I,\iota}(g,z) - \wh{\psi}_{ns,\iota,I}(z) &=  \frac{p_I}{1+2\wh{\yvals}}(g-p_0) + \left(\frac{\consb}{1+\consb} - \frac{\wh{\yvals}}{1+2\wh{\yvals}}\right)\frac{z}{\alpha_I}.
    \end{split}
\end{equation*}
As expected, the insider demand (relative to the no-signal case) is increasing with the private signal. Similarly, by noting the right side of \eqref{E:cubic_alt_n} is positive at $y=\consb/(1-\consb)$ one can use the arguments in the proof of Proposition \ref{prop:price_impact_signal} to show $\wh{y} > \consb/(1-\consb)$ and hence the coefficient in front of $z/\alpha_I$ is negative.  As in the PT equilibrium, at the ex-ante level the presence of signal increases the insider's demand because
\begin{equation}\label{E:hatpsiI_exp_diff_pi}
\expvs{\wh{\psi}_{I,\iota}(G,Z_N)- \wh{\psi}_{ns,\iota,I}(Z_N)}=\frac{p_I}{1+2\wh{y}}\wh{\Pi}.
\end{equation}
As for the equilibrium prices, using \eqref{E:pi_signal_simple_n}, \eqref{E:pi_price_new_simple_n} and Proposition \ref{P:no_signal_simple_n} we obtain
\begin{equation*}
    \begin{split}
        p_{\iota}(h_{\iota}(g,z)) - p_{ns,\iota}(z) &= \frac{p_I\wh{y}}{(1+p_I)(1+2\wh{y})}\left(g - p_0\right) + \left(\frac{\wh{y}(1+\wh{y})}{(1+p_I)(1+2\wh{y})} - \frac{\consb}{1-\consb^2}\right)\frac{z}{\alpha_I}.
    \end{split}
\end{equation*}
Therefore, due to the increased insider demand, the relative price change is increasing in the signal and outstanding supply. At the ex-ante level
\begin{equation*}
            \expvs{p_{\iota}(h_{\iota}(G,Z_N))-p_{ns,\iota}(Z_N)} =\frac{p_I\wh{y}}{(1+p_I)(1+2\wh{y})}\wh{\Pi},
\end{equation*}
which shows a positive difference. This is the same as in the PT case, and in fact, using \eqref{E:pt_price_decomp}, along with \eqref{E:prices_simple} and \eqref{E:mg_alt_compare} we obtain
\begin{equation}\label{E:pi_sig_px_diff}
    \begin{split}
        \expvs{p(h(G,Z_N))-p_{ns}(Z_N)}-\expvs{p_{\iota}(h_{\iota}(G,Z_N))-p_{ns,\iota}(Z_N)}= (m_g-m_{g,\iota})\wh{\Pi}.
    \end{split}
\end{equation}
Proposition \ref{prop:price_impact_signal} thus implies the expected price change caused by the presence of the private signal is lower when accounting for price impact, consistent with the reduced sensitivity with respect to public signal that internalization yields.

As in the PT equilibrium, market clearing implies the effects of the signal on the uninformed trader's demand are in thee opposite direction of the insider.  Indeed, direct calculation shows
\begin{equation*}
    \expvs{\wh{\psi}_{U,\iota}(Z_N)- \wh{\psi}_{ns,\iota,U}(Z_N)} = -\frac{p_I\consb}{(1-\consb)(1+2\wh{y})}\wh{\Pi},
\end{equation*}
which means that in contrast to the insider, the uninformed trader's demand is expected to decrease due to price impact. Thus, we may conclude
\begin{quote}
    \textit{For both the PT and PI equilibria, the presence of a private signal is expected to increase the insider's demand and price (albeit with a lower change in the PI equilibrium) and decrease the uninformed trader's demand.}
\end{quote}

\subsection*{Price impact internalization effects when there is no information asymmetry}

We now turn our attention to the effect of price impact, first assuming absence of private information. According to Proposition \ref{P:no_signal_simple_n}, and recalling that $\psi_{I,0} = \wh{\Pi}$ we see the trades satisfy
\begin{equation}\label{E:nsi_trade_diff}
    \left(\wh{\psi}_{ns,\iota,I}(z)-\psi_{I,0}\right) = -\frac{\consb}{1+\consb}\frac{z}{\alpha_I},\qquad \left(\wh{\psi}_{ns,I}(z)-\psi_{I,0}\right) = -\consb \frac{z}{\alpha_I}.
\end{equation}
Therefore,  internalization of price impact keeps the insider at the same side of the trade as in the non-internalization case, but it \textit{reduces} the magnitude of the trade. Intuitively, the insider accounts for price impact by taking a smaller position, which in turn changes the equilibrium price. Indeed, we readily get that
\begin{equation}\label{E:no_sig_px_diff}
    p_{ns,\iota}(z)-p_{ns}(z) = \frac{\consb^3}{1-\consb^2}\frac{z}{\alpha_I}.
\end{equation} 
The above implies that in the PI equilibrium, the insider has a lower demand when compared to the PT equilibrium and obtains a better price (price-impact increases the per-unit price when insider sells and decreases it when she buys). Indeed, positive $z$ means that both insider and uninformed trader sell at equilibrium and in fact positive $z$ makes the insider reveal a higher public signal\footnote{From \eqref{E:signals_together} we get that price impact changes the public signal in the direction of the noise traders' order. Indeed, as $\wh{y}>0$, we have $(h_{\iota}(g,z)-h(g,z))z\geq 0$ for all $(g,z)$. Hence, the strategically revealed signal by the insider is higher if and only if there is positive demand from the noise traders.}. This increases the demand of the uninformed trader (covers less of noise demand) and hence the interim effect is an increase in the equilibrium price. Lastly, the uninformed trader's equilibrium allocations imply the relative trade size
\begin{equation*}
    \left(\wh{\psi}_{ns,\iota,U}(z)-\psi_{U,0}\right) = -\frac{\consb}{1-\consb^2}\frac{z}{\alpha_I},\qquad \left(\wh{\psi}_{ns,U}(z)-\psi_{U,0}\right) = -\consb \frac{z}{\alpha_I}.
\end{equation*}
Therefore, when $z < 0$ both the insider and uninformed trader buy the risky asset in each equilibria. However, in the PI equilibrium, the insider reduces her position while the uninformed trader increases it. Conversely, when $z>0$ traders sell the risky asset, with the uninformed trader selling more. In particular, the uninformed increases his volume at a better price-per-unit, due to price impact. In other words, when insider buys the asset, internalization reduces her demand which in turn decreases the price and makes the uninformed trader buys more. Summarizing,
\begin{quote}
    \textit{Internalization with no signal results in a lower (resp.~higher) equilibrium position for the insider (resp.~uninformed trader) at a better price}.
\end{quote}

\subsection*{Price impact internalization effects when there is information asymmetry}

We finally consider the effect of price-impact internalization on equilibrium demands and prices in the presence of private information, which associates with our main case. To simplify the presentation and highlight the key points, we state the results at the expected value level, rather than for each realization.

We have seen that the presence of signal is expected to increase the volume of the insider's order, while the internalization is expected to decrease it (under no private signal). In other words, price impact and presence of the signal have ex-ante opposite expected effects on the insider's demand. In particular, using \eqref{E:hatpsiI_exp_diff_pt}, \eqref{E:hatpsiI_exp_diff_pi}, \eqref{E:nsi_trade_diff} and $\expvs{Z_N} = 0$ we obtain
\begin{equation}\label{E:expected_orders_alt}
\begin{split}
    \expvs{\wh{\psi}_{I,\iota}(G,Z_N)-\wh{\psi}_{I}(G,Z_N)} &= \left(\frac{p_I}{1+2\wh{y}} - \frac{(1-\consb)(p_I-p_U)}{1+\consb p_I + (1-\consb)p_U}\right)\wh{\Pi},\\
    &= p_I\left(\frac{1}{1+2\wh{y}} - \frac{(1-\consb)}{1+\consb p_I + \consa p_I(1+p_I)}\right)\wh{\Pi},
\end{split}
\end{equation}
where the second equality follows using \eqref{E:constants_nice_simple} and simplifying. The following shows the right side above may be either positive or negative.  In short, the effect of price impact prevails over the one of asymmetric information when the insider is sufficiently risk tolerant resulting in a higher expected order.

\begin{proposition}\label{P:more_trade}
The quantity in \eqref{E:expected_orders_alt} is positive as $\alpha_I \to \infty$  and negative as $\alpha_I \to 0$.  Therefore,
price impact internalization is expected to increase (resp.~decrease) the insider's position when she is sufficiently risk tolerant (resp.~risk averse). By market clearing, the opposite is expected for the uninformed trader.
\end{proposition}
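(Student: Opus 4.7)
The plan is to analyze the sign of the bracketed quantity in \eqref{E:expected_orders_alt}, namely
\[
\Delta(\alpha_I) \dfn \frac{1}{1+2\wh{y}} - \frac{1-\consb}{1+\consb p_I + \consa p_I(1+p_I)},
\]
in the two asymptotic regimes $\alpha_I \to 0$ and $\alpha_I \to \infty$. Since $\Pi>0$ gives $p_I \wh{\Pi}>0$, the sign of the full expression matches that of $\Delta(\alpha_I)$, and by the clearing identity \eqref{E:cc_psi_simple} (with $\expvs{Z_N}=0$) the opposite conclusion for the uninformed trader then follows immediately. The main technical work, and the key obstacle, is to extract the leading-order asymptotics of the unique positive root $\wh{y}$ of the cubic \eqref{E:cubic_alt_n} in each of the two limits; once those are in hand, the rest is a one-line expansion.

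For the $\alpha_I \to 0$ regime, note that $\consb = \alpha_I/(\alpha_I+\alpha_U) \to 0$ and $\consa = \alpha_I^2 p_N \to 0$ with $\consa = o(\consb)$. Multiplying \eqref{E:cubic_alt_n} by $\consb(1+p_I)$ yields the cleared form
\[
0 = (1+y)^2\bigl(\consb(1+p_I) - (1-\consb)y\bigr) + \consa p_I(1+p_I)\bigl((1-\consb)y + 1\bigr).
\]
A standard continuity/bifurcation argument on this polynomial (using uniqueness of the positive root from Proposition \ref{prop:one_d_n}) forces $\wh{y} \to 0$ with the expansion $\wh{y} = (\consb + \consa p_I)(1+p_I) + o(\consb + \consa)$. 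Substituting into each fraction and expanding to first order in $(\consb,\consa)$, the leading $1$'s cancel and a short computation gives
\[
\Delta(\alpha_I) = -(\consb + \consa p_I)(1+p_I) + o(\consb + \consa) < 0,
\]
so the expression in \eqref{E:expected_orders_alt} is strictly negative for $\alpha_I$ sufficiently small.

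For the $\alpha_I \to \infty$ regime I would search for a positive root of \eqref{E:cubic_alt_n} of order $\alpha_I$. Substituting the ansatz $\wh{y} = c\alpha_I + o(\alpha_I)$ into the cleared cubic and collecting the leading $\alpha_I^{2}$ terms yields the quadratic
\[
\alpha_U c^2 - (1+p_I)c - \alpha_U p_N p_I(1+p_I) = 0,
\]
whose unique positive root pins down $c>0$. Then $1/(1+2\wh{y}) \sim 1/(2c\alpha_I)$, whereas
\[
\frac{1-\consb}{1+\consb p_I + \consa p_I(1+p_I)} \sim \frac{\alpha_U/\alpha_I}{\alpha_I^{2}\, p_N p_I(1+p_I)} = O\bigl(\alpha_I^{-3}\bigr),
\]
which is of strictly smaller order. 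Hence $\Delta(\alpha_I) \sim 1/(2c\alpha_I) > 0$ for large $\alpha_I$, giving the second claim and completing the proof.
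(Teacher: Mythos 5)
Your overall strategy is correct and both sign conclusions are right, but you take a genuinely different route from the paper. The paper never computes asymptotics of $\wh{y}$: it rewrites the sign condition as the comparison $\wh{y} \gtrless \ol{y} \dfn (\rtoi+\cons)/(2R\rtou)$ and uses the fact (from the proof of Proposition \ref{prop:one_d_n}) that the cubic $g$ from \eqref{E:new_cubic_a} is positive on $(0,\wh{y})$ and negative on $(\wh{y},\infty)$, so that everything reduces to the sign of the closed-form quantity $g(\ol{y})$ in the two limits. You instead extract the leading-order behaviour of $\wh{y}$ itself in each regime and expand $\Delta$ directly. Your route requires more asymptotic bookkeeping but produces the actual rate at which $\Delta$ vanishes (and your $\alpha_I\to 0$ expansion $\wh{y}\sim\consb(1+p_I)$, $\Delta\sim-\consb(1+p_I)$ is correct and consistent with the paper's $\wh{y}/\alpha_I\to(1+p_I)/\alpha_U$); the paper's route is shorter but rests on an explicit, less transparent evaluation of $g(\ol{y})$.

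Two points need repair. First, your limiting equation for $c$ in the $\alpha_I\to\infty$ regime is wrong: with $y=c\alpha_I$, $1-\consb\sim\alpha_U/\alpha_I$ and $\consa=\alpha_I^2p_N$, \emph{both} summands of $\consa p_I(1+p_I)\bigl((1-\consb)y+1\bigr)$ are of order $\alpha_I^2$, so the constant term cannot be dropped and the balance at order $\alpha_I^{2}$ is the cubic
\begin{equation*}
\alpha_U c^3 - (1+p_I)c^2 - \alpha_U p_Np_I(1+p_I)c - p_Np_I(1+p_I) = 0,
\end{equation*}
not your quadratic. This is harmless for the proposition, since all you use is that the limiting equation has a unique positive root $c$ (which the cubic does: one sign change in its coefficients), but as stated the equation is incorrect. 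Second, the ansatz $\wh{y}=c\alpha_I+o(\alpha_I)$ is asserted rather than justified; it should be backed by a two-sided bound, e.g.\ the cubic forces $\wh{y}>\consb(1+p_I)/(1-\consb)\sim(1+p_I)\alpha_I/\alpha_U$, while $g(K\alpha_I)<0$ for $K$ large gives $\wh{y}<K\alpha_I$, after which any limit point of $\wh{y}/\alpha_I$ must solve the limiting cubic. With these two repairs (and the analogous, easier, boundedness check as $\alpha_I\to 0$) the argument is complete.
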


\begin{remark}
By inspecting the proof of Proposition \ref{P:more_trade} we can identify other instances when $\expvs{\wh{\psi}_{I,\iota}(G,Z_N)-\wh{\psi}_{I}(G,Z_N)}$ is positive or negative.   For example, with other parameters fixed, it is positive as (i)  $p_I \to \infty$, or (ii) as $\alpha_U \to 0,\infty$ if $\alpha_I^2 p_I p_N$ is sufficiently large. Conversely, it is negative as (i) $p_I \to 0$ or (ii) as $\alpha_U \to 0,\infty$ if $\alpha_I^2 p_I p_N$ is sufficiently small.
\end{remark}

Lastly, for the equilibrium prices,  \eqref{E:pi_sig_px_diff}, \eqref{E:no_sig_px_diff} and $\expvs{Z_N} = 0$ give
\begin{equation*}
\expvs{p_{\iota}(H_{\iota})-p(H)} = (m_{g,\iota}-m_g)\wh{\Pi} .
\end{equation*}
Proposition \ref{prop:price_impact_signal} implies the right side above is negative, which means that internalization is expected to decrease (resp.~increase) the price when insider is expected to buy (resp.~sell). In other words, the expected change of the price benefits both traders (as the uninformed trader remains at the same side of trade). Connecting this fact to the expected demand changes, we may conclude that
\begin{quote}
    \textit{Due to internalization of price impact, a sufficiently low (resp.~high) risk tolerant insider is expected to buy less (resp.~more) units at a better price, while uninformed trader buys more (resp.~less)}.
\end{quote}


\section{Intuition and model's predictions}\label{S:conclude}

We conclude by providing economic intuition about the predictions induced by the model.   First, we have seen that asymmetric information is ex-ante expected to increase the insider's demand for the tradeable asset. Intuitively, the private signal reduces the asset's risk (measured by variance) for the insider, which makes her willing to hold a higher allocation (positive or negative). Without strategic trading, higher insider demand is expected to increase the price. Private information tends to increase the insider's demand even under price impact. The main ex-ante expected difference is the lower increase of the price, as internalization of price impact affects the price in favor of the insider.  When the insider trades strategically, she uses her private signal to affect the equilibrium. In fact, it is exactly when the insider internalizes price impact and uninformed traders are price-takers that insider's utility is monotonically increasing with respect to the signal precision. In other words, insider's strategic trading increases the value of her private signal, making the acquisition of a better signal reasonable.

Internalization of price impact keeps the traders at the same side of trade. As we have seen, the insider hides part of her private signal, which is expected to reduce her demand and hence the equilibrium price. In other words, price impact has the opposite expected effect than the private signal. Provided that insider has long position at equilibrium, the price always decreases due to price impact, under both price impact and private signal. However, the insider's demand is lower under the presence of signal when she is also sufficiently risk averse. This is intuitive in the sense that a sufficiently risk averse trader wants to undertake less risk. If in addition to high insider's risk aversion inequality \eqref{E:less_CE_in_PI} holds, the price impact equilibrium  lowers the insider's utility. Note that \eqref{E:less_CE_in_PI} implies the uninformed traders are highly risk tolerant. This reduces the insider's allocation even further, since the uninformed traders' demand is higher and hence at market-clearing the insider's share is lower. Such lower insider demand may lead to lower ex-ante expected utility gains, as a signal of good quality (consistent with \eqref{E:less_CE_in_PI}) means the insider feels more confident to hold a higher position at equilibrium. However, under a large deviation of risk aversions, internalization of price impact prevails, demand is reduced and utility gains are lower. Note that when traders have the same risk aversion, the effect of internalization is always beneficial for the insider.

Interestingly enough, in the absence of a private signal, internalization of price impact induces higher utility gains for the insider. This is because under symmetric information, the insider does not have motive to increase her demand due to a lower asset risk, and hence the effect of price impact, i.e. buying lower quantity at a lower price, increases the expected utility. We conclude that it is the presence of asymmetric information and the deviation on risk aversions that potentially make the price-impact equilibrium disadvantageous for the insider.

We should also emphasize that as long as the insider internalizes her price impact, equilibrium prices cannot be driven to the corresponding price-taking equilibrium. This is because the uninformed trader, although he is a price-taker, realizes the insider internalizes her price impact. This makes him reduce his perceived public signal precision, and hence alters his demand function to a less elastic one. Under the Pareto initial allocation, he remains at the same side of trade with the insider and the lower equilibrium prices caused by the internalization imply higher demand for the uninformed trader, who (although price-taker) is benefited by price impact. In other words, price impact decreases the price when traders buy the asset, and at equilibrium the uninformed trader satisfies his optimal demand but at a discount. This is the reason why price impact ex-ante benefits the uninformed trader with and without asymmetric information.

\bibliographystyle{plainnat}

\bibliography{master}

\appendix


\section{The Equilibrium: General Case}\label{AS:equilibrium_general_a}

In this appendix, we extend the model of Section \ref{S:equilibrium_simple} to $d$ risky assets with general values for the payoff mean and covariance. We also precisely quantify the affine coefficients in the optimal demand functions, and derive the cubic equation \eqref{E:cubic_alt_n}. The assets have terminal payoff $X\sim N(\mu_X,P_X^{-1})$, written
\begin{equation*}
X = \mu_X + P_X^{-1/2} \E_X,
\end{equation*}
where $\E_X \sim N(0,1_d)$. The outstanding supply is $\Pi\in\RR^d$. The riskless asset price is normalized to $1$. The private signal $G$ takes the form
\begin{equation}\label{E:G_def_a}
    G = X + Z_I;\qquad Z_I  = P_I^{-1/2} \E_I,
\end{equation}
where $\E_I \sim N(0,1_d)$ is independent of $X$, so that $P_I$ is the signal precision matrix.  The noise traders have demand
\begin{equation}\label{E:ZN_def_a}
    Z_N = P_N^{-1/2} \E_N,
\end{equation}
where $\E_n\sim N(0,1_d)$ is independent of both $X$ and $\E_I$. The matrices $P_X,P_I,P_N$ lie in $\mathbb{S}^d_{++}$, the set of $d\times d$ strictly positive definite symmetric matrices, and $\mu_X\in \reals^d$. Lastly, the insider $I$ and uninformed trader $U$ are endowed with share positions $\cbra{\pi_{i,0}} = \alpha_i \whpi, i \in \cbra{I,U}$ exactly as in \eqref{E:pareto_endow}.

For a given strategy $\pi_i$, if the time $0$ price vector is $p$ then trader $i$ has risk aversion adjusted terminal wealth (see \eqref{E:tw_psi_simple})
\begin{equation*}
\mathcal{W}^{\psi_i} =  (\psi_{i,0})'p + \psi_i'(X-p) =  \whpi'p + \psi_i'(X-p);\qquad i\in\cbra{I,U}.
\end{equation*}
where the symbol $'$ denotes transposition.  The insider perceives linear price impact so that if she changes her position from $\pi_{I,0} = \alpha_{I} \wh{\Pi}$ to $\pi_I = \alpha_I \psi_I$, the price will be as in \eqref{E:impact_form_n} (for a to-be-determined vector $V_{p,\iota}$ and matrix $M_{p,\iota}$). To define the set of acceptable strategies for the insider $\A_I$, note that $G$ and $Z_N$ are the only random quantities revealed at time $0$, and hence every insider strategy must be known using the information generated by $G$ and $Z_N$.  Next, recall that our assumption (as in  \cite{rochet_vila}) is that the insider, by seeing both her private signal and the resultant price, is able to deduce the noise trader demand. But, if the insider uses a strategy $\psi$ which reveals the noise trader demand $Z_N$ through the price, it must also be that $Z_N$ is known given the information generated by $G$ and $p_{\iota}(\psi,Z_N)$. As such,  we define set of acceptable trading strategies for the insider to be
\begin{equation}\label{eq:A_I_a}
    \A_I \dfn \cbra{ \psi \in \sigma(G,Z_N) \such Z_N \in \sigma(G,p_{\iota}(\psi,Z_N))}\footnote{We will show the optimal strategy among all $\sigma(G,Z_N)$ measurable policies lies in $\A_I$, so $\A_I$ poses no restriction. However, if one does not restrict to $\A_I$ a-priori, it is not clear how to obtain the law of $X$ conditional on $\sigma(G,p_{\iota}(\psi,Z_N))$.}.
\end{equation}
Here, ``$\psi \in \sigma(G,H)$'' means that $\psi$ is $\sigma(G,H)$ measurable, and note that for any $\psi\in \A_I$, one has $\sigma(G,p_{\iota}(\psi,Z_N)) = \sigma(G,Z_N)$.  With $\A_I$ property defined, the insider's optimal investment problem is
\begin{equation}\label{E:I_problem_a}
\begin{split}
&\inf_{\psi \in \A_I} \left(\condexpvs{e^{-\left(\whpi'p_{\iota} +  \psi'\left(X-p_{\iota}\right)\right)}}{\sigma(G,Z_N)};\quad p_{\iota} = p_{\iota}\left(\psi,Z_N\right)\right).
\end{split}
\end{equation}
We next turn to the uninformed trader. To write his optimal investment problem assume there is signal $H_{\iota}$ passed to all market participants in equilibrium, and the price can be written $p_{\iota} = p_{\iota}(H_{\iota})$.  Both $H_{\iota}$ and $p_{\iota}(H_{\iota})$ are precisely defined below, but for now we take them as given. As the uninformed is a price taker, his optimization problem is
\begin{equation}\label{E:U_problem_a}
\begin{split}
&\inf_{\psi \in \sigma(H_{\iota})}\left(\condexpvs{e^{-\left(\whpi'p_{\iota} +  \psi'\left(X-p_{\iota}\right)\right)}}{\sigma(H_{\iota})};\quad p_{\iota} = p_{\iota}\left(H_{\iota}\right)\right).
\end{split}
\end{equation}

\subsection*{Transformations} To establish equilibrium, we make two convenient transformations.  First, using $\QQ_0$ from \eqref{E:Q0_def} and setting $\theta = \psi - \whpi$\footnote{$\psi\in\A_I$ if and only if $\theta\in\A_I$.} we may re-write the objective functions \eqref{E:I_problem_a}, \eqref{E:U_problem_a} as
\begin{equation}\label{E:transform_1}
    \begin{split}
        (I) & \qquad \condexpvs{e^{-\whpi'X}}{\sigma(G,Z_N)} \condexpv{\QQ_0}{}{e^{-\theta'\left(X-p_{\iota}\right)}}{\sigma(G,Z_N)};\quad p_{\iota} = p_{\iota}\left(\theta,Z_N\right),\\
        (U) & \qquad \condexpvs{e^{-\whpi'X}}{\sigma(H_{\iota})} \condexpv{\QQ_0}{}{e^{-\theta'\left(X-p_{\iota}\right)}}{\sigma(H_{\iota})};\quad p_{\iota} = p_{\iota}\left(H_{\iota}\right).
    \end{split}
\end{equation}
As such, we can work under $\QQ_0$, and using \eqref{E:pareto_endow} the clearing condition \eqref{E:cc_psi_simple} is
\begin{equation}\label{E:cc_theta_simple}
    0 = \alpha_I\wh{\theta}_I + \alpha_U\wh{\theta}_U + Z_N,
\end{equation}
which effectively makes the outstanding supply $0$.  The price impact form \eqref{E:impact_form_n} is
\begin{equation}\label{E:impact_form_a}
p_{\iota}(\theta,Z_N) = M_{p,\iota}\left(\theta + \frac{Z_N}{\alpha_I}\right) + V_{p,\iota}.
\end{equation}
Next, we de-mean the terminal payoff, price and signals. To do so, extending \eqref{E:new_p_nsn_simple_n}, define
\begin{equation}\label{E:new_p_nsn_a}
\pnsn \dfn \expv{\QQ_0}{}{X} = \mu_X - P_X^{-1}\wh{\Pi},
\end{equation}
as the equilibrium price absent private signals provided the initial endowments satisfy \eqref{E:pareto_endow}. With this notation, set
\begin{equation}\label{E:demean}
    X_0 \dfn X - \pnsn;\qquad p_{0,\iota} \dfn p_{\iota} - \pnsn;\qquad G_0 \dfn G - \pnsn;\qquad H_{\iota,0} \dfn H_{\iota} - \pnsn.
\end{equation}
Clearly, $\sigma(G,Z_N) = \sigma(G_0,Z_N)$ and $\sigma(H_{\iota,0}) = \sigma(H_{\iota})$. Also, from \eqref{E:G_def_a} we see $G_0 = X_0 + Z_I$ where $Z_I$ is a $N(0,P_I^{-1})$ random vector independent of $X_0$ under $\QQ_0$.  As such, $G_0$ under $\QQ_0$ takes the same form as $G$ under $\PP$, except now the payoff and insider signal have mean $0$. Similarly, $Z_N$ is a $N(0,P_N^{-1})$ random vector under $\QQ_0$ independent of both $X_0,Z_I$.  With this notation, from \eqref{E:transform_1} we start the the insider's problem
\begin{equation}\label{E:I_problem_aa}
    \begin{split}
        \inf_{\theta\in\A_I} \left(\condexpv{\QQ_0}{}{e^{-\theta'\left(X_0-p_{0,\iota}\right)}}{\sigma(G_0,Z_N)};\quad p_{0,\iota} = M_{p,\iota}\left(\theta+\frac{Z_N}{\alpha}\right) + V_{0,p,\iota}\right),
    \end{split}
\end{equation}
where $V_{0,p,\iota} = V_{p,\iota}-\pnsn$. We write $\pix$ as the $\PP$ precision of $X$ given $G$ (and also the $\QQ_0$ precision of $X_0$ given $G_0$), and express the matrix $M_{p,\iota}$ in terms of $\pix$ and a to-be-determined matrix $\yval$.
\begin{equation}\label{E:M_to_Y_a}
\begin{split}
\pix \dfn P_I + P_X;\qquad M_{p,\iota} = M_{p,\iota}(\yval) = \pix^{-1/2}\ \yval\ \pix^{-1/2}.
\end{split}
\end{equation}
We also write $V_{0,p,\iota} = V_{0,p,\iota}(\yval)$ to stress the dependence, and expect $\yval+\yval' \in \mathbb{S}^d_{++}$.  With this notation we obtain
\begin{lemma}\label{L:pi_opt_I_a}
Let $g,z\in\reals$. On $\cbra{G_0=g,Z_N = z}$, the unique optimizer of \eqref{E:I_problem_aa} enforces
\begin{equation}\label{E:pi_hpsi_noise_a}
\begin{split}
\wh{\theta}_{I,\iota}(g,z) + \frac{z}{\alpha_I} &= \MM\left(g + \Lambda_{\iota} z -P_I^{-1}\pix V_{0,p,\iota}(\yval)\right),
\end{split}
\end{equation}
where
\begin{equation}\label{Mbar_to_Y_a}
\begin{split}
\MM  = \MM(\yval) &\dfn \pix^{1/2}(1_d + \yval +\yval')^{-1}\pix^{-1/2}P_I,\\
\Lambda_{\iota} = \Lambda_{\iota}(\yval) &\dfn  \frac{1}{\alpha_I}P_I^{-1}\pix^{1/2}(1_d + \yval')\pix^{-1/2}.
\end{split}
\end{equation}
\end{lemma}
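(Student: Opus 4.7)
\textbf{Proof sketch for Lemma \ref{L:pi_opt_I_a}.}

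The plan is to compute the conditional expectation in \eqref{E:I_problem_aa} in closed form using the exponential-Gaussian structure, reduce the insider's problem (pointwise on $\{G_0=g, Z_N=z\}$) to the minimization of a strictly convex quadratic in $\theta$, and then simplify the first-order solution to match the parametrization \eqref{Mbar_to_Y_a}. I would first optimize over the larger class of all $\sigma(G,Z_N)$-measurable strategies and verify $\theta \in \A_I$ ex post. Under $\QQ_0$, the measure change $d\QQ_0/d\PP \propto e^{-\wh{\Pi}'X}$ gives $X_0 \sim N(0,P_X^{-1})$, while $Z_I\sim N(0,P_I^{-1})$ and $Z_N\sim N(0,P_N^{-1})$ remain independent of $X_0$ and of each other. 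Since $G_0 = X_0+Z_I$ and $Z_N$ is independent of $(X_0,G_0)$, the standard conjugate Gaussian update gives $X_0 \mid \{G_0=g, Z_N=z\} \sim N(\pix^{-1}P_I g, \pix^{-1})$. As $p_{0,\iota} = M_{p,\iota}(\theta + z/\alpha_I) + V_{0,p,\iota}$ is deterministic on $\{G_0=g, Z_N=z\}$ for any $\sigma(G_0,Z_N)$-measurable $\theta$, the Gaussian MGF of $-\theta'X_0$ yields
\[
\condexpv{\QQ_0}{}{e^{-\theta'(X_0-p_{0,\iota})}}{\sigma(G_0,Z_N)} = \exp\!\bigl(\theta'p_{0,\iota} - \theta'\pix^{-1}P_I g + \tfrac{1}{2}\theta'\pix^{-1}\theta\bigr).
\]

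Taking the logarithm and symmetrizing the quadratic term reduces the problem to minimizing
\[
f(\theta) = \tfrac{1}{2}\theta'(M_{p,\iota}+M_{p,\iota}'+\pix^{-1})\theta + \theta'\!\bigl(M_{p,\iota}\tfrac{z}{\alpha_I} + V_{0,p,\iota} - \pix^{-1}P_I g\bigr).
\]
Substituting $M_{p,\iota} = \pix^{-1/2}\yval\pix^{-1/2}$ from \eqref{E:M_to_Y_a}, the Hessian equals $\pix^{-1/2}(1_d+\yval+\yval')\pix^{-1/2}$, which is strictly positive definite under the standing assumption $\yval+\yval' \in \mathbb{S}^d_{++}$. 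Hence $f$ is strictly convex and the first-order condition yields the unique minimizer
\[
\wh{\theta}_{I,\iota}(g,z) = \pix^{1/2}(1_d+\yval+\yval')^{-1}\pix^{1/2}\bigl(\pix^{-1}P_I g - M_{p,\iota}\tfrac{z}{\alpha_I} - V_{0,p,\iota}\bigr).
\]

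Adding $z/\alpha_I$ to both sides and regrouping, the $g$-coefficient is exactly $\MM$ as defined in \eqref{Mbar_to_Y_a}; the $z$-coefficient simplifies via the identity $1_d - (1_d+\yval+\yval')^{-1}\yval = (1_d+\yval+\yval')^{-1}(1_d+\yval')$ to $\tfrac{1}{\alpha_I}\pix^{1/2}(1_d+\yval+\yval')^{-1}(1_d+\yval')\pix^{-1/2}$, which equals $\MM\Lambda_{\iota}$; and the $V_{0,p,\iota}$-coefficient collapses to $-\MM P_I^{-1}\pix$. This reproduces \eqref{E:pi_hpsi_noise_a}. Finally, $\wh{\theta}_{I,\iota}$ is affine in $(g,z)$, hence $\sigma(G,Z_N)$-measurable; since $\yval+\yval'\succ 0$ forces $\yval$ and $1_d+\yval'$ (and thus $M_{p,\iota}$ and $\Lambda_{\iota}$) to be invertible, the map $z\mapsto p_{\iota}(\wh{\theta}_{I,\iota}(g,z),z)$ is invertible for each fixed $g$, giving $Z_N \in \sigma(G, p_\iota(\wh{\theta}_{I,\iota},Z_N))$ and therefore $\wh{\theta}_{I,\iota}\in\A_I$. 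The main obstacle I foresee is the matrix-algebraic bookkeeping required to collapse the explicit first-order solution into the compact form involving $\MM$ and $\Lambda_{\iota}$; beyond that, the argument is a routine Gaussian conditional MGF computation.
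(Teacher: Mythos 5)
Your proposal is correct and follows essentially the same route as the paper: compute the Gaussian conditional law of $X_0$ given $\sigma(G_0,Z_N)$ under $\QQ_0$, reduce the objective to a strictly convex quadratic in $\theta$ with Hessian $\pix^{-1/2}(1_d+\yval+\yval')\pix^{-1/2}$, solve the first-order condition, and regroup using $1_d-(1_d+\yval+\yval')^{-1}\yval=(1_d+\yval+\yval')^{-1}(1_d+\yval')$ to obtain \eqref{E:pi_hpsi_noise_a}. Your explicit verification that the optimizer lies in $\A_I$ is a small addition the paper only flags in a footnote, but otherwise the arguments coincide.
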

Motivated by \eqref{E:pi_hpsi_noise_a} and the clearing condition \eqref{E:cc_theta_simple} we define
\begin{equation}\label{E:H_def_2_a}
    \begin{split}
        H_{0,\iota} &\dfn G_0 + \Lambda_{\iota} Z_N = X_0 + Z_I + \Lambda_{\iota} Z_N,\\
        P_{U,\iota}  &= P_{U,\iota}(\yval) \dfn  \left( P_I^{-1} + \Lambda_{\iota}(\yval)P_N^{-1}\Lambda_{\iota}'(\yval)\right)^{-1}.
    \end{split}
\end{equation}
Next, by observing the price, the uninformed trader has time $0$ information $\sigma(H_{0,\iota})$ and from \eqref{E:I_problem_aa}, \eqref{E:pi_hpsi_noise_a} the de-meaned price is $p_{0,\iota} = p_{0,\iota}(H_{0,\iota}) = p_{0,\iota}(H_{0,\iota})(\yval)$ where
\begin{equation}\label{E:pi_p_funct_a}
\begin{split}
p_{0,\iota}(h_{0,\iota})(\yval) &=  M_{p,\iota}(\yval)\MM(\yval)\left(h_{0,\iota}-P_I^{-1}\pix V_{0,p,\iota}(\yval)\right)  + V_{0,p,\iota}(\yval),\\
&= M_{p,\iota}(\yval)\MM(\yval) h_{0,\iota} + \pix^{-1/2}(1_d + \yval')(1_d + \yval + \yval')^{-1}\pix^{1/2}V_{0,p,\iota}(\yval).
\end{split}
\end{equation}
With this signal and pricing function, the uninformed trader solves the problem
\begin{equation}\label{E:U_problem_aa}
    \begin{split}
        \inf_{\theta\in \sigma(H_{0,\iota})} \left(\condexpv{\QQ_0}{}{e^{-\theta'\left(X_0-p_{0,\iota}\right)}}{\sigma(H_{0,\iota})};\quad p_{0,\iota} = p_{0,\iota}(H_{0,\iota})\right),
    \end{split}
\end{equation}
and we have
\begin{lemma}\label{L:pi_opt_U_a}
Let $h_{\iota}\in\reals$. On  $\cbra{H_{0,\iota} = h_{0,\iota}}$, the unique optimizer of \eqref{E:U_problem_aa} is 
\begin{equation}\label{E:pi_hatpsiU_a}
\wh{\theta}_{U,\iota}(h_{0,\iota}) = P_{U,\iota} h_{0,\iota} - (P_{U,\iota}+P_X)p_{0,\iota}(h_{0,\iota}).
\end{equation}
\end{lemma}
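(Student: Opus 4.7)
The plan is to reduce the uninformed trader's problem to a pointwise deterministic CARA--Gaussian optimization. Since $p_{0,\iota}(H_{0,\iota})$ is $\sigma(H_{0,\iota})$-measurable and every admissible $\theta$ is a function of $H_{0,\iota}$, on the event $\{H_{0,\iota} = h_{0,\iota}\}$ both $p_{0,\iota}$ and $\theta$ can be treated as deterministic vectors in $\reals^d$, and the essential infimum of the conditional expectation in \eqref{E:U_problem_aa} is then an ordinary infimum over $\theta \in \reals^d$ of the conditional expectation evaluated on this event.

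The first substantive step is to compute the $\QQ_0$-conditional law of $X_0$ given $\{H_{0,\iota} = h_{0,\iota}\}$. From the transformation discussion before Lemma \ref{L:pi_opt_I_a}, $X_0 \sim N(0,P_X^{-1})$ under $\QQ_0$, and by \eqref{E:H_def_2_a} we may write $H_{0,\iota} = X_0 + W$ where $W \dfn Z_I + \Lambda_{\iota} Z_N$ is independent of $X_0$ under $\QQ_0$. Since $Z_I \sim N(0,P_I^{-1})$ and $Z_N \sim N(0,P_N^{-1})$ are independent, $W$ is centered Gaussian with covariance $P_I^{-1} + \Lambda_{\iota} P_N^{-1} \Lambda_{\iota}'$, i.e.\ with precision $P_{U,\iota}$ as defined in \eqref{E:H_def_2_a}. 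Standard Gaussian conditioning (via the signal-plus-noise formula) then yields
\begin{equation*}
X_0 \mid \{H_{0,\iota} = h_{0,\iota}\} \;\sim\; N\!\bigl((P_X+P_{U,\iota})^{-1} P_{U,\iota}\, h_{0,\iota},\; (P_X+P_{U,\iota})^{-1}\bigr).
\end{equation*}

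With the conditional law in hand, I would apply the multivariate normal moment generating function. Writing $\mu \dfn (P_X+P_{U,\iota})^{-1} P_{U,\iota}\, h_{0,\iota}$ and $\Sigma \dfn (P_X+P_{U,\iota})^{-1}$, we obtain
\begin{equation*}
\condexpv{\QQ_0}{}{e^{-\theta'(X_0 - p_{0,\iota}(h_{0,\iota}))}}{H_{0,\iota}=h_{0,\iota}} = \exp\!\left(\theta' p_{0,\iota}(h_{0,\iota}) - \theta'\mu + \tfrac{1}{2}\theta'\Sigma\theta\right).
\end{equation*}
Because $\Sigma \in \mathbb{S}^d_{++}$, the exponent is strictly convex in $\theta$, so the minimizer is unique and characterized by the first-order condition $\Sigma\theta = \mu - p_{0,\iota}(h_{0,\iota})$. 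Premultiplying by $\Sigma^{-1} = P_X + P_{U,\iota}$ and using $(P_X+P_{U,\iota})\mu = P_{U,\iota} h_{0,\iota}$ delivers exactly $\wh{\theta}_{U,\iota}(h_{0,\iota}) = P_{U,\iota}h_{0,\iota} - (P_X+P_{U,\iota})p_{0,\iota}(h_{0,\iota})$ as claimed in \eqref{E:pi_hatpsiU_a}.

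There is no real obstacle here; the argument is the multivariate analogue of a textbook CARA--Gaussian computation. The only bookkeeping subtlety is verifying that the noise $Z_I + \Lambda_{\iota} Z_N$ has precision matrix precisely $P_{U,\iota}$, which is immediate from the definition in \eqref{E:H_def_2_a}, and that measurability of $\theta$ with respect to $\sigma(H_{0,\iota})$ allows the essential infimum to be taken pointwise in $h_{0,\iota}$.
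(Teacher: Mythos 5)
Your proof is correct and follows essentially the same route as the paper: both compute the $\QQ_0$-conditional Gaussian law of $X_0$ given $H_{0,\iota}$ (the paper via the Bayes-factor density ratio in \eqref{E:Hi_cond_result_a}, you via the equivalent posterior mean/precision formulas for the signal-plus-noise model with noise precision $P_{U,\iota}$), reduce the objective to a strictly convex quadratic in $\theta$, and read off \eqref{E:pi_hatpsiU_a} from the first-order condition. No gaps.
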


In view of \eqref{E:H_def_2_a}, the market-clearing condition \eqref{E:cc_theta_simple} can be rewritten
\begin{equation}\label{E:pi_clearing_cond_psi_a}
\begin{split}
0 &= \alpha_I\left(\wh{\theta}_{I,\iota}(G_0,Z_N)  + \frac{Z_N}{\alpha_I}\right) + \alpha_{U} \wh{\theta}_{U,\iota}(H_{0,\iota}),\\
&= \alpha_I\MM(\yval)\left(H_{0,\iota} -P_I^{-1}\pix V_{0,p,\iota}(\yval)\right)\\
&\qquad\qquad  + \alpha_U\left(P_{U,\iota}(\yval) H_{0,\iota} - (P_{U,\iota}(\yval)+P_X)p_{\iota}(H_{0,\iota})(\yval)\right).
\end{split}
\end{equation}
Using \eqref{E:pi_p_funct_a} it is clear that $V_{0,p,\iota}(\yval) = 0$.  To eliminate the $H_{0,\iota}$ terms we need
\begin{equation}\label{E:pi_clear_h_aa}
    0_d = \alpha_I\MM(\yval) + \alpha_U P_{U,\iota}(\yval) - \alpha_U (P_{U,\iota}(\yval)+P_X)M_{p,\iota}(\yval)\MM(\yval).
\end{equation}
Re-writing this equation slightly, and translating back to the un-demeaned values, we obtain

\begin{proposition}\label{prop:cubic_matrix_a}
Assume $\whyval$ enforces the matrix equality
\begin{equation}\label{E:pi_clear_h_a}
\begin{split}
0_d &= \frac{\alpha_I}{\alpha_U} 1_d + P_{U,\iota}(\wh{\yval})\MM(\yval)^{-1} - (P_{U,\iota}(\wh{\yval}) + P_X)M_{p,\iota}(\wh{\yval}),
\end{split}
\end{equation}
where $M_{p,\iota}$, $\MM$, $P_{U,\iota}$  are from \eqref{E:M_to_Y_a}, \eqref{Mbar_to_Y_a}, \eqref{E:H_def_2_a}. Then, there is a price-impact equilibrium. The market signal is
\begin{equation}\label{E:H_def_3_a}
H_{\iota} = H_{0,\iota} +\pnsn =  G + \Lambda_{\iota}(\whyval)Z_N,
\end{equation}
where $\Lambda_{\iota}$ is from \eqref{Mbar_to_Y_a}. The equilibrium price is 
\begin{equation}\label{E:pi_price_new_a}
    \begin{split}
    p_{\iota}(H_{\iota}) &= \pnsn + M_{p,\iota}(\whyval)\MM(\whyval)\left(H_{\iota} - \pnsn\right).
    \end{split}
\end{equation}
The risk-aversion adjusted optimal policies are
\begin{equation}\label{E:pi_opt_policies_a}
    \begin{split}
        \wh{\psi}_{I,\iota} &= \whpi + \pix^{1/2}(1+\wh{\yval})^{-1}\pix^{-1/2}P_X\pnsn + \frac{1}{\alpha_I}\Lambda_{\iota}(\whyval)^{-1}G\\
        &\qquad - \pix^{1/2}(1+\whyval')^{-1}\pix^{1/2}p_{\iota}(H_{\iota}),\\
        \wh{\psi}_{U,\iota} &= \whpi + P_X\pnsn + P_{U,\iota} H_{\iota} - (P_{U,\iota}+P_X)p_{\iota}(H_{\iota}). 
    \end{split}
\end{equation}

\end{proposition}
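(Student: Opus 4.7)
The plan is to assemble the pieces already derived in Lemmas \ref{L:pi_opt_I_a} and \ref{L:pi_opt_U_a}, using the hypothesis that $\whyval$ satisfies \eqref{E:pi_clear_h_a} to verify market clearing. Given any candidate $\yval$ with $\yval+\yval' \in \mathbb{S}^d_{++}$, one has a candidate price-impact form via $M_{p,\iota}(\yval)$ and $V_{0,p,\iota}(\yval)$. The insider's response is then the unique optimizer from \eqref{E:pi_hpsi_noise_a}, which mechanically induces the market signal $H_{0,\iota}$ in \eqref{E:H_def_2_a}. The uninformed trader observes only the price, and thus the signal $H_{0,\iota}$, and her best response is \eqref{E:pi_hatpsiU_a}. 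Substituting both into the clearing condition \eqref{E:cc_theta_simple} produces \eqref{E:pi_clearing_cond_psi_a}, which must hold $\QQ_0$-almost surely; since $H_{0,\iota}$ has a non-degenerate Gaussian distribution, the constant term and the coefficient of $H_{0,\iota}$ must vanish separately.

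The constant-term equation, combined with the explicit form \eqref{E:pi_p_funct_a}, is a linear equation in $V_{0,p,\iota}(\yval)$ whose unique solution is $V_{0,p,\iota}(\yval)=0$. Translating back through \eqref{E:demean} gives $V_{p,\iota}=\pnsn$, which is consistent with the claimed pricing function \eqref{E:pi_price_new_a}. The coefficient-of-$H_{0,\iota}$ equation with $V_{0,p,\iota}=0$ is precisely \eqref{E:pi_clear_h_aa}; multiplying on the right by $\MM(\yval)^{-1}$ and dividing by $\alpha_U$ yields the matrix equation \eqref{E:pi_clear_h_a} in the statement. Hence choosing $\yval=\whyval$ clears the market by hypothesis.

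Third, convert from de-meaned to original variables. Setting $H_{\iota}=H_{0,\iota}+\pnsn$ and $p_{\iota}=p_{0,\iota}+\pnsn$ in \eqref{E:H_def_2_a} and \eqref{E:pi_p_funct_a} (with $V_{0,p,\iota}=0$) produces \eqref{E:H_def_3_a} and \eqref{E:pi_price_new_a}. To obtain the optimal policies \eqref{E:pi_opt_policies_a}, undo the substitution $\theta_i=\psi_i-\whpi$ in \eqref{E:pi_hpsi_noise_a} and \eqref{E:pi_hatpsiU_a}, and use $G=G_0+\pnsn$ together with $\MM(\whyval)^{-1}=P_I^{-1}\pix^{1/2}(1_d+\whyval+\whyval')\pix^{-1/2}$ from \eqref{Mbar_to_Y_a}. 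The resulting linear expressions match the two lines of \eqref{E:pi_opt_policies_a} after collecting the terms proportional to $G$, $p_{\iota}(H_{\iota})$, and the $\pnsn$-constants.

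The main subtlety will be the admissibility check $\wh{\psi}_{I,\iota}\in\A_I$ from \eqref{eq:A_I_a}, namely that $Z_N$ is recoverable from $\sigma(G,p_{\iota}(\wh{\psi}_{I,\iota},Z_N))$. Because the equilibrium price is affine in $H_{\iota}=G+\Lambda_{\iota}(\whyval)Z_N$, and $\Lambda_{\iota}(\whyval)$ is invertible (the factor $1_d+\whyval'$ is invertible since $\yval+\yval'\in\mathbb{S}^d_{++}$ together with \eqref{E:pi_clear_h_a} rules out $-1$ as an eigenvalue of $\whyval'$, while $\pix$ and $P_I$ are positive definite), $Z_N$ is indeed measurable with respect to $\sigma(G,p_{\iota})$ as long as $M_{p,\iota}(\whyval)\MM(\whyval)$ is injective. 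This injectivity, in turn, follows from \eqref{E:pi_clear_h_a}: if the operator had a non-trivial kernel, substitution into \eqref{E:pi_clear_h_a} would force $\alpha_I/\alpha_U \cdot 1_d$ to coincide with a singular matrix on that subspace, a contradiction. Once admissibility is verified, the equilibrium conditions are fully satisfied.
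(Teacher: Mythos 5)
Your proposal follows essentially the same route as the paper's own proof: invoke Lemmas \ref{L:pi_opt_I_a} and \ref{L:pi_opt_U_a}, split the clearing identity \eqref{E:pi_clearing_cond_psi_a} into the constant part (forcing $V_{0,p,\iota}=0$) and the $H_{0,\iota}$-coefficient part (which is \eqref{E:pi_clear_h_aa}, equivalent to \eqref{E:pi_clear_h_a} after right-multiplying by $\MM^{-1}$ and dividing by $\alpha_U$), then translate back from the de-meaned variables. The one substantive difference is that you assert rather than carry out the matrix algebra that constitutes most of the paper's written proof --- namely eliminating $Z_N$ via $Z_N = \Lambda_{\iota}^{-1}\MM^{-1}M_{p,\iota}^{-1}(p_{\iota}(H_{\iota})-\pnsn)-\Lambda_{\iota}^{-1}(G-\pnsn)$ and verifying that the three coefficient matrices reduce to $\pix^{1/2}(1_d+\whyval')^{-1}\pix^{-1/2}P_X$, $\tfrac{1}{\alpha_I}\Lambda_{\iota}^{-1}$ and $-\pix^{1/2}(1_d+\whyval')^{-1}\pix^{1/2}$; conversely, you add an admissibility check ($\wh{\psi}_{I,\iota}\in\A_I$) that the paper leaves implicit, though your kernel argument for the injectivity of $M_{p,\iota}(\whyval)\MM(\whyval)$ from \eqref{E:pi_clear_h_a} is not airtight as stated (applying \eqref{E:pi_clear_h_aa} to a kernel vector $v$ only yields $\tfrac{\alpha_I}{\alpha_U}\MM v = -P_{U,\iota}v$, which is not an immediate contradiction), whereas under Assumption \ref{ass: precision_a} it is trivial since $\whyval=\wh{y}1_d$ with $\wh{y}>0$.
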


In light of Proposition \ref{prop:cubic_matrix_a}, our goal is to find solutions $\wh{\yval}$ to \eqref{E:pi_clear_h_a}. This is a matrix-valued cubic equation for $\yval$ and the primary difficulty in establishing existence of solutions $\wh{\yval}$ arises due to the interaction between the precision matrices $P_I, P_N$ and $P_X$. While techniques exist for solving such equations (see \cite{benzi2023solving}), technically this would take us far beyond our intended scope, and therefore we make the following assumption, which is always valid in the case of a single asset.

\begin{assumption}\label{ass: precision_a}
$P_I = p_I P_X$ and $P_N = p_N P^{-1}_X$ for scalars $p_I,p_N > 0$.
\end{assumption}

Under this assumption we guess $\yval = \yvals 1_d$ for a scalar $\yvals > 0$.  The quantities in \eqref{E:M_to_Y_a}, \eqref{Mbar_to_Y_a} and \eqref{E:H_def_2_a} take the form
\begin{equation}\label{E:constants_nice_a}
\begin{split}
M_{p,\iota} &= \frac{y}{1+p_I}P_X^{-1}, \hspace{5pt} \MM = \frac{p_I}{1+2y} P_X, \hspace{5pt} \Lambda_{\iota} = \frac{y+1}{\alpha_I p_I}P_X^{-1}, \hspace{5pt} P_{U,\iota} = p_{U,\iota}P_X,
\end{split}
\end{equation}
where using the notation of \eqref{E:lambda_alpha_R_def_n} we have
\begin{equation}\label{E:p_uiota_def_a}
p_{U,\iota}\dfn  \left(\frac{\consa p_I^2 }{(1+y)^2 + \consa p_I}\right).
\end{equation}
By plugging in these values and simplifying one can show that \eqref{E:pi_clear_h_a} holds if and only if $\yvals$ solves \eqref{E:cubic_alt_n}. Indeed
\begin{equation*}
    \begin{split}
        &P_{U,\iota}(\wh{\yval})\MM(\yval)^{-1} + \frac{\alpha_I}{\alpha_U} 1_d- (P_{U,\iota}(\wh{\yval}) + P_X)M_{p,\iota}(\wh{\yval})\\
        &\qquad = \left(\frac{\consa p_I (1+2\yvals)}{(1+\yvals)^2 + \consa p_I} + \frac{\consb}{1-\consb} - \left(1 + \frac{\consa p_I^2}{(1+\yvals)^2 + \consa p_I}\right)\frac{\yvals}{1+p_I}\right)1_d
    \end{split}
\end{equation*}
The quantity within the parentheses simplifies to
\begin{equation*}
    \begin{split}
        \frac{\consb}{(1-\consb)((1+\yvals)^2 + \consa p_I)}\left((1+\yvals)^2\left(1- \frac{1-\consb}{\consb(1+p_I)}\yvals\right) + \frac{\consa p_I}{\consb}\left((1-\consb)\yvals + 1\right)\right),
    \end{split}
\end{equation*}
and the term within the parentheses on the right is exactly the right side of \eqref{E:cubic_alt_n}. Therefore, using Proposition \ref{prop:one_d_n} we obtain the following generalization of Theorem \ref{thm:pi_simple_n}.

\begin{theorem}\label{thm:pi_a}
Let Assumption \ref{ass: precision_a} hold. Then a price-impact equilibrium exists. Using $\wh{\yvals}$ from Proposition \ref{prop:one_d_n},  the market signal is $H_{\iota}$ from \eqref{E:H_def_3_a}.  The equilibrium price is $p_{\iota}(H_\iota)$ from \eqref{E:pi_price_new_a}.  The optimal policies for the insider and uninformed trader are from \eqref{E:pi_opt_policies_a}.
\end{theorem}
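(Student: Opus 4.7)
The plan is to apply Proposition \ref{prop:cubic_matrix_a}, which reduces the problem of establishing a price-impact equilibrium to producing a matrix $\wh{\yval}$ that solves the matrix-valued equation \eqref{E:pi_clear_h_a}. Once such a $\wh{\yval}$ is in hand, the signal \eqref{E:H_def_3_a}, the price \eqref{E:pi_price_new_a}, and the optimal policies \eqref{E:pi_opt_policies_a} are read off mechanically.

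First I would exploit Assumption \ref{ass: precision_a}: since $P_I = p_I P_X$, one has $\pix = P_I + P_X = (1+p_I)P_X$, and all the matrices $P_I$, $P_X$, $P_N$, $\pix$ are simultaneously diagonalized (they share the same eigenvectors as $P_X$). This commuting structure is what makes a scalar ansatz viable. I would therefore guess $\yval = \yvals \, 1_d$ for a scalar $\yvals > 0$, and substitute into \eqref{E:M_to_Y_a}, \eqref{Mbar_to_Y_a} and \eqref{E:H_def_2_a}. After the $\pix^{\pm 1/2}$ factors cancel against the $(1_d + \yvals 1_d + \yvals 1_d)^{-1} = (1+2\yvals)^{-1} 1_d$, each of $M_{p,\iota}$, $\MM$, $\Lambda_{\iota}$ reduces to a scalar multiple of $P_X^{\pm 1}$, and $P_{U,\iota}$ becomes a scalar multiple of $P_X$, yielding the closed forms \eqref{E:constants_nice_a}--\eqref{E:p_uiota_def_a}.

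Second, with these reductions in hand, I would substitute into the left-hand side of \eqref{E:pi_clear_h_a}. Each of the three terms is then a scalar multiple of $1_d$ (the $P_X$-factors cancel since $P_{U,\iota} + P_X$ and $\MM^{-1}$ are both scalar multiples of $P_X^{\pm 1}$ and $M_{p,\iota}$ is a scalar multiple of $P_X^{-1}$), so the full matrix equation collapses to a single scalar equation in $\yvals$. Algebraic simplification, precisely as performed in the paragraph following \eqref{E:p_uiota_def_a}, shows this scalar equation is equivalent to the cubic \eqref{E:cubic_alt_n}, after clearing the positive factor $\consb/\bigl((1-\consb)((1+\yvals)^2 + \consa p_I)\bigr)$.

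Third, I would invoke Proposition \ref{prop:one_d_n} to obtain the unique positive root $\wh{\yvals}$ of \eqref{E:cubic_alt_n}, and set $\wh{\yval} = \wh{\yvals}\,1_d$. I would verify the admissibility condition $\wh{\yval} + \wh{\yval}' = 2\wh{\yvals}\,1_d \in \mathbb{S}^d_{++}$, which is immediate from $\wh{\yvals} > 0$, so all of \eqref{Mbar_to_Y_a}--\eqref{E:H_def_2_a} are well-defined and $\wh{\yval}$ genuinely solves \eqref{E:pi_clear_h_a}. Finally I would substitute $\wh{\yval}$ into the formulas \eqref{E:H_def_3_a}, \eqref{E:pi_price_new_a}, and \eqref{E:pi_opt_policies_a} from Proposition \ref{prop:cubic_matrix_a}, thereby completing the description of the equilibrium. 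The only real obstacle is the algebraic verification that the collapsed scalar form of \eqref{E:pi_clear_h_a} coincides with \eqref{E:cubic_alt_n}; this is direct but bookkeeping-heavy, and is the step where Assumption \ref{ass: precision_a} is truly used, since without commutativity of $P_I, P_N, P_X$ one would face a genuine matrix cubic in the sense of \cite{benzi2023solving}.
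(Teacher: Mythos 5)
Your proposal is correct and follows essentially the same route as the paper: the paper likewise guesses $\yval = \yvals 1_d$, uses Assumption \ref{ass: precision_a} to reduce $M_{p,\iota},\MM,\Lambda_{\iota},P_{U,\iota}$ to the scalar forms \eqref{E:constants_nice_a}--\eqref{E:p_uiota_def_a}, collapses \eqref{E:pi_clear_h_a} to the cubic \eqref{E:cubic_alt_n}, and then invokes Propositions \ref{prop:one_d_n} and \ref{prop:cubic_matrix_a}. The only addition you make is the explicit check that $\wh{\yval}+\wh{\yval}'\in\mathbb{S}^d_{++}$, which the paper leaves implicit.
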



\subsection*{Price-taking equilibrium} For the price-taking results, assume there is a market signal $H$ revealed by the time $0$ price $p=p(H)$, and both traders take $p(H)$ as given. The insider has time 0 information $\sigma(H,G)$ while the uninformed trader uses $\sigma(H)$. We make the same translations as in the price-impact case (with the obvious notation) so that the insider and uninformed trader's optimal investment problems are
\begin{equation}\label{E:pt_opt_prob_a}
\begin{split}
    (I)&\qquad \inf_{\theta\in\sigma(G_0,H_0)} \condexpv{\QQ_0}{}{e^{-\theta'(X_0-p_0)}}{\sigma(G_0,H_0)},\\
    (U)&\qquad \inf_{\theta\in\sigma(H_0)} \condexpv{\QQ_0}{}{e^{-\theta'(X_0-p_0)}}{\sigma(H_0)}.
\end{split}
\end{equation}
The clearing condition is again \eqref{E:cc_theta_simple}. Provided we find a market-clearing signal $H_0$ and price $p_0$ for the above optimization problems,  the equilibrium signal and price will be $H = H_0 + \pnsn$ and $p = p_0 + \pnsn$ respectively, and the optimal risk aversion adjusted policies will be $\wh{\psi}_{i} = \wh{\theta}_i + \whpi, i\in\cbra{I,U}$. As this result does not require Assumption \ref{ass: precision_a} we state it for general parameter values in the following.

\begin{proposition}\label{P:pt_equilibrium_a}
There is a price-taking equilibrium. The market signal is
\begin{equation}\label{E:pt_signal_a}
H = H_0 + \pnsn =  G_0 + \recip{\alpha_I}P_I^{-1}Z_N + \pnsn = G + \recip{\alpha_I}P_I^{-1}Z_N.
\end{equation}
$H$ is of the same form as $G$, but  with lower precision
\begin{equation}\label{E:pt_PU_a}
P_U \dfn \left(P_I^{-1} + \frac{1}{\alpha_I^2} P_I^{-1} P_N^{-1} P_I^{-1}\right)^{-1}.
\end{equation}
The equilibrium price is 
\begin{equation}\label{E:pt_price_a}
\begin{split}
p(H) &= \pnsn + \left(\alpha_I(P_I + P_X) + \alpha_U(P_U + P_X)\right)^{-1}(\alpha_I P_I + \alpha_U P_U)\left(H-\pnsn\right).
\end{split}
\end{equation}
The optimal policies for $I$ and $U$ are
\begin{equation}\label{E:hatpsiI_a}
\begin{split}
    \wh{\psi}_I &= \whpi + P_X\pnsn + P_I G - (P_I+P_X)p(H),\\
    \wh{\psi}_U &= \whpi + P_X\pnsn + P_U H - (P_U+P_X)p(H).
\end{split}
\end{equation}
\end{proposition}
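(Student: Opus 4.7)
The plan is to follow the same route already used for the price-impact case, but avoiding the self-referential structure that the price-impact setting introduces through $M_{p,\iota}$. First I would move to the de-meaned setting under $\QQ_0$ exactly as in \eqref{E:transform_1}--\eqref{E:demean}, so that (i) the outstanding supply is eliminated and the clearing condition becomes \eqref{E:cc_theta_simple}, (ii) under $\QQ_0$ one has $X_0\sim N(0,P_X^{-1})$, $G_0 = X_0 + Z_I$ with $Z_I\sim N(0,P_I^{-1})$, and $Z_N\sim N(0,P_N^{-1})$, all jointly independent. I would then posit the linear ansatz $H_0 = G_0 + \Lambda Z_N$ for an unknown matrix $\Lambda$, leave the pricing function $p_0 = p_0(H_0)$ unspecified for the moment, and compute the two agents' best responses in terms of $\Lambda$ and $p_0$.

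The best responses come from the standard exponential-quadratic minimization $\inf_{\theta} \exp(-\theta'(m-p_0) + \tfrac12 \theta' P^{-1}\theta)$, whose optimizer is $\theta = P(m - p_0)$, applied to the relevant conditional Gaussian laws of $X_0$. For the insider, $\sigma(G_0,H_0) = \sigma(G_0,Z_N)$ and since $Z_N\indep (X_0,Z_I)$ the conditional law of $X_0$ given this $\sigma$-algebra has precision $\pix = P_X + P_I$ and mean $\pix^{-1}P_I G_0$, giving
\begin{equation*}
\wh{\theta}_I = P_I G_0 - (P_I+P_X) p_0(H_0).
\end{equation*}
For the uninformed trader, write $H_0 = X_0 + (Z_I + \Lambda Z_N)$ and set $P_U(\Lambda) = (P_I^{-1} + \Lambda P_N^{-1}\Lambda')^{-1}$; the conditional law of $X_0$ given $H_0$ is Gaussian with precision $P_X + P_U$ and mean $(P_X+P_U)^{-1}P_U H_0$, yielding
\begin{equation*}
\wh{\theta}_U = P_U H_0 - (P_U+P_X)p_0(H_0).
\end{equation*}

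Next, I would substitute into the clearing condition \eqref{E:cc_theta_simple} after rewriting $G_0 = H_0 - \Lambda Z_N$, producing an identity of the form
\begin{equation*}
0 = \bigl(\alpha_I P_I + \alpha_U P_U - \bigl(\alpha_I(P_I+P_X) + \alpha_U(P_U+P_X)\bigr) L\bigr)H_0 + (1_d - \alpha_I P_I \Lambda)\,Z_N,
\end{equation*}
where $p_0(H_0) = L H_0$ for a matrix $L$ to be determined. Because $H_0$ and $Z_N$ are jointly non-degenerate Gaussian (for $\Lambda$ invertible), each coefficient must vanish separately. The $Z_N$ coefficient pins down $\Lambda = (1/\alpha_I)P_I^{-1}$, which makes $P_U$ equal to the matrix in \eqref{E:pt_PU_a}; the $H_0$ coefficient then yields $L$ and hence the pricing function \eqref{E:pt_price_a} after translating $H_0 \mapsto H = H_0 + \pnsn$, $p_0 \mapsto p = p_0 + \pnsn$. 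Converting $\wh{\theta}_i$ back through $\wh{\psi}_i = \wh{\theta}_i + \whpi$ and absorbing the $\pnsn$ shifts produces the $P_X \pnsn$ terms in \eqref{E:hatpsiI_a}.

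The only non-routine points are verifying measurability and the argument that the clearing identity must split into two equations; these follow because $\det \Lambda \neq 0$ ensures $(G_0,Z_N)$ and $(H_0,Z_N)$ generate the same $\sigma$-algebra and the joint density of $(H_0,Z_N)$ has full support, so a linear relation that holds $\QQ_0$-almost surely must hold identically, forcing both matrix coefficients to vanish. Strict convexity of each exponential-quadratic problem guarantees uniqueness of the best responses, so no further verification of optimality is needed. I would close by noting the proof does not use Assumption \ref{ass: precision_a}; the absence of any self-referential dependence of $p_0$ on the insider's trade (which is what introduces the cubic \eqref{E:cubic_alt_n} in the price-impact case) is precisely why the general-covariance case can be solved in closed form here.
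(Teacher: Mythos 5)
Your proposal is correct and follows essentially the same route as the paper's proof: de-mean under $\QQ_0$, posit $H_0 = G_0 + \Lambda Z_N$, derive the affine best responses $\wh{\theta}_I = P_I G_0 - (P_I+P_X)p_0(H_0)$ and $\wh{\theta}_U = P_U H_0 - (P_U+P_X)p_0(H_0)$ from the conditional Gaussian laws, and read off $\Lambda = (1/\alpha_I)P_I^{-1}$ and the pricing matrix from the clearing condition before translating back. Your added care in splitting the clearing identity into separately vanishing coefficients on $H_0$ and $Z_N$ (via full support of the joint Gaussian) is a slightly more explicit justification of a step the paper treats as immediate, but it is the same argument.
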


\subsection*{No private signal (NS) equilibrium}  We next state the equilibrium results, generalizing Proposition \ref{P:no_signal_simple_n}) for  no private signal. These results are valid under Assumption \ref{ass: precision_a}. Here, with the notation of \eqref{E:lambda_alpha_R_def_n} $P_U$ from \eqref{E:pt_PU_a} reduces to (compare with \eqref{E:p_uiota_def_a})
\begin{equation}\label{E:pu_def_a}
    P_U = p_U P_X;\qquad p_U\dfn \left(\frac{\consa p_I^2 }{1 + \consa p_I}\right)P_X,
\end{equation}
which shows that $p_U\to 0$ as $p_I\to 0$ on the order of $p_I^2$.

\begin{proposition}\label{P:no_signal_a}
Under Assumption \ref{ass: precision_a}, the no-signal equilibrium corresponds to $p_I = 0$. In the price-taking case, the equilibrium price and the risk aversion adjusted optimal positions are 
\begin{equation}\label{E:no_signal_competi_a}
p_{ns}=\pnsn + \consb P_X^{-1}\frac{Z_N}{\alpha_I};\qquad \wh{\psi}_{ns,I} = \wh{\psi}_{ns,U}  =  \whpi - \consb\frac{Z_N}{\alpha_I},
\end{equation}
where $\lambda$ is defined in \eqref{E:lambda_alpha_R_def_n}. In the price-impact case, the equilibrium price is
\begin{equation*}
    \begin{split}
    p_{ns,\iota} &= \pnsn  + \frac{\consb}{1-\consb^2}P_X^{-1}\frac{Z_N}{\alpha_I}.
    \end{split}
\end{equation*}
The risk aversion adjusted optimal policies for $I$ and $U$ are 
\begin{equation*}
    \begin{split}
    \wh{\psi}_{ns,\iota,I} &= \whpi - \frac{\consb}{1+\consb}\frac{Z_N}{\alpha_I};\qquad \wh{\psi}_{ns,\iota,U} = \whpi - \frac{\consb}{1-\consb^2}\frac{Z_N}{\alpha_I}.
    \end{split}
\end{equation*}

\end{proposition}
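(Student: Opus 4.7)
The plan is to establish the two claims in parallel by taking the $p_I \to 0$ limit in the equilibrium formulas of Proposition~\ref{P:pt_equilibrium_a} and Theorem~\ref{thm:pi_a}, while separately verifying that this limit is a bona fide equilibrium for the degenerate problem in which the insider receives no informative signal. The first observation is the ``no-signal $\Leftrightarrow$ $p_I = 0$'' equivalence: when $P_I = 0$ the signal $G = X + P_I^{-1/2}\E_I$ carries no information about $X$, so every trader's information set reduces to what is revealed by the price, and the market-clearing and individual-optimality conditions are the $p_I \to 0$ limits of the corresponding conditions in the with-signal problem. Since these conditions depend continuously on $p_I$ (under Assumption~\ref{ass: precision_a}), it suffices to pass to the limit in the known equilibria.

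For the price-taking case, Assumption~\ref{ass: precision_a} gives $P_I = p_I P_X$ and, from \eqref{E:pu_def_a}, $P_U = p_U P_X$ with $p_U = \kappa p_I^2/(1+\kappa p_I)\to 0$. The matrix in front of $H-\pnsn$ in \eqref{E:pt_price_a} converges to $((\alpha_I+\alpha_U)P_X)^{-1}$, and the potentially singular factor
$$(\alpha_I P_I+\alpha_U P_U)(H-\pnsn) = (\alpha_I P_I + \alpha_U P_U)G_0 + \big(1_d + (\alpha_U/\alpha_I) P_U P_I^{-1}\big) Z_N$$
has a finite limit because $p_I G_0 \to 0$ in $L^2$ and $P_U P_I^{-1} = (p_U/p_I)1_d \to 0$, leaving only $Z_N$. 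Using $\beta/\alpha_I = 1/(\alpha_I+\alpha_U)$ yields $p_{ns} = \pnsn + \beta P_X^{-1} Z_N/\alpha_I$. Substituting into \eqref{E:hatpsiI_a}, the signal contributions $P_I G$ and $P_U H$ both vanish in the limit, and $\whpi + P_X\pnsn - P_X p_{ns} = \whpi - \beta Z_N/\alpha_I$ recovers the stated common optimal position.

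For the price-impact case, set $p_I = 0$ in the cubic \eqref{E:cubic_alt_n}, which reduces to $(1+y)^2\bigl(1 - ((1-\beta)/\beta)y\bigr) = 0$; the unique positive root is $\wh{y} = \beta/(1-\beta)$, and by continuity of the positive solution of \eqref{E:cubic_alt_n} in $p_I$, $\wh{y}(p_I)\to \beta/(1-\beta)$. Plugging this into \eqref{E:constants_nice_a}, we obtain $M_{p,\iota}\to (\beta/(1-\beta))P_X^{-1}$, $p_{U,\iota}\to 0$, and the composite slope in \eqref{E:pi_price_new_a} computes via
$$M_{p,\iota}\bar{M}\,\Lambda_{\iota} = \frac{\wh{y}(1+\wh{y})}{(1+p_I)(1+2\wh{y})\alpha_I}\,P_X^{-1} \; \xrightarrow[p_I\to 0]{}\; \frac{\beta}{(1-\beta^2)\alpha_I}\,P_X^{-1},$$
producing the claimed $p_{ns,\iota}$. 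The positions \eqref{E:pi_opt_policies_a} simplify because $\Lambda_\iota^{-1}G = (\alpha_I p_I/(1+\wh{y}))P_X G\to 0$ and $P_{U,\iota} H_\iota\to 0$ (the divergent $\Lambda_\iota Z_N$ inside $H_\iota$ is multiplied by $p_{U,\iota}P_X\Lambda_\iota = O(p_I)$); direct substitution and the simplification $(1+p_I)/(1+\wh{y}) \to 1-\beta$ give $\wh{\psi}_{ns,\iota,I} = \whpi + (1-\beta)P_X(\pnsn - p_{ns,\iota}) = \whpi - (\beta/(1+\beta))Z_N/\alpha_I$, and analogously for $\wh{\psi}_{ns,\iota,U}$.

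The main technical obstacle will be the careful handling of the two ``$1/p_I$'' divergences, namely $P_I^{-1}Z_N$ in $H-\pnsn$ and $\Lambda_\iota Z_N$ in $H_\iota -\pnsn$, which blow up but always appear premultiplied by an $O(p_I)$ factor in the pricing and position formulas. The cleanest presentation is to regroup the terms in advance as $P_I(H-\pnsn)$ and $\Lambda_\iota^{-1}(H_\iota - \pnsn)$ (both bounded as $p_I\downarrow 0$) so that all limits become transparent. As a consistency check, one may alternatively solve the no-signal problem directly: guessing $V_{0,p,\iota}=0$ and $M_{p,\iota}=m P_X^{-1}$, the insider and uninformed FOCs combined with clearing \eqref{E:cc_theta_simple} yield the scalar equation $m = \alpha_I/\alpha_U = \beta/(1-\beta)$, which agrees with $\wh{y}(0)$ and reproduces the same formulas, confirming the result.
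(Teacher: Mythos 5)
Your proposal is correct and takes essentially the same route as the paper: both pass to the $p_I\to 0$ limit in the established equilibria of Proposition~\ref{P:pt_equilibrium_a} and Theorem~\ref{thm:pi_a}, using $\wh{y}\to\consb/(1-\consb)$ from the cubic and cancelling the $1/p_I$ divergences in $P_I^{-1}Z_N$ and $\Lambda_\iota Z_N$ against the $O(p_I)$ prefactors (the paper handles this bookkeeping by first rewriting the signals in terms of the independent standard normals $\E_X,\E_I,\E_N$ via \eqref{E:normals_a}--\eqref{E:normals_h0_a}, whereas you regroup the terms directly, but the computation is the same). The direct-verification consistency check at the end is extra and not in the paper, but harmless.
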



\section{Proofs}\label{AS:proofs}

\subsection{Proofs from Section \ref{S:equilibrium_simple}}\label{AS:proofs_of_section_1}

In this section, we first prove the cubic equation result Proposition \ref{prop:one_d_n}.  We then translate the general equilibrium results of Theorem \ref{thm:pi_a}, Proposition \ref{P:pt_equilibrium_a} and Proposition \ref{P:no_signal_a} into the forms given in Theorem \ref{thm:pi_simple_n}, Proposition \ref{P:pt_equilibrium_simple_n} and Proposition \ref{P:no_signal_simple_n} respectively.


\begin{proof}[Proof of Proposition \ref{prop:one_d_n}]
Define $g(y)$ as the cubic function on the right side of \eqref{E:cubic_alt_n}. It is clear that $g(0) > 0$ and $\lim_{y\to\infty} g(y) = -\infty$.  This shows there exists a solution $\wh{y}>0$ to $g(\wh{y}) = 0$.  As for uniqueness of positive solutions, straight-forward computations show for any solution to $g(y) = 0$ that
\begin{equation*}
(1+y)\dot{g}(y) = -\frac{1-\consb}{\consb}(1+y)\consa p_I - \consa p_I - (1+y)^3\frac{1-\consb}{\consb (1+p_I)}<0.
\end{equation*}
Thus, for any solution $y>-1$, $g$ strictly decreasing at $y$ and hence there is a unique solution exceeding $-1$, which is in fact positive.

\end{proof}

\subsubsection{Theorem \ref{thm:pi_simple_n} from Theorem \ref{thm:pi_a}}

Throughout, we use that in Section \ref{S:equilibrium_simple}, $P_X=1,\mu_X = 0$.  We also use Assumption \ref{ass: precision_a}. We start with the market signal. Here, \eqref{E:pi_signal_simple_n} follows from \eqref{E:H_def_3_a} and \eqref{E:constants_nice_a} (throughout at $\yvals = \wh{\yvals}$). The precision $p_{U,\iota}$ in \eqref{E:PU_pi_simple_n} is immediate from \eqref{E:lambda_alpha_R_def_n} and \eqref{E:constants_nice_a}.  The price in \eqref{E:pi_price_new_simple_n} is immediate from \eqref{E:pi_price_new_a} and \eqref{E:constants_nice_a}.

The insider and uninformed trader's policy in \eqref{E:pi_psiU_simple_n} follows from \eqref{E:pi_opt_policies_a} and \eqref{E:new_p_nsn_simple_n} using \eqref{E:constants_nice_a}.  The slope $M_{p,\iota}$ follows from \eqref{E:constants_nice_a}, and the intercept $V_{p,\iota}$ follows from \eqref{E:impact_form_a} and \eqref{E:new_p_nsn_simple_n} given the translation $V_{p,\iota} = V_{0,p,\iota} + \pnsn$ and the fact we showed (see right below \eqref{E:pi_clearing_cond_psi_a}) that $V_{0,p,\iota} = 0$. 


\subsubsection{Proposition \ref{P:pt_equilibrium_simple_n} from Proposition \ref{P:pt_equilibrium_a} }

Throughout, we again use that in Section \ref{S:equilibrium_simple}, $P_X=1,\mu_X = 0$, and we enforce Assumption \ref{ass: precision_a}.  The market signal in \eqref{E:pt_signal_simple_n} follows immediately from \eqref{E:pt_signal_a}. The precision $p_{U}$ in \eqref{E:pt_PU_simple_n} is immediate from \eqref{E:pu_def_a}. The price $p$ in \eqref{E:pt_price_simple_n} follows directly from \eqref{E:lambda_alpha_R_def_n},  \eqref{E:pt_price_a} and \eqref{E:pu_def_a}. The trading strategies for $I,U$ are immediate given \eqref{E:hatpsiI_a} and \eqref{E:new_p_nsn_simple_n}. 

\subsubsection{Proposition \ref{P:no_signal_simple_n} from Proposition \ref{P:no_signal_a} }

This is an immediate consequence of Proposition \ref{P:no_signal_a} given Assumption \ref{ass: precision_a} and that in Section \ref{S:equilibrium_simple}, $P_X=1,\mu_X = 0$.


\subsection{Proofs from Appendix \ref{AS:equilibrium_general_a}}\label{AS:equilibrium_a}

To ease notation, throughout all proofs of this section we omit the $p,\iota$ subscript from $M_{p,\iota}, V_{0,p,\iota}$ in the price response function $p_{\iota}$ from \eqref{E:I_problem_aa}, and we omit the functional dependence on $\yval$.

\begin{proof}[Proof of Lemma \ref{L:pi_opt_I_a}]
The $\QQ_0$ law of $X_0$ given $\sigma(G_0,Z_N)$ has density
\begin{equation}\label{E:G_cond_result_a}
\frac{\qcondprobs{X_0 \in dx}{\sigma(G_0,Z_N)}}{\QQ_0\bra{X_0\in dx}} = \left(\frac{e^{-\frac{1}{2}x'P_I x + x'P_I g}}{\expv{\QQ_0}{}{e^{-\frac{1}{2}X_0'P_I X + X_0'P_I g}}}\right)\bigg|_{g=G_0}.
\end{equation}
Using this in \eqref{E:I_problem_aa}, on $\cbra{G_0=g,Z_N = z}$ the insider minimizes over $\theta=\theta(g,z)$
\begin{equation*}
\begin{split}
&\theta'\left(M\left(\theta+\frac{z}{\alpha_I}\right) + V_{0}\right) + \log\left(\frac{\expv{\QQ_0}{}{e^{ -\theta'X_0-\frac{1}{2}X_0'P_I X_0 + X_0'P_I g}}}{\expv{\QQ_0}{}{e^{-\frac{1}{2}X_0'P_I X_0 + X_0'P_I g}}}\right).\\
\end{split}
\end{equation*}
As $X_0\stackrel{\QQ_0}{\sim} N(0,P_X^{-1})$, this specifies to (using $\theta'M\theta = (1/2)\theta'(M+M')\theta$)
\begin{equation}\label{E:pi_min_prob_a}
\begin{split}
&\frac{1}{2}\theta'\left(M+M'+\pix^{-1}\right)\theta - \theta'\pix^{-1}\left(P_Ig - \pix\left(M\frac{z}{\alpha_I} + V_{0}\right)\right).
\end{split}
\end{equation}
Using $M$ as in \eqref{E:M_to_Y_a},  the optimizer $\wh{\theta}_{I,\iota}$ is
\begin{equation}\label{E:pi_hatpsiI_a}
\begin{split}
\wh{\theta}_{I,\iota}(g,z) &= \pix^{1/2}(\yval+\yval'+1_d)^{-1}\pix^{-1/2}\bigg(P_I g  - \pix V_{0}   - \pix^{1/2}\yval\pix^{-1/2}\frac{z}{\alpha_I}\bigg).
\end{split}
\end{equation}
The identity in \eqref{E:pi_hpsi_noise_a} with $\MM,\Lambda_{\iota},\calv$ from \eqref{Mbar_to_Y_a} follow by direct computation.
\end{proof}

\begin{proof}[Proof of Lemma \ref{L:pi_opt_U_a}]
Using \eqref{E:H_def_2_a} one can show the $\QQ_0$ law of $X_0$ given $\sigma(H_{0,\iota})$ has density
\begin{equation}\label{E:Hi_cond_result_a}
\frac{\qcondprobs{X_0 \in dx}{\sigma(H_{0,\iota})}}{\QQ_0\bra{X_0\in dx}}= \left(\frac{e^{-\frac{1}{2}x'P_{U,\iota} x + x'P_{U,\iota} h}}{\expv{\QQ_0}{}{e^{-\frac{1}{2}X_0'P_{U,\iota} X_0 + X_0'P_{U,\iota} h}}}\right)\bigg|_{h=H_{0,\iota}}.
\end{equation}
Using this in \eqref{E:U_problem_aa} and that $X_0\stackrel{\QQ_0}{\sim} N(0,P_X^{-1})$, on $\cbra{H_{0,\iota}=h}$ the uninformed trader minimizes over $\theta=\theta(h)$
\begin{equation}\label{E:pi_U_min_val_a}
\begin{split}
&\frac{1}{2}\theta'\left(P_{U,\iota} +P_X)^{-1}\right)\theta - \theta'(P_{U,\iota}+P_X)^{-1}\left(P_{U,\iota}h - (P_{U,\iota} + P_X)p_{\iota}(h)\right).
\end{split}
\end{equation}
\eqref{E:pi_hatpsiU_a} is immediate.
\end{proof}


\begin{proof}[Proof of Proposition \ref{prop:cubic_matrix_a}]
Throughout we suppress the dependence of all functions on $\yval$. First, \eqref{E:pi_clear_h_a} follows directly from \eqref{E:pi_clear_h_aa} after dividing by $\alpha_U$ and right-multiplying by $\MM^{-1}$.  As such, provided there is a solution $\whyval$ the price impact equilibrium follows with market signal $H_{\iota}$ from \eqref{E:H_def_3_a} and price $p_{\iota}$ from \eqref{E:pi_price_new_a}.  The remainder of the proof identifies the policies in \eqref{E:pi_opt_policies_a} starting with the uninformed trader. Here, using $\wh{\theta}_{U,\iota}$ from \eqref{E:pi_hatpsiU_a}, \eqref{E:demean} and $\wh{\psi}_{U,\iota} = \whpi + \wh{\theta}_{U,\iota}$ we obtain
\begin{equation*}
\begin{split}
    \wh{\psi}_{U,\iota} &= \whpi + P_{U,\iota}\left(H_{\iota} - \pnsn\right) - \left(P_{U,\iota} + P_X\right)\left(p_{\iota}(H_{\iota}) - \pnsn\right),\\
    &= \whpi + P_X\pnsn + P_{U,\iota}H_{\iota} - \left(P_{U,\iota} + P_X\right)p_{\iota}(H_{\iota}).
\end{split}
\end{equation*}
As for $I$ our goal is to write the strategy in the same form as for $U$, i.e. as a linear function of $\whpi,\pnsn,G_0$ and $p_\iota$.  Starting with $\wh{\theta}_{I,\iota}$ in \eqref{E:pi_hatpsiI_a} (at $g=G_0 = G-\pnsn$ and using that in equilibrium $V_{0,p,\iota} = 0$) and $\wh{\psi}_{I,\iota} = \whpi + \wh{\theta}_{I,\iota}$ we find
\begin{equation*}
    \wh{\psi}_{I,\iota} = \whpi + \pix^{1/2}(\whyval+\whyval'+1_d)^{-1}\pix^{-1/2}\bigg(P_I(G-\pnsn)   - \pix^{1/2}\whyval\pix^{-1/2}\frac{Z_N}{\alpha_I}\bigg)
\end{equation*}
Next, from \eqref{E:demean}, \eqref{E:H_def_2_a} and \eqref{E:pi_p_funct_a} at $V_{0,p,\iota} = 0$ we find
\begin{equation*}
    \begin{split}
        Z_N &= \Lambda_{\iota}^{-1}\MM^{-1}M_{p,\iota}^{-1}\left(p_{\iota}(H_{\iota})-\pnsn\right) - \Lambda_{\iota}^{-1}\left(G-\pnsn\right).
    \end{split}
\end{equation*}
This gives $\wh{\psi}_{I,\iota} = \whpi + \mathbf{M}_1\pnsn + \mathbf{M}_2 G + \mathbf{M}_3 p_{\iota}(H_{\iota})$ where
\begin{equation*}
    \begin{split}
        \mathbf{M}_1 = \pix^{1/2}(\whyval+\whyval'+1_d)^{-1}\pix^{-1/2}\left(-P_I - \pix^{1/2}\whyval\pix^{-1/2}\frac{1}{\alpha_I}\Lambda_{\iota}^{-1}\left(1_d - \MM^{-1}M_{p,\iota}^{-1}\right)\right).
    \end{split}
\end{equation*}
From \eqref{E:M_to_Y_a}, \eqref{Mbar_to_Y_a} we find
\begin{equation*}
    \begin{split}
        \mathbf{M}_1 &= \pix^{1/2}(\whyval+\whyval'+1_d)^{-1}\pix^{-1/2}\bigg(-P_I - \pix^{1/2}\whyval(1_d+\whyval')^{-1}\pix^{-1/2}P_I\bigg(1_d\\
        &\qquad\qquad - P_I^{-1}\pix^{1/2}(1_d+\whyval+\whyval')\whyval^{-1}\pix^{1/2}\bigg)\bigg),\\
        &= \pix^{1/2}(\whyval+\whyval'+1_d)^{-1}\pix^{-1/2}\bigg(-\pix^{1/2}(1_d+\whyval+\whyval')(1_d+\whyval')^{-1}\pix^{-1/2}P_I\\
        &\qquad\qquad + \pix^{1/2}(1_d+\whyval+\whyval')(1_d+\whyval')^{-1}\pix^{1/2}\bigg),\\
        &= \pix^{1/2}(1_d+\whyval')^{-1}\pix^{-1/2}(-P_I+\pix) = \pix^{1/2}(1_d+\whyval')^{-1}\pix^{-1/2}P_X.
    \end{split}
\end{equation*}
Next, we have
\begin{equation*}
    \begin{split}
        \mathbf{M}_2 &= \pix^{1/2}(\whyval+\whyval'+1_d)^{-1}\pix^{-1/2}\bigg(P_I + \pix^{1/2}\whyval\pix^{-1/2}\frac{1}{\alpha_I}\Lambda_{\iota}^{-1}\bigg),\\
        &= \pix^{1/2}(\whyval+\whyval'+1_d)^{-1}\pix^{-1/2}\bigg(P_I\alpha_I\Lambda_{\iota} + \pix^{1/2}\whyval\pix^{-1/2}\bigg)\frac{1}{\alpha_I}\Lambda_{\iota}^{-1},\\
        &= \pix^{1/2}(\whyval+\whyval'+1_d)^{-1}\pix^{-1/2}\bigg(\pix^{1/2}(1_d+\whyval')\pix^{-1/2} + \pix^{1/2}\whyval\pix^{-1/2}\bigg)\frac{1}{\alpha_I}\Lambda_{\iota}^{-1},\\
        &=\frac{1}{\alpha_I}\Lambda_{\iota}^{-1}.
    \end{split}
\end{equation*}
Lastly, we have
\begin{equation*}
    \begin{split}
        \mathbf{M}_3 &= -\pix^{1/2}(\whyval+\whyval'+1_d)^{-1}\pix^{-1/2}\times \pix^{1/2}\whyval\pix^{-1/2}\frac{1}{\alpha_I}\Lambda_{\iota}^{-1}\MM^{-1}M_{p,\iota}^{-1},\\
        &= -\pix^{1/2}(\whyval+\whyval'+1_d)^{-1}\whyval(1_d+\whyval')^{-1}(1_d+\whyval+\whyval')\whyval^{-1}\pix^{1/2},\\
        &= -\pix^{1/2}(1_d+\whyval')^{-1}\pix^{1/2}.
    \end{split}
\end{equation*}
\end{proof}


\begin{proof}[Proof of Proposition \ref{P:pt_equilibrium_a}]
Assume $H_0$ takes the form $H_0 = G_0 + \Lambda Z_N$ for some to-be-determined matrix $\Lambda$ and that the price is $p_0 = p_0(H_0)$. Then $\sigma(G_0,H_0) = \sigma(G_0,Z_N)$, and using the $\QQ_0$ conditional law of $X_0$ given $(G_0,Z_N)$ in \eqref{E:G_cond_result_a} as well as  $X_0 \stackrel{\QQ_0}{\sim} N(0,P_X^{-1})$,  for the insider the problem in \eqref{E:pt_opt_prob_a} is equivalent to minimizing over $\theta$
\begin{equation}\label{E:pt_min_eq_a}
\begin{split}
& \frac{1}{2}\theta'(P_I+P_X)^{-1}\theta - \theta'(P_I+P_X)^{-1}\left(P_I G_0 - (P_I+P_X)p_0(H_0)\right).
\end{split}
\end{equation}
The optimizer is $\wh{\theta}_I = P_I G_0 - (P_I+P_X)p_0(H_0)$. Similarly, as the uninformed signal is $H_0 = X_0 + (Y_I + \Lambda Z_N)$ it is of the same form as the insiders, just with precision $P_U$ instead of $P_I$.  As such, the uninformed trader's optimal policy is $\wh{\theta}_U = P_U H_0 - (P_U+P_X)p_0(H_0)$. The clearing condition \eqref{E:cc_theta_simple} is
\begin{equation*}
    0 = \alpha_I P_I \left(G_0 + \frac{1}{\alpha_I}P_I^{-1}Z_N\right) + \alpha_U P_U H_0 - (\alpha_I(P_I + P_X) + \alpha_U (P_U + P_X))p_0(H_0).
\end{equation*}
This gives $\Lambda = (1/\alpha_I)P_I^{-1}$ and hence the market signal in \eqref{E:pt_signal_a} as well as precision $p_U$ in \eqref{E:pt_PU_a}.  $p_0$ takes the form
\begin{equation*}
    p_0(H_0) = (\alpha_I(P_I + P_X) + \alpha_U (P_U + P_X))^{-1}\big(\alpha_I P_I + \alpha_U P_U)H_0,
\end{equation*}
which yields the price in \eqref{E:pt_price_a}.  Lastly, using $\wh{\psi}_I = \wh{\theta}_I +\whpi$ and $\wh{\psi}_U = \wh{\theta}_U +\whpi$ and $p(H) = p_o(H_0)-\pnsn$ we obtain \eqref{E:hatpsiI_a}.
\end{proof}
 

The proof of Proposition \ref{P:no_signal_a}, as well as those in both Sections \ref{AS:sigpx_compare} and \ref{AS:welfare} simplify if we adjust the notation in \eqref{E:lambda_alpha_R_def_n} by defining
\begin{equation}\label{E:new_lambda_alpha_R_def_a}
\cons \dfn \consa p_I = \alpha_I^2 p_I p_N; \hspace{5pt} \rtoi := \frac{\alpha_I}{\alpha_I + \alpha_U};\hspace{5pt} \rtou :=  \frac{\alpha_U}{\alpha_I+\alpha_U},\hspace{5pt} R \dfn \frac{1}{1+p_I}.
\end{equation}
The cubic equation \eqref{E:cubic_alt_n} becomes
\begin{equation}\label{E:new_cubic_a}
0 = (1+y)^2\left(1 - \frac{\rtou R}{\rtoi}y\right) + \cons \left(\frac{\rtou}{\rtoi}(1+y) + 1\right).
\end{equation}
Next, under Assumption \ref{ass: precision_a} the constants $p_{U,\iota}$ from \eqref{E:p_uiota_def_a} and $p_U$ from \eqref{E:pu_def_a} are
\begin{equation}\label{E:pi_nice_matrices_a}
    \begin{split}
        p_{U,\iota} \dfn \frac{(1-R)\cons}{R(\cons+(1+\wh{y})^2)}; \qquad p_U = \frac{(1-R)\cons}{R(1+\cons)}.
    \end{split}
\end{equation}
The matrices of Proposition \ref{P:pt_equilibrium_a} are
\begin{equation}\label{E:pt_nice_matrices_a}
    \begin{split}
        \alpha_I P_I + \alpha_U P_U &= \frac{(\alpha_I+\alpha_U)(1-R)(\rto_I + \cons)}{R(1+\cons)}P_X,\\
        \alpha_I(P_I+P_X) + \alpha_U(P_U+P_X) &= \frac{(\alpha_I+\alpha_U)(\rto_I + \cons + R\rto_U)}{R(1+\cons)}P_X.
    \end{split}
\end{equation}
Lastly, from \eqref{E:ZN_def_a}, \eqref{E:demean}, Assumption \ref{ass: precision_a} and \eqref{E:new_lambda_alpha_R_def_a} we obtain
\begin{equation}\label{E:normals_a}
    X_0 = P_X^{-1/2}\E_X;\quad G_0 = X_0 + \sqrt{\frac{R}{1-R}}P_X^{-1/2}\E_I;\qquad \frac{Z_N}{\alpha_I} = \sqrt{\frac{1-R}{\cons R}}P_X^{1/2}\E_N,
\end{equation}
where $\E_X,\E_I,\E_N$ are three independent $N(0,1_d)$ random variables under $\QQ_0$. This implies we can write $H_0$ from \eqref{E:H_def_2_a} as (see \eqref{E:pi_nice_matrices_a})
\begin{equation}\label{E:normals_h0_a}
    H_0 = X_0 + \sqrt{\frac{R}{1-R}}P_X^{-1/2}\E_I + \frac{R}{\alpha_I(1-R)}P_X^{-1}Z_N.
\end{equation}

\begin{proof}[Proof of Proposition \ref{P:no_signal_a}] We use the notation in \eqref{E:new_lambda_alpha_R_def_a}, and suppress $\wh{y}$ from $M_{p,\iota},\MM$. We start with the price taking equilibria. Using \eqref{E:pt_nice_matrices_a} and \eqref{E:normals_h0_a} the price $p(H)$ from \eqref{E:pt_price_a} is 
\begin{equation*}
p(H) = \pnsn + \frac{(1-R)(\rto_I+\cons)}{\rto_I + \cons + R\rto_U}\left(X_0 + \sqrt{\frac{R}{1-R}}P_X^{-1/2}\E_I + \frac{R}{\alpha_I(1-R)}P_X^{-1}Z_N\right).
\end{equation*}
From \eqref{E:new_lambda_alpha_R_def_a} we see that $p_I\to 0$ implies  $R\to 1,\cons \to 0$. Therefore, almost surely
\begin{equation*}
    \begin{split}
    \lim_{p_I \to 0} p(H) &= \pnsn + \frac{\rto_I}{(\rto_I+\rto_U)} P_X^{-1}\frac{Z_N}{\alpha_I} = \pnsn + \rto_I P_X^{-1}\frac{Z_N}{\alpha_I},
    \end{split}
\end{equation*}
because $\rto_U+\rto_I = 1$. This gives the pricing formula in \eqref{E:no_signal_competi_a} as $\rto=\rto_I$. As for the positions, from \eqref{E:hatpsiI_a}, \eqref{E:pi_nice_matrices_a} and \eqref{E:normals_a} we find
\begin{equation*}
    \begin{split}
        \wh{\psi}_I &= \whpi + \frac{1-R}{R}P_X\left(X_0 + \sqrt{\frac{R}{1-R}}P_X^{-1/2}\E_I\right) - \frac{1}{R}P_X\left(p(H) - \pnsn\right),\\
        \wh{\psi}_U &= \whpi + \frac{(1-R)\cons}{R(1+\cons)}P_X\left(X_0 + \sqrt{\frac{R}{1-R}}P_X^{-1/2}\E_I + \frac{R}{\alpha_I(1-R)}P_X^{-1}Z_N\right)\\
        &\qquad - \left(1 + \frac{(1-R)\cons}{R(1+\cons)}\right)\left(p(H) - \pnsn\right).
    \end{split}
\end{equation*}
Thus, almost surely as $p_I \to 0$ we have
\begin{equation*}
    \begin{split}
        \wh{\psi}_I &\to \whpi - P_X\left(\lambda_I P_X^{-1}\frac{Z_N}{\alpha_I}\right) = \whpi - \lambda_I\frac{Z_N}{\alpha_I},\\
        \wh{\psi}_U &\to \whpi - P_X\left(\lambda_I P_X^{-1}\frac{Z_N}{\alpha_I}\right) = \whpi - \lambda_I\frac{Z_N}{\alpha_I}.
    \end{split}
\end{equation*}
We next consider the price impact equilibrium.  First, from \eqref{E:H_def_2_a}, \eqref{E:constants_nice_a} and \eqref{E:normals_a} we find
\begin{equation}\label{E:pi_normals_h0_a}
    H_{0,\iota} = X_0 + \sqrt{\frac{R}{1-R}}P_X^{-1/2}\E_I + \frac{(1+\wh{y})R}{\alpha_I(1-R)}P_X^{-1}Z_N.
\end{equation}
From \eqref{E:pi_price_new_a} and \eqref{E:constants_nice_a} we obtain
\begin{equation*}
    p_{\iota}(H_{\iota}) = \pnsn + \frac{\wh{y}(1-R)}{1+2\wh{y}}\left(X_0 + \sqrt{\frac{R}{1-R}}P_X^{-1/2}\E_I + \frac{(1+\wh{y})R}{\alpha_I(1-R)}P_X^{-1}Z_N.\right).
\end{equation*}
From \eqref{E:new_lambda_alpha_R_def_a} and \eqref{E:new_cubic_a} we see that $p_I\to 0$ additionally implies $\wh{y} \to \rto_I/\rto_U$.  Therefore, almost surely, and using $\rto_U + \rto_I = 1$ we find
\begin{equation*}
    \lim_{p_I\to 0} p_{\iota}(H_{\iota})=\pnsn + \frac{\rto_I}{1-\rto^2_I}P_X^{-1}\frac{Z_N}{\alpha_I}.
\end{equation*}
As for the trading strategies, from \eqref{E:demean}, \eqref{E:pi_opt_policies_a}, \eqref{E:constants_nice_a} and \eqref{E:pi_nice_matrices_a} e find
\begin{equation*}
    \begin{split}
        \wh{\psi}_{I,\iota} &= \whpi + \frac{1}{1+\wh{y}}P_X\pnsn + \frac{1-R}{R(1+\wh{y})}P_X\left(X_0 + \sqrt{\frac{R}{1-R}}P_X^{-1/2}\E_I + \pnsn\right)  - \frac{R}{1+\wh{y}}P_X p_{\iota}(H_{\iota}),\\
        \wh{\psi}_{U,\iota} &= \whpi + P_X\pnsn + \frac{(1-R)\cons}{R(\cons+(1+\wh{y})^2} P_X\bigg(X_0 + \sqrt{\frac{R}{1-R}}P_X^{-1/2}\E_I\\
        &\qquad + \frac{(1+\wh{y})R}{\alpha_I(1-R)}P_X^{-1}Z_N + \pnsn\bigg) - \frac{\cons + R(1+\wh{y})^2}{R(\cons+(1+\wh{y})^2} P_X p_{\iota}(H_{\iota}).
    \end{split}
\end{equation*}
This gives the almost sure limits as $p_I \to 0$
\begin{equation*}
    \begin{split}
        \wh{\psi}_{I,\iota} &\to \whpi - \frac{\rto_I}{1+\rto_I}\frac{Z_N}{\alpha_I};\qquad \wh{\psi}_{U,\iota} \to \whpi - \frac{\rto_I}{1-\rto^2_I}\frac{Z_N}{\alpha_I}
    \end{split}
\end{equation*}
\end{proof}


\subsection{Proofs from Sections \ref{S:sigpx_compare_simple} \& \ref{S:demand_compare}}\label{AS:sigpx_compare}

As in Section \ref{AS:equilibrium_a}, throughout all proofs of this section we omit the $p,\iota$ subscript from $M_{p,\iota}, V_{p,\iota}$ in the price response function $p_{\iota}$ from \eqref{E:impact_form_n}.


\begin{proof}[Proof of Proposition \ref{prop:impact_signal_noise}]
The first statement follows directly from \eqref{E:signals_together} and $\wh{y}>0$. For the second, using \eqref{E:pi_nice_matrices_a} with $P_X = 1$ we see that $p_U>p_{U,\iota}$ is equivalent to $1 < (1+\wh{y})^2$, which holds as $\wh{y}>0$.
\end{proof}


\begin{proof}[Proof of Proposition \ref{prop:prices_simple}]
The formula for $p_{\iota}$ in \eqref{E:prices_simple} is taken directly from \eqref{E:pi_price_new_simple_n}. As for $p$, using \eqref{E:pt_price_simple_n} and \eqref{E:pt_nice_matrices_a} we find
\begin{equation}\label{E:pt_px_new_not_a}
    p(h) = \pnsn + \frac{(1-R)(\rto_I + \cons)}{\rto_I + \cons + R\rto_U}(h-\pnsn).
\end{equation}
The expression in \eqref{E:pi_price_new_simple_n} follows by the identifications in \eqref{E:new_lambda_alpha_R_def_a} and \eqref{E:lambda_alpha_R_def_n}. 
\end{proof}


\begin{proof}[Proof of Proposition \ref{prop:price_impact_signal}]
We retain the notation in \eqref{E:new_lambda_alpha_R_def_a} and suppress $\wh{y}$ from the $M,\MM$ functions. First, using \eqref{E:mg_alt_compare} we readily get that  $2\left(m_g-m_{g,\iota}\right)/(1-R) = 1/(1+2\wh{y}) + (\cons + \rto_I - \rto_U R)/(\cons+ \rto_I+ \rto_UR)$. This gives the result when $\cons + \rto_I> \rto_UR$.  When $\rto_U R > \cons+ \rto_I$, define $\wt{y}$ through $1/(1+2\wt{y}) = (\rto_U R - (\cons + \rto_I))/(\rto_UR + \cons + \rto_I)$. Thus, $m_g > m_{g,\iota}$ if and only if $\wh{y} < \wt{y}$.  In the proof of Proposition \ref{prop:one_d_n} we showed if we define $g$ by the right side of \eqref{E:cubic_alt_n} (see also \eqref{E:new_cubic_a}), then $g$ is strictly decreasing at $\wh{y}$.  Thus, if $g(\wt{y}) < 0$ it must be that $\wh{y} < \wt{y}$. Indeed, $\wh{y} = \wt{y}$ is not possible, and if $\wh{y} > \wt{y}$ then there must be some $\check{y} > 0$ with $g(\check{y}) = 0$, but by the uniqueness statement in Proposition \ref{prop:one_d_n} we know this is not possible as well.

It therefore suffices to show that $g(\wt{y}) < 0$. To this end,  write $p \dfn \rto_U R$ and $q \dfn \cons + \rto_I$ so that by assumption $p  > q$ and $\wt{y} = q/(p-q)$ and $\wt{y} + 1 = p/(p-q)$. In \eqref{E:new_cubic_a} we obtain
$$
g(\wt{y}) =\frac{p^2}{(p-q)^2}\left(1 - \frac{pq}{\rto_I(p-q)}\right) + \frac{\rto_Up \cons}{\rto_I(p-q)} + \cons.$$ 
As the common denominator $\rto_I(p-q)^3$ is positive, we need only show the numerator is negative.  The numerator is $\rto_Ip^2(p-q) - p^3q + \rto_U p \cons(p-q)^2 + \cons\rto_I(p-q)^3$.  If we group terms by powers of $p$ the cubic terms vanish, leaving
\begin{equation}\label{E:quad_form}
-\left(2\cons + \rto_I(1+\cons)\right)qp^2 + \left(\rto_U + 3\rto_I\right)\cons q^2 p - \cons\rto_Iq^3.
\end{equation}
Since $p>q$, the derivative of the above expression is bounded above by $-q^2\left(3\cons+ 2\rto_I\right) < 0$. Thus, \eqref{E:quad_form} is decreasing in $p$ when $p>q$ and hence bounded above by $-q^3\left(\cons+\rto_I\right) < 0$. The numerator is negative,  finishing the result.
\end{proof}

\begin{proof}[Proof of Proposition \ref{prop:price_impact_dd}] We again suppress $\wh{y}$. Using \eqref{E:new_lambda_alpha_R_def_a} we have
$
m_{\wh{\chi}} = (\cons+\rto_I)/\rto_U$ and $m_{\wh{\chi},\iota} = R\wh{y}.
$
Therefore, $m_{\wh{\chi}}  > m_{\wh{\chi},\iota}$ if and only if $\wh{y} < \wt{y} \dfn (\cons+\rto_I)/(\rto_U R)$. We will show $g(\wt{y}) < 0$ for $g$ defined by the right side of \eqref{E:new_cubic_a}, and this will give the result, as the proof of Proposition \ref{prop:price_impact_signal} showed for $y>0$ that $g(y) < 0$ if and only if $y > \wh{y}$.  To this end, from \eqref{E:new_cubic_a}, $g(\wt{y})=  (\rto_U R + \cons+\rtoi)^2\left(1-\cons+\rtoi/\rtoi\right)/(\rto_U^2 R^2) + \cons(\rtou R+\cons+\rtoi)/(R\rtoi) + \cons$, which equal to $-\cons\left[\rtou R(\cons +\rtoi)(1+\rtoi) + (\cons +\rtoi)^2\right]/(\rtoi\rtou^2 R^2)$, where we have used that $\rtoi+\rtou=1$. The result follows as $\cons,\rtoi, \rtou, R > 0$.
\end{proof}


\begin{proof}[Proof of Proposition \ref{P:more_trade}]
Using \eqref{E:new_lambda_alpha_R_def_a}, the expected value in \eqref{E:expected_orders_alt} is negative if and only if $\wh{y} > \ol{y} \dfn (\rto_I + \cons)/(2R\rto_U)$. Next, as shown in the proof of Propositions \ref{prop:price_impact_signal}, \ref{prop:price_impact_dd}, if we define function $g$ by the right side of \eqref{E:new_cubic_a} then $\wh{y} > \ol{y}$ if and only if $g(\ol{y}) > 0$. Direct calculation yields $g(\ol{y})=\cons  [(\rto_I/\cons-1)(2R\rto_U +\rto_I  + \cons)^2/(8\rto_I\rto_U^2R^2)$ +$4\cons R\rto_U^2(\rto_I + \cons  + 2R)]$. Now, let $\alpha_I \to 0$ while keeping everything else fixed.  From \eqref{E:new_lambda_alpha_R_def_a} we see $\rto_I/\cons \to \infty$, $\rto_I \to 0$, $\rto_U \to 1$ and $R$ is fixed. This implies that $g(\ol{y})>0$ and hence $\wh{y}>\ol{y}$. Conversely, when $\alpha_I \to \infty$ we see $\rto_I/\cons \to 0$, $\cons\to \infty$, $\rto_I \to 1$, $\rto_U \to 0$ and $R$ is fixed. This implies that $g(\ol{y})<0$ and hence $\wh{y}<\ol{y}$. \end{proof}


\subsection{Utility formulas used in Section \ref{S:welfare}}\label{AS:welfare}

We prove utility results for the setup in Appendix \ref{AS:equilibrium_general_a}, under Assumption \ref{ass: precision_a} and using the notation in \eqref{E:new_lambda_alpha_R_def_a}.  We will start with the price impact case, and then proceed to the price taking case.  Throughout we recall the initial endowments in \eqref{E:pareto_endow} and recall that $\whyval = \wh{y}1_d$ for $\wh{y}$ from Proposition \ref{prop:one_d_n}.  We also use the conditional laws (similar to \eqref{E:G_cond_result_a}, \eqref{E:Hi_cond_result_a}) for $k\in \cbra{\iota,\ }$
\begin{equation}\label{E:GH_cond_result_aa}
    \begin{split}
        \frac{\condprobs{X \in dx}{\sigma(G,Z_N)}}{\prob\bra{X\in dx}} &=  \left(\frac{e^{-\frac{1}{2}x'P_I x + x'P_I g}}{\expvs{e^{-\frac{1}{2}X'P_I X + X'P_I g}}}\right)\bigg|_{g=G},\\
        \frac{\condprobs{X \in dx}{\sigma(H_k)}}{\prob\bra{X\in dx}} &=  \left(\frac{e^{-\frac{1}{2}x'P_{U,k} x + x'P_{U,k} h}}{\expvs{e^{-\frac{1}{2}X'P_{U,k} X + X'P_{U,k} h}}}\right)\bigg|_{h=H_k}.
    \end{split}
\end{equation}
This allows us to define the ``certainty equivalents'', valid under Assumption \ref{ass: precision_a}, and using the notation in \eqref{E:new_lambda_alpha_R_def_a} 
\begin{equation}\label{E:fake_int_CES}
    \begin{split}
        \textrm{CE}^{o}_{I,0}(G) &\dfn -\alpha_I\log\left(\condexpvs{e^{-\whpi'X}}{\sigma(G,Z_N)}\right) = -\frac{\alpha_I R}{2}\left(\whpi'P_X^{-1}\whpi -2\whpi'\left(\mu_X + \frac{1-R}{R} G\right)\right),\\
        \textrm{CE}^{o}_{U,\iota}(H_{\iota}) &\dfn  -\alpha_U\log\left(\condexpvs{e^{-\whpi'X}}{\sigma(H_{\iota})}\right) = -\frac{\alpha_U}{2(1+p_{U,\iota})}\left(\whpi'P_X^{-1}\whpi - 2\whpi'(\mu_X + p_{U,\iota} H_{\iota})\right),\\
        \textrm{CE}^{o}_{U}(H) &\dfn  -\alpha_U\log\left(\condexpvs{e^{-\whpi'X}}{\sigma(H)}\right) = -\frac{\alpha_U}{2(1+p_{U})}\left(\whpi'P_X^{-1}\whpi - 2\whpi'(\mu_X + p_{U} H)\right),
    \end{split}
\end{equation}
where $p_{U,\iota},p_U$ are from \eqref{E:pi_nice_matrices_a}.  Lastly, extending \eqref{E:new_ce_nsn} to the general setup of Appendix \ref{AS:equilibrium_a}, we define the certainty equivalent absent private information 
\begin{equation}\label{E:new_ce_nsn_gen}
    \textrm{CE}^i_{nsn} \dfn \alpha_i \log\left(\expvs{e^{-\whpi'X}}\right) = \alpha_i\left(\wh{\Pi}'\mu_X - \frac{1}{2}\wh{\Pi}'P_X^{-1}\wh{\Pi}\right),\qquad i\in\cbra{I,U}.
\end{equation}

\subsection*{Price impact certainty equivalents}  
In this section our goal is to prove the following. 
\begin{proposition}\label{P:PI_CEs_a}
Using the notation of \eqref{E:new_lambda_alpha_R_def_a}, and $\wh{y}$ from Proposition \ref{prop:one_d_n}, the interim price impact certainty equivalents of Section \ref{S:welfare} are
\begin{equation}\label{E:pi_int_CES_a}
    \begin{split}
        \ceintio{I}(G,Z_N) &=  \textrm{CE}^{o}_{I,0}(G) + \frac{\alpha_I R}{2(1+2\wh{y})}\abs{\frac{1-R}{R}P_X^{1/2}(G-\pnsn) - \wh{y}P_X^{-1/2}\frac{Z_N}{\alpha_I}}^2,\\
        \ceintio{U}(H_{\iota}) &= \textrm{CE}^{o}_{U,0}(H_{\iota})  + \frac{\alpha_U\rto_I^2(1-R)^2(\cons+(1+\wh{y})^2)}{2\rto_U^2 R(\cons+R(1+\wh{y})^2)(1+2\wh{y})^2}\abs{P_X^{1/2}(H_{\iota}-\pnsn)}^2.
    \end{split}
\end{equation}
The ex-ante price impact certainty equivalents of Section \ref{S:welfare} are
\begin{equation}\label{E:PI_exCEs_a}
    \begin{split}
        \cexaio{I} &= \textrm{CE}^I_{nsn} + \frac{\alpha_I}{2}\log\left(1 + \frac{(1-R)(\cons/R+\wh{y}^2)}{(1+2\wh{y})\cons}\right),\\
        \cexaio{U} &= \textrm{CE}^U_{nsn} + \frac{\alpha_U}{2}\log\left(1+ \frac{\rto_I^2(1-R)(\cons+(1+\wh{y})^2}{\rto_U^2\cons R(1+2\wh{y})^2}\right).
    \end{split}
\end{equation}
\end{proposition}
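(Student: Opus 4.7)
The plan is to reduce both the interim and ex-ante certainty-equivalent computations to Gaussian integrals via the measure change to $\QQ_0$ from \eqref{E:Q0_def}. Since $d\QQ_0/d\PP$ is proportional to $e^{-\whpi'X}$ and the risk-aversion-adjusted wealth factors as $\mathcal{W}^{\wh\psi_{i,\iota}} = \whpi'X + \wh\theta_{i,\iota}'(X-p_\iota)$ with $\wh\theta_{i,\iota} = \wh\psi_{i,\iota}-\whpi$, for any sub-$\sigma$-algebra $\mathcal{F}$ one has
\begin{equation*}
\condexpvs{e^{-\mathcal{W}^{\wh\psi_{i,\iota}}}}{\mathcal{F}} = \condexpvs{e^{-\whpi'X}}{\mathcal{F}}\cdot\condexpv{\QQ_0}{}{e^{-\wh\theta_{i,\iota}'(X-p_\iota)}}{\mathcal{F}}.
\end{equation*}
Taking $-\alpha_i\log$ and specializing to $\mathcal{F} = \sigma(G,Z_N)$ (for $i=I$) or $\mathcal{F} = \sigma(H_\iota)$ (for $i=U$), the first factor produces---via the Gaussian posteriors in \eqref{E:GH_cond_result_aa}---precisely $\textrm{CE}^o_{I,0}(G)$ or $\textrm{CE}^o_{U,\iota}(H_\iota)$ from \eqref{E:fake_int_CES}. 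What remains is to evaluate the $\QQ_0$-Gaussian integral and, for the ex-ante version, to integrate it once more.

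For the interim formulas, under $\QQ_0$ the law of $X_0 = X - \pnsn$ given $\sigma(G_0,Z_N)$ has mean $\pix^{-1}P_I G_0$ and covariance $\pix^{-1}$, while given $\sigma(H_{0,\iota})$ it has mean $(P_X+P_{U,\iota})^{-1}P_{U,\iota}H_{0,\iota}$ and covariance $(P_X+P_{U,\iota})^{-1}$. Evaluating the Gaussian MGF at the optimal $\wh\theta_{I,\iota}$, $\wh\theta_{U,\iota}$ from Lemmas \ref{L:pi_opt_I_a}, \ref{L:pi_opt_U_a} (at the equilibrium value $V_{0,p,\iota}=0$) and applying $\min_\theta(\tfrac{1}{2}\theta'A\theta - \theta'b) = -\tfrac{1}{2}b'A^{-1}b$, the log of the $\QQ_0$-conditional expectation equals the minimum value of the quadratic form in \eqref{E:pi_min_prob_a} for the insider and \eqref{E:pi_U_min_val_a} for the uninformed. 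For the uninformed case I would use the equilibrium matrix equation \eqref{E:pi_clear_h_a} to collapse the coefficient $(P_X+P_{U,\iota})^{-1}P_{U,\iota} - M_{p,\iota}\MM$ to $-(\alpha_I/\alpha_U)(P_X+P_{U,\iota})^{-1}\MM$; this is the step where the deeper price-impact equilibrium structure enters. Invoking Assumption \ref{ass: precision_a} through \eqref{E:constants_nice_a} and \eqref{E:pi_nice_matrices_a}---and, in particular, the scalar identity $1+p_{U,\iota}=(\cons+R(1+\wh y)^2)/(R(\cons+(1+\wh y)^2))$ that follows directly from \eqref{E:pi_nice_matrices_a}---then reduces the matrix expressions to the explicit form \eqref{E:pi_int_CES_a}.

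For the ex-ante formulas, the same factorization applied unconditionally gives $\expvs{e^{-\mathcal{W}^{\wh\psi_{i,\iota}}}} = \expvs{e^{-\whpi'X}}\cdot\expv{\QQ_0}{}{e^{-\wh\theta_{i,\iota}'(X-p_\iota)}}$, whose first factor contributes $\textrm{CE}^i_{nsn}$ from \eqref{E:new_ce_nsn_gen}. By Fubini, the second factor becomes the $\QQ_0$-expectation of the exponential of the non-positive Gaussian quadratic in $(G_0,Z_N)$ or in $H_{0,\iota}$ already identified in the interim step. Since \eqref{E:normals_a} exhibits $G_0$ and $Z_N$ as independent centered Gaussians under $\QQ_0$ and \eqref{E:pi_normals_h0_a} shows that $H_{0,\iota}$ is a centered Gaussian with covariance $P_X^{-1}+P_{U,\iota}^{-1}$, the standard quadratic-form MGF $\EE[e^{-u'Au/2}] = \det(1_d+\Sigma A)^{-1/2}$ applies, and taking $-\alpha_i\log$ produces \eqref{E:PI_exCEs_a}. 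The main obstacle is therefore algebraic rather than conceptual: matching the general matrix expressions to the claimed scalar forms requires systematic bookkeeping between the $(P_I,P_N,P_X)$ level and the scalar parameters $(\cons,R,\rtoi,\rtou)$ of \eqref{E:new_lambda_alpha_R_def_a}, with the cubic equation \eqref{E:cubic_alt_n} for $\wh y$ entering only through the scalar identities for $p_{U,\iota}$ and through the uninformed-trader collapse via \eqref{E:pi_clear_h_a}. Once that collapse is in place the remaining manipulations are routine Gaussian algebra.
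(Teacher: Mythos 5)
Your proposal is correct and follows essentially the same route as the paper's proof: factor the objective through $\QQ_0$, read off the conditional Gaussian quadratic minimum for the interim certainty equivalents, and then integrate via the Gaussian quadratic-form MGF for the ex-ante ones. The only cosmetic difference is that you collapse the uninformed trader's coefficient via the market-clearing identity \eqref{E:pi_clear_h_aa}, whereas the paper computes it explicitly under Assumption \ref{ass: precision_a} and then invokes the cubic \eqref{E:new_cubic_a} to obtain $(1+\wh{y})^2(\cons - R\wh{y}(1+\wh{y}))^2 = (\rto_I/\rto_U)^2(\cons + (1+\wh{y})^2)^2$ --- the same simplification in different form.
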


\begin{proof}[Proof of Proposition \ref{P:PI_CEs_a}]
We start with the insider. From \eqref{E:I_problem_a}, \eqref{E:transform_1}, \eqref{E:demean} and \eqref{E:I_problem_aa}
\begin{equation*}
    \begin{split}
        e^{-\frac{1}{\alpha_I}\ceintio{I}} &= \condexpvs{e^{-\whpi'p_{\iota} - \wh{\psi}_{I,\iota}'(X - p_{\iota})}}{\sigma(G,Z_N)},\\
        &= \condexpvs{e^{-\whpi'X}}{\sigma(G,Z_N)} \condexpv{\QQ_0}{}{e^{\wh{\theta}_{I,\iota}'(X_0-p_{0,\iota})}}{\sigma(G_0,Z_N)}.
    \end{split}
\end{equation*}
From \eqref{E:M_to_Y_a}, \eqref{E:pi_min_prob_a} (recalling that in equilibrium, $V_0$ therein is $0$), \eqref{E:constants_nice_a} and \eqref{E:new_lambda_alpha_R_def_a}
\begin{equation*}
    \condexpv{\QQ_0}{}{e^{\wh{\theta}_{I,\iota}'(X_0-p_{0,\iota})}}{\sigma(G_0,Z_N)} = e^{-\frac{R}{2(1+2\wh{y})}\abs{\frac{1-R}{R} P_X^{1/2}G_0- \wh{y}P_X^{-1/2}\frac{Z_N}{\alpha_I}}^2}.
\end{equation*}
Therefore, using \eqref{E:fake_int_CES} we obtain the interim certainty equivalent for $I$ in \eqref{E:pi_int_CES_a}.  For the ex-ante certainty equivalent we must compute
\begin{equation*}
    e^{-\frac{1}{\alpha_I}\cexaio{I}} = \expvs{e^{-\whpi'p_{\iota} - \wh{\psi}_{I,\iota}'(X - p_{\iota})}} = \expvs{e^{-\whpi'X}}\expv{\QQ_0}{}{\condexpv{\QQ_0}{}{e^{\wh{\theta}_{I,\iota}'(X_0-p_{0,\iota})}}{\sigma(G_0,Z_N)}}.
\end{equation*}
Using \eqref{E:new_ce_nsn_gen} and the above we obtain
\begin{equation*}
   e^{-\frac{1}{\alpha_I}\cexaio{I}} = e^{-\frac{1}{\alpha_I}\textrm{CE}^I_{nsn}} \expv{\QQ_0}{}{e^{-\frac{R}{2(1+2\wh{y})}\abs{\frac{1-R}{R} P_X^{1/2}G_0 - \wh{y}P_X^{-1/2}\frac{Z_N}{\alpha_I}}^2}}.
\end{equation*}
If we re-write this using the notation of \eqref{E:new_lambda_alpha_R_def_a}, \eqref{E:normals_a} we obtain
\begin{equation*}
    e^{-\frac{1}{\alpha_I}\cexaio{I}} = e^{-\frac{1}{\alpha_I}\textrm{CE}^I_{nsn}} \expv{\QQ_0}{}{e^{-\frac{R}{2(1+2\wh{y})}\abs{\frac{1-R}{R}\left(\E_X + \sqrt{\frac{R}{1-R}}\E_I\right) - \wh{y}\sqrt{\frac{1-R}{\cons R}}\E_N}^2}}.
\end{equation*}
We can write
\begin{equation*}
    \frac{1-R}{R}\left(\E_X + \sqrt{\frac{R}{1-R}}\E_I\right) - \wh{y}\sqrt{\frac{1-R}{\cons R}}\E_N = \sqrt{\frac{(1-R)(\cons + \wh{y}^2 R)}{\cons R^2}}\E,
\end{equation*}
where $\E\sim N(0,1_d)$ under $\QQ_0$ $N(0,1_d)$. This implies
\begin{equation*}
    e^{-\frac{1}{\alpha_I}\cexaio{I}} =  e^{-\frac{1}{\alpha_I}\textrm{CE}^I_{nsn}} \left(1 + \frac{(1-R)(\cons/R+\wh{y}^2)}{(1+2\wh{y})\cons}\right)^{-1/2},
\end{equation*}
which gives the formula in \eqref{E:PI_exCEs_a}. We now consider the uninformed trader. From \eqref{E:U_problem_a}, \eqref{E:transform_1}, \eqref{E:demean} and \eqref{E:U_problem_aa} we obtain
\begin{equation*}
    e^{-\frac{1}{\alpha_U} \ceintio{U}} = \condexpvs{e^{-\whpi'p_{\iota} - \wh{\psi}_{U,\iota}'(X - p_{\iota})}}{\sigma(H_{\iota})} = \condexpvs{e^{-\whpi'X}}{\sigma(H_{\iota})} \condexpv{\QQ_0}{}{e^{\wh{\theta}_{U,\iota}'(X_0-p_{0,\iota})}}{\sigma(H_{0,\iota})}.
\end{equation*}
From \eqref{E:pi_U_min_val_a} we find
\begin{equation*}
    \condexpv{\QQ_0}{}{e^{\wh{\theta}_{U,\iota}'(X_0-p_{0,\iota})}}{\sigma(H_{0,\iota})} = e^{-\frac{1}{2}\abs{(P_{U,\iota}+P_X)^{-1/2}\left(P_{U,\iota}H_{0,\iota} - (P_{U,\iota}+P_X)p_{0,\iota}(H_{0,\iota})\right)}^2}.
\end{equation*}
From \eqref{E:pi_price_new_a}, \eqref{E:constants_nice_a} and \eqref{E:pi_nice_matrices_a} calculation shows
\begin{equation*}
    \begin{split}
        &(P_{U,\iota}+P_X)^{-1/2}\left(P_{U,\iota}H_{0,\iota} - (P_{U,\iota}+P_X)p_{0,\iota}(H_{0,\iota})\right)\\
        &\qquad = \sqrt{\frac{(1-R)^2(1+\wh{y})^2(\cons-R\wh{y}(1+\wh{y}))^2}{(\cons+R(1+\wh{y})^2)R(\cons+(1+\wh{y})^2)(1+2\wh{y})^2}} P_X^{1/2}H_{0,\iota}.
    \end{split}
\end{equation*}
From \eqref{E:new_cubic_a} we deduce
\begin{equation*}
    (1+\wh{y})^2(\cons - R\wh{y}(1+\wh{y}))^2 = \left(\frac{\rto_I}{\rto_U}(\cons + (1+\wh{y})^2)\right)^2,
\end{equation*}
which gives
\begin{equation*}
    \condexpv{\QQ_0}{}{e^{\wh{\theta}_{U,\iota}'(X_0-p_{0,\iota})}}{\sigma(H_{0,\iota})} = e^{-\frac{1}{2}\frac{\rto_I^2(1-R)^2(\cons+(1+\wh{y})^2)}{\rto_U^2 R(\cons+R(1+\wh{y})^2)(1+2\wh{y})^2}\abs{P_X^{1/2} H_{0,\iota}}^2}.
\end{equation*}
The interim certainty equivalent for $U$ in \eqref{E:pi_int_CES_a} follows from \eqref{E:fake_int_CES} and \eqref{E:demean}. For the ex-ante certainty equivalent we must compute
\begin{equation*}
    \begin{split}
        e^{-\frac{1}{\alpha_U}\cexaio{U}} = \expvs{e^{-\whpi'p_{\iota} - \wh{\psi}_{U,\iota}'(X - p_{\iota})}} &= \expvs{e^{-\whpi'X}}\expv{\QQ_0}{}{\condexpv{\QQ_0}{}{e^{\wh{\theta}_{U,\iota}'(X_0-p_{0,\iota})}}{\sigma(H_{0,\iota})}},\\
        &= e^{-\frac{1}{\alpha_U}\textrm{CE}^U_{nsn}}\expv{\QQ_0}{}{e^{-\frac{\rto_I^2(1-R)^2(\cons+(1+\wh{y})^2)}{2\rto_U^2 R(\cons+R(1+\wh{y})^2)(1+2\wh{y})^2} \abs{P_X^{1/2} H_{0,\iota}}^2}}.
    \end{split}
\end{equation*}
Using \eqref{E:normals_a} and \eqref{E:pi_normals_h0_a} we deduce
\begin{equation*}
    P_X^{1/2}H_{0,\iota} =  \sqrt{\frac{\cons + R(1+\wh{y})^2}{\cons(1-R)}}\E,
\end{equation*}
where $\E\sim N(0,1_d)$ under $\QQ_0$.  This gives
\begin{equation*}
    \begin{split}
        \expvs{e^{-\whpi'p_{\iota} - \wh{\psi}_{U,\iota}'(X - p_{\iota})}} &= e^{-\frac{1}{\alpha_U}\textrm{CE}^U_{nsn}}\left(1+ \frac{\rto_I^2(1-R)(\cons+(1+\wh{y})^2}{\rto_U^2\cons R(1+2\wh{y})^2}\right)^{-1/2},
    \end{split}
\end{equation*}
and hence the ex-ante certainty equivalent for $U$ in \eqref{E:PI_exCEs_a}.

\end{proof}

\subsection*{Price-taking certainty equivalents} Here, we consider the price-taking case, with the goal of proving the following. 

\begin{proposition}\label{P:PT_CEs_a}
Using the notation of \eqref{E:new_lambda_alpha_R_def_a}, the interim price taking certainty equivalents of Section \ref{S:welfare} are
\begin{equation}\label{E:pt_int_CES_a}
    \begin{split}
        \ceinto{I}(G,Z_N) &=  \textrm{CE}^{o}_{I,0}(G)\\
        &\qquad + \frac{\alpha_I R}{2(\rto_I + \cons + R\rto_U)^2}\abs{(1-R)\rto_U P_X^{1/2}(G-\pnsn)- (\rto_I+\cons)P_X^{-1/2}\frac{Z_N}{\alpha_I}}^2,\\
        \ceinto{U}(H) &= \textrm{CE}^{o}_{U,0}(H_{\iota})  + \frac{\alpha_U\rto_I^2(1-R)^2R(1+\cons)}{2(\rto_I+\cons + R\rto_U)^2(R+\cons)}\abs{P_X^{1/2}(H-\pnsn)}^2.
    \end{split}
\end{equation}
The ex-ante price taking certainty equivalents of Section \ref{S:welfare} are
\begin{equation}\label{E:pt_exCEs_a}
    \begin{split}
        \cexao{I} &= \textrm{CE}^I_{nsn} + \frac{\alpha_I}{2}\log\left(1 + \frac{(1-R)(\rto_U^2\cons R + (\rto_I + \cons)^2)}{\cons(\rto_I + \cons + R\rto_U)^2}\right),\\
        \cexao{U} &= \textrm{CE}^U_{nsn} + \frac{\alpha_U}{2}\left(1+ \frac{\rto_I^2(1-R)R(1+\cons)}{\cons(\rto_I+\cons+R\rto_U)^2}\right).
    \end{split}
\end{equation}
\end{proposition}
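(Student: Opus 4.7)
The plan is to mirror the proof of Proposition \ref{P:PI_CEs_a}, replacing the price-impact optimizers of Lemmas \ref{L:pi_opt_I_a}--\ref{L:pi_opt_U_a} by the price-taking ones from Proposition \ref{P:pt_equilibrium_a}, and carrying out the resulting Gaussian integrals. Explicitly, for each agent $i\in\{I,U\}$ I would start from
\begin{equation*}
e^{-\frac{1}{\alpha_i}\ceinto{i}} = \condexpvs{e^{-\whpi'X}}{\mathcal{G}_i}\, \condexpv{\QQ_0}{}{e^{\wh{\theta}_{i}'(X_0-p_0)}}{\mathcal{G}_i},
\end{equation*}
where $\mathcal{G}_I=\sigma(G_0,Z_N)$ and $\mathcal{G}_U=\sigma(H_0)$, then recognize that by \eqref{E:pt_min_eq_a} (and its verbatim analogue for $U$) the second factor is the exponential of the minimum of a positive-definite quadratic form, namely
\begin{equation*}
\begin{split}
\condexpv{\QQ_0}{}{e^{\wh{\theta}_{I}'(X_0-p_0)}}{\sigma(G_0,Z_N)}&=e^{-\tfrac12|(P_I+P_X)^{-1/2}(P_I G_0-(P_I+P_X)p_0(H_0))|^2},\\
\condexpv{\QQ_0}{}{e^{\wh{\theta}_{U}'(X_0-p_0)}}{\sigma(H_0)}&=e^{-\tfrac12|(P_U+P_X)^{-1/2}(P_U H_0-(P_U+P_X)p_0(H_0))|^2}.
\end{split}
\end{equation*}

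Next I would simplify the two ``residual'' vectors under Assumption \ref{ass: precision_a} using \eqref{E:pt_px_new_not_a}, \eqref{E:pt_nice_matrices_a}, \eqref{E:pi_nice_matrices_a} and $H_0-G_0=\tfrac{R}{\alpha_I(1-R)}P_X^{-1}Z_N$. For the insider, writing $P_I=\tfrac{1-R}{R}P_X$, $(P_I+P_X)^{-1}=RP_X^{-1}$ and the PT slope $(1-R)(\rto_I+\cons)/(\rto_I+\cons+R\rto_U)$ from \eqref{E:pt_px_new_not_a}, the algebra yields after collecting $G_0$ and $Z_N/\alpha_I$
\begin{equation*}
P_I G_0-(P_I+P_X)p_0(H_0)=\frac{1}{\rto_I+\cons+R\rto_U}\left((1-R)\rto_U P_X G_0-(\rto_I+\cons)\frac{Z_N}{\alpha_I}\right).
\end{equation*}
Combined with \eqref{E:fake_int_CES} and $G_0=G-\pnsn$, this gives the stated $\ceinto{I}$. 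For the uninformed trader the key observation is that although the coefficient in front of $H_0$ in $P_U H_0-(P_U+P_X)p_0(H_0)$ is a difference of two non-trivial scalars, the numerator collapses by using $p_U=(1-R)\cons/(R(1+\cons))$, $1+p_U=(R+\cons)/(R(1+\cons))$ and $\rto_I+\rto_U=1$ to
\begin{equation*}
P_U H_0-(P_U+P_X)p_0(H_0)=-\frac{(1-R)\rto_I}{\rto_I+\cons+R\rto_U}P_X H_0,
\end{equation*}
after which $(P_U+P_X)^{-1}=\tfrac{R(1+\cons)}{R+\cons}P_X^{-1}$ yields the stated $\ceinto{U}$.

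For the ex-ante certainty equivalents, I would integrate the interim exponentials against $\QQ_0$, using \eqref{E:new_ce_nsn_gen} to peel off $\expvs{e^{-\whpi'X}}$. Substituting the Gaussian representation \eqref{E:normals_a} turns the residual vector inside the insider's exponent into the sum of three independent centred Gaussians, whose aggregate variance is
\begin{equation*}
\sigma_I^2=\rto_U^2(1-R)+\frac{(\rto_I+\cons)^2(1-R)}{\cons R}=\frac{(1-R)(\rto_U^2\cons R+(\rto_I+\cons)^2)}{\cons R},
\end{equation*}
and analogously, using \eqref{E:normals_h0_a} with $\Lambda=(1/\alpha_I)P_I^{-1}$, $|P_X^{1/2}H_0|^2$ becomes a scalar multiple of $|\E|^2$ with $\E\sim N(0,1_d)$. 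A Gaussian integration of the form $\expvs{e^{-c|\E|^2}}=(1+2c)^{-d/2}$ then yields
\begin{equation*}
\cexao{I}=\textrm{CE}^I_{nsn}+\tfrac{\alpha_I}{2}\log\!\left(1+\tfrac{R\sigma_I^2}{(\rto_I+\cons+R\rto_U)^2}\right),
\end{equation*}
which matches \eqref{E:pt_exCEs_a}, and similarly for $\cexao{U}$.

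The main obstacle is the algebraic simplification of the uninformed trader's residual vector: a priori it is a linear combination of $P_U H_0$ and $(P_U+P_X)p_0(H_0)$ whose coefficients individually do not vanish, and only after invoking the explicit form of $p_U$ and the PT slope does the $H_0$ coefficient collapse to the single clean expression $-(1-R)\rto_I/(\rto_I+\cons+R\rto_U)\cdot P_X$. Once this cancellation is spotted, the remaining steps are routine Gaussian bookkeeping parallel to the PI case.
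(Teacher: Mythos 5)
Your proposal is correct and follows essentially the same route as the paper's proof: the same factorization of the certainty equivalent into the endowment term and a $\QQ_0$-conditional expectation, the same identification of the latter as the exponential of the minimized quadratic form from \eqref{E:pt_min_eq_a}, and the same Gaussian integration via \eqref{E:normals_a} and \eqref{E:normals_h0_a} for the ex-ante quantities. The only difference is that you spell out the algebraic collapse of the residual vectors (in particular the cancellation $p_U-(1+p_U)m=-(1-R)\rtoi/(\rtoi+\cons+R\rtou)$) which the paper compresses into ``after some simplification,'' and your computations there check out.
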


\begin{proof}[Proof of Proposition \ref{P:PT_CEs_a}]
We start with the insider. From \eqref{E:transform_1}, \eqref{E:demean} and \eqref{E:pt_opt_prob_a}
\begin{equation*}
    \begin{split}
        e^{-\frac{1}{\alpha_I}\ceinto{I}} &= \condexpvs{e^{-\whpi'X}}{\sigma(G,Z_N)} \condexpv{\QQ_0}{}{e^{\wh{\theta}_{I}'(X_0-p_{0})}}{\sigma(G_0,Z_N)}.
    \end{split}
\end{equation*}
From \eqref{E:pt_signal_a}, \eqref{E:pt_min_eq_a}, \eqref{E:new_lambda_alpha_R_def_a} and \eqref{E:pt_px_new_not_a}  we obtain after some simplification
\begin{equation*}
    \condexpv{\QQ_0}{}{e^{\wh{\theta}_{I}'(X_0-p_{0})}}{\sigma(G_0,Z_N)} = e^{-\frac{R}{(\rto_I + \cons + R\rto_U)^2}\abs{(1-R)\rto_U P_X^{1/2}G_0- (\rto_I+\cons)P_X^{-1/2}\frac{Z_N}{\alpha_I}}^2}.
\end{equation*}
Therefore, using \eqref{E:fake_int_CES} we obtain the interim certainty equivalent for $I$ in \eqref{E:pt_int_CES_a}.  For the ex-ante certainty equivalent we must compute
\begin{equation*}
    e^{-\frac{1}{\alpha_I}\cexao{I}} = \expvs{e^{-\whpi'p - \wh{\psi}_{I}'(X - p)}} = \expvs{e^{-\whpi'X}}\expv{\QQ_0}{}{\condexpv{\QQ_0}{}{e^{\wh{\theta}_{I}'(X_0-p_{0})}}{\sigma(G_0,Z_N)}}.
\end{equation*}
Using \eqref{E:new_ce_nsn_gen} and the above we obtain
\begin{equation*}
   e^{-\frac{1}{\alpha_I}\cexao{I}} = e^{-\frac{1}{\alpha_I}\textrm{CE}^I_{nsn}} \expv{\QQ_0}{}{e^{-\frac{R}{(\rto_I + \cons + R\rto_U)^2}\abs{(1-R)\rto_U P_X^{1/2}G_0- (\rto_I+\cons)P_X^{-1/2}\frac{Z_N}{\alpha_I}}^2}}.
\end{equation*}
If we re-write this using the notation of \eqref{E:new_lambda_alpha_R_def_a}, \eqref{E:normals_a} we obtain
\begin{equation*}
    e^{-\frac{1}{\alpha_I}\cexao{I}} = e^{-\frac{1}{\alpha_I}\textrm{CE}^I_{nsn}} \expv{\QQ_0}{}{e^{-\frac{R}{(\rto_I + \cons + R\rto_U)^2}\abs{(1-R)\rto_U\left(\E_X + \sqrt{\frac{R}{1-R}}\E_I\right) - (\rto_I+\cons)\sqrt{\frac{1-R}{\cons R}}\E_N}^2}}.
\end{equation*}
We can write
\begin{equation*}
    (1-R)\rto_U\left(\E_X + \sqrt{\frac{R}{1-R}}\E_I\right) - (\rto_I+\cons)\sqrt{\frac{1-R}{\cons R}}\E_N = \sqrt{\frac{(1-R)(\rto_U^2\cons R + (\rto_I+\cons)^2)}{\cons R}}\E,
\end{equation*}
where $\E\sim N(0,1_d)$ under $\QQ_0$ $N(0,1_d)$. This implies
\begin{equation*}
    e^{-\frac{1}{\alpha_I}\cexao{I}} =  e^{-\frac{1}{\alpha_I}\textrm{CE}^I_{nsn}} \left(1 + \frac{(1-R)(\rto_U^2\cons R + (\rto_I + \cons)^2)}{\cons(\rto_I + \cons + R\rto_U)^2}\right)^{-1/2},
\end{equation*}
which gives the formula in \eqref{E:pt_exCEs_a}. We now consider the uninformed trader. From \eqref{E:transform_1}, \eqref{E:demean} and \eqref{E:pt_opt_prob_a}
\begin{equation*}
    e^{-\frac{1}{\alpha_U} \ceinto{U}} = \condexpvs{e^{-\whpi'X}}{\sigma(H)} \condexpv{\QQ_0}{}{e^{\wh{\theta}_{U}'(X_0-p_{0})}}{\sigma(H_{0})}.
\end{equation*}
From \eqref{E:pt_min_eq_a} (with $P_U$ replacing $P_I$ and $H_0$ replacing $G_0$), \eqref{E:new_lambda_alpha_R_def_a}, \eqref{E:pi_nice_matrices_a} and \eqref{E:pt_px_new_not_a}  we obtain after some simplification
\begin{equation*}
    \condexpv{\QQ_0}{}{e^{\wh{\theta}_{U}'(X_0-p_{0})}}{\sigma(H_0)} = e^{-\frac{\rto_I^2(1-R)^2 R(1+\cons)}{2(\rto_I+\cons + R\rto_U)^2(R+\cons)}\abs{P_X^{1/2} H_0}^2}.
\end{equation*}
Therefore, using \eqref{E:fake_int_CES} we obtain the interim certainty equivalent in \eqref{E:pt_int_CES_a}. As for the ex-ante certainty equivalent, we must compute
\begin{equation*}
    e^{-\frac{1}{\alpha_U}\cexao{U}} = \expvs{e^{-\whpi'p - \wh{\psi}_{U}'(X - p)}} = \expvs{e^{-\whpi'X}}\expv{\QQ_0}{}{\condexpv{\QQ_0}{}{e^{\wh{\theta}_{U}'(X_0-p_{0})}}{\sigma(H_0)}}.
\end{equation*}
Using \eqref{E:new_ce_nsn_gen} and the above this is
\begin{equation*}
   e^{-\frac{1}{\alpha_U}\cexao{U}} = e^{-\frac{1}{\alpha_U}\textrm{CE}^I_{nsn}} \expv{\QQ_0}{}{e^{-\frac{\rto_I^2(1-R)^2R(1+\cons)}{2(\rto_I+\cons + R\rto_U)^2(R+\cons)}\abs{P_X^{1/2} H_0}^2}}.
\end{equation*}
From \eqref{E:normals_a}, \eqref{E:normals_h0_a} we can write
\begin{equation*}
    H_0 = \sqrt{\frac{\cons + R}{\cons(1-R)}}P_X^{-1}\E,
\end{equation*}
where $\E\sim N(0,1_d)$ under $\QQ_0$.  This gives
\begin{equation*}
    \begin{split}
        e^{-\frac{1}{\alpha_U}\cexao{U}} = e^{-\frac{1}{\alpha_U}\textrm{CE}^U_{nsn}}\left(1+ \frac{\rto_I^2(1-R)R(1+\cons)}{\cons(\rto_I+\cons+R\rto_U)^2}\right)^{-1/2},
    \end{split}
\end{equation*}
and hence the ex-ante certainty equivalent for $U$ in \eqref{E:pt_exCEs_a}.
\end{proof}

\begin{remark}\label{R:nice_way}
If we move from the notation of \eqref{E:new_lambda_alpha_R_def_a} to the notation in \eqref{E:lambda_alpha_R_def_n} we obtain the ex-ante certainty equivalents
\begin{equation*}
\begin{split}
\cexaio{I} &=  \textrm{CE}_{nsn}^{I}  + \frac{\alpha_I}{2}\log\left(1 + \frac{\consa p_I(1+p_I) + \wh{y}^2}{\consa (1+p_I)(1+2\wh{y})}\right),\\
\cexaio{U} &=  \textrm{CE}_{nsn}^{U}  + \frac{\alpha_U}{2} \log\left(1 + \frac{\consb^2(\consa p_I+ (1+\wh{y})^2)}{(1-\consb)^2\consa (1+2\wh{y})^2} \right),\\
\cexao{I} &= \textrm{CE}_{nsn}^{I}  + \frac{\alpha_I}{2}\log\left(1 + \frac{(1-\consb)^2\consa p + (1+p_I)(\consb+\consa p_I)^2}{\consa(1 + \consb p_I + \consa p_I (1+p_I))^2}\right),\\
\cexao{U} &= \textrm{CE}_{nsn}^{U}  + \frac{\alpha_U}{2}\log\left(1 +\frac{\consb^2(1+\consa p_I)}{\consa(1 + \consb p_I + \consa p_I(1+p_I))^2}\right).
\end{split}
\end{equation*}
\end{remark}


\subsection{Proofs from Section \ref{S:welfare}}\label{AS:welfare_2}

\begin{proof}[Proof of Theorem \ref{T:pimono}]
Set $c = (1-\consb)/\consb$, $p = p_I$, $\wh{y}(p_I) = y(p)$ so \eqref{E:cubic_alt_n} is
\begin{equation}\label{E:fpimon3}
    0 = (1+y(p))^2\left(1-\frac{cy(p)}{1+p}\right) + \consa p c(1+y(p)) + \consa p.
\end{equation}
Next define the function (c.f. \eqref{E:fpimon}) as $   Q(p,y) = [\consa p(1+p) + y^2]/[\consa (1+p)(1+2y)]$ so that
\begin{equation*}
    \partial_p Q(p,y) = \frac{\consa(1+p)^2 -y^2}{\consa (1+p)^2(1+2y)};\quad \partial_y Q(p,y) = \frac{2(y(1+y)-\consa p(1+p))}{\consa (1+p)(1+2y)^2}.
\end{equation*}
Additionally, from \eqref{E:fpimon3} we deduce
\begin{equation*}
\begin{split}
0 &= \left(\consa +\consa c (1+y(p)) + \frac{cy(p)(1+y(p))^2}{(1+p)^2}\right)\\
&\qquad - \left(\frac{c(1+y(p))^2}{1+p} + 2(1+y(p))\left(\frac{cy(p)}{1+p}-1\right) - \consa c p\right) \partial_p y(p).
\end{split}
\end{equation*}
Using \eqref{E:fpimon3} one can show the quantity in front of $\partial_p y(p)$ is strictly positive so that $y(p)$ is increasing in $p$ and hence
\begin{equation*}
    \begin{split}
        \partial_p y(p) & = \frac{\consa +\consa c (1+y(p)) + \frac{cy(p)(1+y(p))^2}{(1+p)^2}}{\frac{c(1+y(p))^2}{1+p} + 2(1+y(p))\left(\frac{cy(p)}{1+p}-1\right) - \consa c p}.
    \end{split}
\end{equation*}
By the chain rule
\begin{equation}\label{E:partial_f}
    \partial_p \phi_{\iota}(p) = \partial_p Q(p,y(p)) + \partial_y Q(p,y(p))\partial_p y(p),
\end{equation}
and we wish to show the right side above is positive for all $p>0, c>0,\consa > 0$.  To do this we will change perspective. Namely, from \eqref{E:fpimon3} we see that $y(p)\geq (1+p)/c$ and in fact $y(p) = (1+p)/c$ when $\consa = 0$.  Thus, let us substitute $y(p) = (1+p)/c + z(p)$ so that $z(p)$ solves (uniquely over the positive reals) the equation
\begin{equation*}
    0 = - \frac{c z}{1+p}\left(1 + z + \frac{1+p}{c}\right)^2 + \consa p c\left(1+z + \frac{1+p}{c}\right) + \consa p.
\end{equation*}
Now, fix $p>0,c>0$ and think about $z = z(\consa)$.  It is straight-forward to show that $z$ is strictly increasing in $\consa$ with extreme values $z(0) = 0$ and $z(\infty) = \infty$.  Therefore, there is no loss in generality in fixing $p>0,c>0,z>0$ and setting
\begin{equation*}
    \consa = \frac{z(1+p+c + cz)^2}{pc(1+p)(2+p+c+cz)}.
\end{equation*}
Plugging in $y = (1+p)/c +z$ and $\consa$ as above we obtain
\begin{equation*}
    \begin{split}
    \partial_p Q(p,y(p)) &= \frac{zc(1+p)(1+p+c+cz)^2 - p(1+p+cz)(2+p+c+cz)}{z(1+p)(2+2p+c+2z)(1+p+c+cz)^2},\\
    \partial_y Q(p,y(p)) &= \frac{2pc((1+p)(1+p+c+cz) + (1+p+cz))}{z(1+p+c+cz)(2+2p+c+2z)^2},\\
    \partial_p y(p) &= \frac{(1+p+c+cz)(2+p+c+cz)(zc(1+p)+p(1+p+cz))}{pc(1+p)(cz + (1+p+c+2cz)(2+p+c+cz))}.
    \end{split}
\end{equation*}
At this point, if one plugs these values into the right side of \eqref{E:partial_f} and takes a common denominator, the numerator is a sixth order polynomial in $z$. Furthermore, one can directly verify each of the coefficients in the polynomial is positive for all $p>0,c>0$, giving the result.
\end{proof}


\begin{proof}[Proof of Proposition \ref{prop: exante_U_pipt}]
From \eqref{E:PI_better_IU} we see that $\cexaio{U} \geq \cexao{U}$ is equivalent to
\begin{equation*}
    f(y) \dfn \frac{\consa p_I + (1+\wh{y})^2}{(1+2\wh{y})^2} \geq \frac{(1-\consb)^2 (1+\consa  p_I)}{(1+p_I\consb + \consa p_I(1+p_I))^2} \rdfn \ell.
\end{equation*}
The map $y\to f(y)$ is strictly decreasing with $f(0) = \consa p_I+1$ and $f(\infty) = 1/4$.  Furthermore, as $\ell$ is evidently decreasing in $\consb \in (0,1)$ we know
\begin{equation*}
    0 < \ell < \frac{1+\consa p_I}{(1+\consa p_I(1+p_I))^2} < f(0),
\end{equation*}
and hence $k\leq 1/4$ implies $\cexaio{U} \geq \cexao{U}$.  For $1/4 < k$, the positive root of $\consa p_I + (1+y)^2 = \ell(1+2y)^2$ is
\begin{equation*}
    \check{y} = \frac{1-2\ell+\sqrt{(4\ell-1)\consa p_I+\ell}}{4\ell-1}.
\end{equation*}
As shown in the proof of Proposition \ref{prop:one_d_n}, if we define $g(y)$ by the cubic function in \eqref{E:cubic_alt_n}, then $g(y) > 0$ for $0 < y <\wh{y}$ and $g(y) < 0$ for $y>\wh{y}$.  Therefore, if $g(\check{y}) < 0$ then $\check{y} > \wh{y}$ and
\begin{equation*}
    \frac{\consa p_I + (1+\wh{y})^2}{(1+2\wh{y})^2} >\frac{\consa p_I + (1+\check{y})^2}{(1+2\check{y})^2} =\ell,
\end{equation*}
giving the result. It therefore suffices perform the following check: (1) Fix $0 < \consb < 1$, $\consa, p >0$ and let $\ell = \ell(\consb,\consa,p_I)$ as above; (2) If $\ell \leq 1/4$ then $\cexaio{U} \geq \cexao{U}$; (3) If $\ell > 1/4$ then set $\check{y} = \check{y}(\consb,\consa,p_I)$ and $g = g(\check{y})$ as above.  If $g(\check{y}) < 0$ then $\cexaio{U} \geq \cexao{U}$. This check can easily be performed by any software tool and one always obtains that either $\ell \leq 1/4$ or $g(\check{y}) < 0$.
\end{proof}

\begin{proof}[Proof of Proposition \ref{P:gamma_I_asympt_n}]

We prove this result in the setup in Appendix \ref{AS:equilibrium_general_a}. Here, recall  the certainty equivalent absent private information \eqref{E:new_ce_nsn_gen}. Using \eqref{E:pareto_endow} and \eqref{E:no_signal_competi_a} we obtain
\begin{equation}\label{E:ce_nsn_limits}
    \begin{split}
        \lim_{\alpha_I \to 0} \frac{\textrm{CE}^{I}_{nsn}}{\alpha_I}  &= \frac{1}{\alpha_U}\Pi'\mu_X - \frac{1}{2\alpha_U^2}\Pi'P_X^{-1}\Pi.
    \end{split}
\end{equation}
We first consider when $\alpha_I = \alpha_U$ so that (see \eqref{E:lambda_alpha_R_def_n}) $\consb=1/2$. As such, \eqref{E:PI_better_IU} implies $\cexaio{I} \geq \cexao{I}$ is equivalent to
\begin{equation*}
    \begin{split}
        f(\wh{y})\dfn \frac{\consa p_I(1+p_I)+ \wh{y}^2}{(1+2\wh{y})} &\geq \frac{(1+p_I)\left(\consa p_I + (1+p_I)(1+2\consa p_I)^2\right)}{(2 + p_I+ 2\consa p_I(1+p_I))^2} \rdfn k,
    \end{split}
\end{equation*}
where $\wh{y}$ solves \eqref{E:cubic_alt_n} with $\lambda = 1/2$.  But $\dot{f}(y) = 2(y(1+y)-\consa p_I(1+p_I))/(1+2y)^2$, hence $f$ is minimized $(0,\infty)$ at  $y_0 = (1/2)(\sqrt{1+4\consa p_I(1+p_I)} - 1)$, and this value enforces $f(y_0)= y_0$. Therefore, $f(\wh{y}) \geq f(y_0) = y_0$ and hence if
\begin{equation}\label{E:check_1}
y_0  \geq k \, \Leftrightarrow \, \frac{1}{2}\left(\sqrt{1+4\consa p_I(1+p_I)} - 1\right)\geq \frac{(1+p_I)\left(\consa p_I + (1+p_I)(1+2\consa p_I)^2\right)}{(2 + p_I+ 2\consa p_I(1+p_I))^2},
\end{equation}
then $\cexaio{I} \geq \cexao{I}$. Otherwise, $f(y_0) < k$ is strictly less than the right side above, and denote by $\check{y}$ the unique $y>y_0$ such that $f(\check{y}) = k$. As \eqref{E:cubic_alt_n} implies
\begin{equation*}
\wh{y}(1+\wh{y}) - \consa p_I(1+p_I) = \frac{(1+p_I)\left((1+\wh{y})^2 + \consa p_I\right)}{1+\wh{y}} > 0,
\end{equation*}
$\dot{f}(\wh{y}) > 0$ and hence $\wh{y} > y_0$.  Therefore, $\cexaio{I} \geq \cexao{I}$ is equivalent to $\wh{y} \geq \check{y}$, which as shown in the proof of Proposition \ref{prop:one_d_n}, is equivalent to $g(\check{y}) \geq 0$ for $g$ defined by the cubic function in \eqref{E:cubic_alt_n}.  However, it can easily be checked by any software tool that either \eqref{E:check_1} holds, or $g(\check{y}) \geq 0$. This gives the result for $\alpha_I = \alpha_U$.

We next fix $\alpha_U$ and consider $\alpha_I \to 0$. This corresponds to both $\consb \to 0$ and $\consa \to 0$. As such, we again will write $\consa = \alpha_I^2 p_N$ and appeal to \eqref{E:cubic_alt_n} when analyzing $\wh{y}$.  For $\cexao{I}$ one can see
\begin{equation}\label{E:111}
    \frac{(1-\consb)^2\consa p_I + (1+p_I)(\consb+\consa p_I)^2}{\consa(1 + \consb p_I + \consa p_I (1+p_I))^2} \to \frac{\alpha_U^2 p_I p_N + 1 + p_I}{\alpha_U^2 p_N}.
\end{equation}
On the other hand, \eqref{E:cubic_alt_n} implies $\wh{y}/\alpha_I \to (1+p_I)/\alpha_U$. Therefore,
\begin{equation*}
    \frac{\consa p_I(1+p_I) + \wh{y}^2}{\consa (1+p_I)(1+2\wh{y})} \to \frac{\alpha_U^2 p_I p_N + 1 + p_I}{\alpha_U^2 p_N}.
\end{equation*}
The above limits give
\[
\lim_{\alpha_I \to 0} \frac{\cexao{I}}{\alpha_I} = \lim_{\alpha_I \to 0} \frac{\cexaio{I}}{\alpha_I} = \frac{1}{\alpha_U}\Pi'\mu_X - \frac{1}{\alpha_U^2}\Pi'P_X^{-1}\Pi + \frac{d}{2}\log\left(1 + \frac{\alpha_U^2 p_I p_N + 1 + p_I}{\alpha_U^2 p_N}\right).
\]
Now, the third order approximation for $\wh{y}$ around $\alpha_I = 0$ is
\begin{equation*}
    \wt{y} \dfn \frac{\alpha_I}{\alpha_U}(1+p_I)\left(1+\alpha_I p_I p_N(\alpha_U-\alpha_I p_I)\right)
\end{equation*}
and calculation shows $\limsup_{\alpha_I\to 0} \alpha_I^{-4} |\wh{y}-\wt{y}| < \infty$. As such, in the limit
\begin{equation*}
    \lim_{\alpha_I\to 0} \frac{1}{\alpha_I^2}\left(\frac{\consa p_I(1+p_I) + \wh{y}^2}{\consa (1+p_I)(1+2\wh{y})}  - \frac{(1-\consb)^2\consa p_I + (1+p_I)(\consb+\consa p_I)^2}{\consa(1 + \consb p_I + \consa p_I (1+p_I))^2}\right),
\end{equation*}
we can substitute $\wt{y}$ in for $\wh{y}$ to obtain $    (1+p_I)^2\left(1+p_I-\alpha_U^2p_I p_N\right)/\alpha_U^4 p_N$. 
Using the above, limit \eqref{E:111}, and the fact that $\lim_{x\to 0} \log[1+C(x)x]/x = C$ if $C(x) \to C$, we get
\begin{equation*}
        \lim_{\alpha_I\to 0} \frac{\cexaio{I}-\cexao{I}}{\alpha_I^3}  = \frac{d}{2}\frac{(1+p_I)^2(1+p_I-\alpha_U^2 p_I p_N)}{\alpha_U^2(\alpha_U^2p_I p_N + 1+p_I)}.
    \end{equation*}
The latter completes the proof of the limiting arguments when $\alpha_I\rightarrow 0$.

\end{proof}


We finish by proving Proposition \ref{prop: nosigint_simple} in the setup of Appendix \ref{AS:equilibrium_general_a}, and recalling $\textrm{CE}^i_{nsn}$ from \eqref{E:new_ce_nsn_gen}.

\begin{proposition}\label{prop: nosigint}
Let Assumption \ref{ass: precision_a} hold. As $p_I \to 0$ we obtain almost surely
\begin{equation*}
\begin{split}
\lim_{p_I\to 0} \ceinto{I}(G,Z_N) &= \textrm{CE}^I_{nsn}   + \frac{\consb^2}{2\alpha_I}Z_N'P_X^{-1}Z_N,\\
\lim_{p_I\to 0}  \ceintio{I}(G,Z_N) &= \textrm{CE}^I_{nsn} +\frac{\consb^2}{2\alpha_I(1-\consb^2)}Z_N'P_X^{-1}Z_N.
\end{split}
\end{equation*}
For the uninformed trader we obtain almost surely
\begin{equation*}
\begin{split}
\lim_{p_I\to 0} \ceinto{U}(H) &= \textrm{CE}^U_{nsn} + \frac{\alpha_U \consb^2}{\alpha_I^2}Z_N'P_X^{-1}Z_N,\\
\lim_{p_I\to 0} \ceintio{U}(H_{\iota}) &=  \textrm{CE}^U_{nsn} + \frac{\alpha_U\consb^2}{2\alpha_I^2(1-\consb)^2(1+\consb)^2} Z_N'P_X^{-1}Z_N.
\end{split}
\end{equation*}

\end{proposition}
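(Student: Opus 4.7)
\textbf{Proof plan for Proposition \ref{prop: nosigint}.}

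The approach is to start from the explicit interim certainty-equivalent formulas already proved: \eqref{E:pt_int_CES_a} in the PT case and \eqref{E:pi_int_CES_a} in the PI case, combined with $\textrm{CE}^{o}_{I,0},\textrm{CE}^{o}_{U},\textrm{CE}^{o}_{U,\iota}$ from \eqref{E:fake_int_CES}. Since $p_I \to 0$, the notation of \eqref{E:new_lambda_alpha_R_def_a} gives $R \to 1$, $\cons = \alpha_I^2 p_I p_N \to 0$, while $\rtoi = \consb$ and $\rtou = 1-\consb$ are fixed. Feeding these into \eqref{E:new_cubic_a} yields the limiting equation $0 = (1+y)^2(1 - (\rtou/\rtoi)y)$, whose unique positive root is $\rtoi/\rtou = \consb/(1-\consb)$; hence $\wh{y} \to \consb/(1-\consb)$, so $1+\wh{y}\to 1/(1-\consb)$ and $1+2\wh{y}\to (1+\consb)/(1-\consb)$.

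For the deterministic ``$\textrm{CE}^{o}$'' pieces, I would argue each converges almost surely to $\textrm{CE}^i_{nsn}$ from \eqref{E:new_ce_nsn_gen}. For $\textrm{CE}^{o}_{I,0}(G)$ the factor $(1-R)/R$ multiplying $G$ vanishes as $p_I\to 0$ and $R\to 1$, so only the ``no-signal'' part $-\tfrac{\alpha_I}{2}(\wh{\Pi}'P_X^{-1}\wh{\Pi} - 2\wh{\Pi}'\mu_X) = \textrm{CE}^I_{nsn}$ survives. For $\textrm{CE}^{o}_{U}(H)$ and $\textrm{CE}^{o}_{U,\iota}(H_\iota)$, the subtle point is that $H,H_\iota$ blow up like $1/p_I$ (since $\Lambda,\Lambda_\iota$ contain $1/(\alpha_I p_I)$), but from \eqref{E:pi_nice_matrices_a} the precisions $p_U,p_{U,\iota}$ vanish like $p_I^2$, so the products $p_U H$ and $p_{U,\iota} H_\iota$ vanish like $p_I$, again leaving only $\textrm{CE}^U_{nsn}$ in the limit.

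The heart of the argument is the quadratic noise terms. In the insider formulas \eqref{E:pt_int_CES_a}, \eqref{E:pi_int_CES_a}, the $G$-dependent coefficient $(1-R)\rtou$ (resp.~$(1-R)/R$) vanishes while the $Z_N$-coefficients $\rtoi+\cons$ (resp.~$\wh{y}$) converge to $\consb$ (resp.~$\consb/(1-\consb)$); the prefactor $\alpha_I R/(\rtoi+\cons+R\rtou)^2 \to \alpha_I$ in the PT case and $\alpha_I R/(1+2\wh{y}) \to \alpha_I(1-\consb)/(1+\consb)$ in the PI case, which after simplification deliver $\consb^2/(2\alpha_I)$ and $\consb^2/[2\alpha_I(1-\consb^2)]$ respectively as the multipliers on $Z_N' P_X^{-1}Z_N$. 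For the uninformed trader, one must expand $|P_X^{1/2}(H-\pnsn)|^2$ using $H-\pnsn = (G-\pnsn) + (1/(\alpha_I p_I))P_X^{-1}Z_N$; since the $G$-component stays bounded in $1/p_I$-order while the $Z_N$-component diverges like $1/p_I$, the dominant contribution to $|P_X^{1/2}(H-\pnsn)|^2$ is $(\alpha_I p_I)^{-2} Z_N' P_X^{-1}Z_N$. The prefactor $(1-R)^2 \sim p_I^2$ exactly cancels this divergence, yielding the stated $\alpha_U\consb^2/(2\alpha_I^2)$ coefficient in the PT case; the analogous computation in the PI case, using $|P_X^{1/2}(H_\iota-\pnsn)|^2 \sim ((1+\wh{y})/(\alpha_I p_I))^2 Z_N' P_X^{-1}Z_N$ with the coefficient from \eqref{E:pi_int_CES_a}, yields $\alpha_U\consb^2/[2\alpha_I^2(1-\consb)^2(1+\consb)^2]$.

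The main obstacle is purely bookkeeping: the two limits $H\to \infty$ and coefficient $\to 0$ must be matched to leading order, and one has to confirm the $G$-contribution (which scales only like $1/\sqrt{p_I}$ rather than $1/p_I$) indeed becomes negligible in the quadratic form. Once the leading orders of $(1-R),\cons,\wh{y}$ are substituted and the cancellations carried out, the identities in the proposition follow directly, and all convergences are almost sure (in fact pointwise in $G,Z_N$) since the arguments depend only on deterministic limits of scalar coefficients.
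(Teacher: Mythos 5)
Your proposal is correct and follows essentially the same route as the paper: start from the interim certainty-equivalent formulas of Propositions \ref{P:PI_CEs_a} and \ref{P:PT_CEs_a}, send $R\to 1$, $\cons\to 0$, $\wh{y}\to\consb/(1-\consb)$, and exploit the cancellation between the $(1-R)^2\sim p_I^2$ prefactor and the $1/p_I^2$ growth of the public signals (the paper packages this by writing $(1-R)H_0$, which has a finite almost-sure limit, rather than expanding the quadratic form, but the computation is identical). Note that your PT coefficient $\alpha_U\consb^2/(2\alpha_I^2)$ agrees with Proposition \ref{prop: nosigint_simple} and with the underlying formulas; the factor $2$ appears to have been dropped by typo in the displayed statement of Proposition \ref{prop: nosigint}.
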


\begin{proof}[Proof of Proposition \ref{prop: nosigint}]
From \eqref{E:normals_a}, \eqref{E:fake_int_CES}, \eqref{E:pi_int_CES_a}, \eqref{E:pt_int_CES_a} we obtain
\begin{equation*}
    \begin{split}
        \ceinto{I}(G,Z_N) &=  \textrm{CE}^{o}_{I,0}(G)\\
        &\qquad + \frac{\alpha_I R}{2(\rto_I + \cons + R\rto_U)^2}\abs{\rto_U (1-R)P_X^{1/2}(G-\pnsn)- (\rto_I+\cons)P_X^{-1/2}\frac{Z_N}{\alpha_I}}^2,\\
        \ceintio{I}(G,Z_N) &=  \textrm{CE}^{o}_{I,0}(G) + \frac{\alpha_I R}{2(1+2\wh{y})}\abs{\frac{1-R}{R}P_X^{1/2}(G-\pnsn) - \wh{y}P_X^{-1/2}\frac{Z_N}{\alpha_I}}^2,\\
    \end{split}
\end{equation*}
where
\begin{equation*}
    \begin{split}
        \textrm{CE}^{o}_{I,0}(G) &= -\frac{\alpha_I R}{2}\left(\whpi'P_X^{-1}\whpi -2\whpi'\left(\mu_X + \frac{1-R}{R} G\right)\right),\\
        (1-R) G &= (1-R)X_0 + \sqrt{R(1-R)}P_X^{-1/2}\E_I.
    \end{split}
\end{equation*}
From \eqref{E:cubic_alt_n} and \eqref{E:new_lambda_alpha_R_def_a} we see that $p_I\to 0$ implies $\wh{y} \to \rto_I/\rto_U$, $\cons\to 0$ and $R \to 1$. The results for $\ceinto{I}(G_,Z_N)$ and $\ceintio{I}(G,Z_N)$ follow from \eqref{E:new_ce_nsn_gen} as $\rto = \rto_I$ and $1 = \rto_I + \rto_U$.  For the uninformed trader we similarly obtain (also using \eqref{E:normals_h0_a}, \eqref{E:pi_normals_h0_a})
\begin{equation*}
    \begin{split}
        \ceinto{U}(H_{\iota}) &= \textrm{CE}^{o}_{U,0}(H_{\iota})  + \frac{\alpha_U\rto_I^2R(1+\cons)}{2(\rto_I+\cons + R\rto_U)^2(R+\cons)}\times \abs{(1-R)P^{1/2}_X(H-\pnsn)}^2,\\
        \ceintio{U}(H_{\iota}) &= \textrm{CE}^{o}_{U,0}(H_{\iota})\\
        &\qquad + \frac{\alpha_U\rto_I^2(\cons+(1+\wh{y})^2)}{2\rto_U^2 R(\cons+R(1+\wh{y})^2)(1+2\wh{y})^2}\times \abs{(1-R)P^{1/2}_X(H_{\iota}-\pnsn)}^2,
    \end{split}
\end{equation*}
where
\begin{equation*}
    \begin{split}
        (1-R)H_0 &= (1-R)X_0 + \sqrt{R(1-R)}P_X^{-1/2}\E_I + \frac{R}{\alpha_I}P_X^{-1}Z_N,\\
        (1-R)H_{0,\iota} &= (1-R)X_0 + \sqrt{R(1-R)}P_X^{-1/2}\E_I + \frac{(1+\wh{y})R}{\alpha_I}P_X^{-1}Z_N.
    \end{split}
\end{equation*}
The results for $U$ readily follow.

\end{proof}

\end{document}